\newcommand{\reduline}{\bgroup\markoverwith
	{\textcolor{midred}{\rule[0.4ex]{6pt}{1.2pt}}}\ULon}
\begin{document}

\title[SlicStan]{Probabilistic Programming with Densities in SlicStan: Efficient, Flexible and Deterministic}

\author{Maria I. Gorinova}
\affiliation{
	\institution{University of Edinburgh}}
\email{m.gorinova@ed.ac.uk}
\author{Andrew D. Gordon}
\affiliation{%
	\institution{Microsoft Research Cambridge}}
\affiliation{%
	\institution{University of Edinburgh}}
\email{adg@microsoft.com}
\author{Charles Sutton}
\affiliation{%
	\institution{Google Brain}}
\affiliation{%
	\institution{University of Edinburgh}}
\email{cas@inf.ed.ac.uk}

\begin{abstract}
	Stan is a probabilistic programming language that has been increasingly used for real-world scalable projects. However, to make practical inference possible, the language sacrifices some of its usability by adopting a block syntax, which lacks compositionality and flexible user-defined functions. Moreover, the semantics of the language has been mainly given in terms of intuition about implementation, and has not been formalised.
	
	This paper provides a formal treatment of the Stan language, and introduces the probabilistic programming language SlicStan --- a compositional, self-optimising version of Stan. Our main contributions are 
	(1) the formalisation of a core subset of Stan through an operational density-based semantics; 
	(2) the design and semantics of the Stan-like language SlicStan, which facilities better code reuse and abstraction through its compositional syntax, more flexible functions, and information-flow type system; and 
	(3) a formal, semantic-preserving procedure for translating SlicStan to Stan. 

\end{abstract}


\maketitle

\section{Introduction}

\subsection{Background: Probabilistic Programming Languages and Stan}
Probabilistic programming languages \cite{GordonPP} are a concise notation for specifying probabilistic models, while abstracting the underlying inference algorithm.
There are many such languages, including BUGS \cite{BUGS}, JAGS \cite{JAGS}, Anglican \cite{Anglican}, Church \cite{Church}, Infer.NET \cite{InferNET}, Venture \cite{Venture}, Edward \cite{Edward} and many others.

Stan \cite{StanJSS}, with nearly 300,000 downloads of its R interface \cite{RStan}, is perhaps the most widely used probabilistic programming language.
Stan's syntax is designed to enable automatic compilation to an efficient \emph{Hamiltonian Monte Carlo} (HMC) inference algorithm \cite{HMCfirst}, which allows programs to scale to real-word projects in statistics and data science.
(For example, the forecasting tool Prophet \cite{Prophet} uses Stan.)
This efficiency comes at a price: Stan's syntax lacks the compositionality of other similar languages and systems, such as Edward \cite{Edward} and PyMC3 \cite{PyMC3}.
The design of Stan assumes that the programmer needs to organise their model into separate blocks, which correspond to different stages of the inference algorithm (preprocessing, sampling, postprocessing). 
This compartmentalised syntax affects the usability of Stan: related statements may be separated in the source code,
and functions are restricted to only acting within a single compartment.
It is difficult to write complex Stan programs and encapsulate distributions and sub-model structures into re-usable libraries and routines.

\subsection{Goals and Key Insight}
Our goals are (1) to examine the principles and assumptions behind the probabilistic programming language Stan that help it bridge the gap between probabilistic modelling and black-box inference; and (2) to design a suitable abstraction that captures the statistical meaning of Stan's compartments, but allows for compositional and more flexible probabilistic programming language syntax. 

Our key insight is that \emph{the essence of a probabilistic program in Stan is in fact a deterministic imperative program}, that can be automatically sliced into the different compartments used in the current syntax for Stan.
It may come as a surprise that a probabilistic program is deterministic, but when performing Bayesian inference by sampling parameters, the probabilistic program serves to compute a deterministic score at a specific point of the parameter space.
An implication of this insight is that standard forms of procedural abstraction are easily adapted to Stan.

\subsection{The Insight by Example}

As demonstration, and as a candidate for a future re-design of Stan, we present SlicStan\footnotemark --- a compositional, Stan-like language, which supports first-order functions. 
\footnotetext{SlicStan (pronounced \emph{slick-Stan}) stands for ``Slightly Less Intensely Constrained Stan''.}

Below, we show an example of a Stan program (right), and the same program written in SlicStan (left). In both cases, the goal is to obtain samples from the joint distribution of the variables $y \sim \normal(0, 3)$ and $x \sim \normal(0, \exp(y/2))$, using auxiliary standard normal variables for performance. Working with such auxiliary variables, instead of defining the model in terms of $x$ and $y$ directly, can facilitate inference and is a standard technique. We give more details in \autoref{ssec:encaps} and Appendix~\ref{ap:ncp}.

\vspace{-8pt}
\begin{multicols}{2}
	\centering
	\textbf{SlicStan}
	\begin{lstlisting}
	real my_normal(real m, real s) {
		real std ~ normal(0, 1); 
		return s * std + m;
	}
	real y = my_normal(0, 3); 
	real x = my_normal(0, exp(y/2));
	\end{lstlisting} 
	\vspace{2.5cm}
	\textcolor{white}{.}\\
	\textbf{Stan}
	\begin{lstlisting}
	parameters {
		real y_std;
		real x_std;
	}
	transformed parameters {
		real y = 3 * y_std;
		real x = exp(y/2) * x_std;
	}
	model {
		y_std ~ normal(0, 1); 
		x_std ~ normal(0, 1); 
	}
	\end{lstlisting}
\end{multicols}
\vspace{-15pt}

In both programs, the aim is to obtain samples for the random variables $x$ and $y$, which are defined by scaling and shifting standard normal variables. 

In SlicStan we do so by calling the function \lstinline|my_normal| twice, which defines a local parameter \lstinline|std| and encapsulates the transformation of each variable.
Stan, on the other hand, does not support functions that declare new parameters, because all parameters must be declared inside the \lstinline|parameters| block. We need to write out each transformation explicitly, also explicitly declaring each auxiliary parameter (\lstinline|x_std| and \lstinline|y_std|).

The SlicStan code is a conventional deterministic imperative program, where
the model statement \lstinline|std ~ normal(0,1)| is a derived form of an assignment to a reserved variable that holds the score at a particular point of the parameter space.
Due to the absence of blocks, SlicStan's syntax is compositional and more compact. Statements that would belong to different blocks of a Stan program can be interleaved, and no understanding about the performance implications of different compartments is required.
Via an information-flow analysis, we automatically translate the program on the left to the one on the right. 

\subsection{Core Contributions and Outline}
In short, this paper makes the following contributions:
\begin{itemize}
	\item We formalise the syntax and semantics of a core subset of Stan (\autoref{sec:stan}). To the best of our knowledge, this is the first formal treatment of Stan, despite the popularity of the language.
	
	\item We design SlicStan --- a compositional Stan-like language with first-order functions. We formalise an information-flow type system that captures the essence of Stan's compartmentalised syntax, give the formal semantics of SlicStan, and prove standard results for the calculus, including noninterference and type-preservation properties (\autoref{sec:slicstan}).
	
	\item We give a formal procedure for translating SlicStan to Stan, and prove that it is semantics preserving (\autoref{sec:translate-slicstan-to-stan}).
	
	\item We examine the usability of SlicStan compared to that of Stan, using examples (\autoref{sec:demo}).
\end{itemize}

\iftoggle{LONG}{
This paper also includes an appendix, which provides additional details, discussion and examples.
} { }

\iftoggle{SHORT}{
The extended version of this paper \cite{SlicStanArxiv} includes an appendix (referred to as Appendix here) with additional details, discussion and examples.
} { }

\section{Core Stan}\label{sec:stan}
Stan \cite{StanJSS} is a probabilistic programming language, whose syntax is similar to that of BUGS \cite{BUGS,BUGSBook}, and aims to be close to the model specification conventions used in the statistics community. This section gives the syntax (\autoref{ssec:stanstatements} and \autoref{ssec:stansyntax}) and semantics (\autoref{ssec:stanop} and \autoref{ssec:stansem}) of Core Stan, a core subset of the Stan language. The subset omits, for example, constraint data types, \lstinline|while| loops, random number generators, recursive user defined functions, and local variables. 

To the best of our knowledge, our work is the first to give a formal semantics to the core of Stan.
A full descriptive language specification can be found in the official reference manual \cite{StanManual}. 

\subsection{Syntax of Core Stan Expressions and Statements} \label{ssec:stanstatements}
The building blocks of a Stan statement are expressions.
In Core Stan, expressions cover most of what the Stan manual specifies, including variables, constants, arrays and array elements, and function calls (of builtin functions).
We let $x$ range over variables. 
L-values are expressions limited to array elements $x[E_1]\dots[E_n]$, where the case $n=0$ corresponds to a variable $x$. 
Statements cover the core functionality of Stan, with the exception of \lstinline|while| statements, which we omit to to make \emph{shredding} of SlicStan possible (see \autoref{ssec:slicsyntax} and \autoref{ssec:shred}). \vspace{-8pt}
\begin{multicols}{2}
\begin{display}[.35]{Core Stan Syntax of Expressions:}
	\Category{E}{expression}\\
	\entry{x}{variable}\\
	\entry{c}{constant}\\
	\entry{[E_1,...,E_n]}{array}\\
	\entry{E_1[E_2]}{array element}\\ 
	\entry{f(E_1,\dots,E_n)}{function call\footnotemark}\\
	\clause{L ::= x[E_1]\dots[E_n] \quad n \geq 0}{L-value}
\end{display}
\begin{display}[.35]{Core Stan Syntax of Statements:}
	\Category{S}{statement}\\
	\entry{L = E}{assignment}\\
	\entry{S_1; S_2}{sequence}\\
	\entry{\kw{for}(x\;\kw{in}\;E_1:E_2)\;S}{for loop} \\
	\entry{\kw{if}(E)\;S_1\,\kw{else}\;S_2}{if statement} \\
	\entry{\kw{skip}}{skip}\\
\end{display}
\end{multicols} \vspace{-8pt}
\footnotetext{If $f$ is a binary operator, e.g.\,``$+$'', we write it in infix.} 

We assume a set of builtin functions, ranged over by $f$.
We also assume a set of standard builtin continuous or discrete distributions, ranged over by \lstinline|d|.
Each continuous distribution \lstinline|d| has a corresponding builtin function \lstinline|d_lpdf|, which defines its log probability density function. In this paper, we omit discrete random variables for simplicity.

Defined like this, the syntax of Stan statements is one of a standard imperative language. What makes the language \emph{probabilistic} is the reserved variable \lstinline|target|, which holds the logarithm\footnotemark of the probability density function defined by the program (up to an additive constant), 
evaluated at the point specified by the current values of the program variables.

For example, to define a Stan model with random variables, $\mu$ and $x$, where we assume the variables are normally distributed and $\mu \sim \normal(0, 1)$ and $x \sim \normal(\mu, 1)$, we write:
\footnotetext{Stan evaluates the unnormalised density in the $\log$ domain to ensure numerical stability and to simplify internal computations. We follow this style throughout the paper, and define the semantics in terms of $\log p^*$, instead of $p^*$.}
\vspace{-2pt}\begin{lstlisting}
target = normal_lpdf(mu, 0, 1) + normal_lpdf(x, mu, 1);$\footnotemark$
\end{lstlisting}\vspace{-4pt}
\footnotetext{We treat \lstinline|target| as a mutable program variable for simplicity. This slightly differs from the actual implementation of Stan, where \lstinline|target| does not allow for general lookup and update, but it is a special bit of state that can only be incremented.}
Here, \lstinline|normal_lpdf| is the log density of the normal distribution: $\log \normal(x|\,\mu, \sigma) = -\frac{(x-\mu)^2}{2\sigma^2} -\frac{1}{2}\log 2\pi\sigma^2$.
The value of \kw{target} is equal to the logarithm of the joint density over $\mu$ and $x$, $\log p(\mu, x)$, evaluated at the current values of the program variables \kw{mu} and \kw{x}. 
Suppose $x$ is some known \textit{data}, and $\mu$ is an unknown \textit{parameter} of the model. We are interested in computing the \textit{posterior distribution} of $\mu$ given $x$, $p(\mu \mid x) \propto p(\mu, x) = \normal(x \mid \mu, 1)\normal(\mu \mid 0, 1)$. Stan directly encodes a function that calculates the value of the log posterior density (up to an additive constant), and stores it in \kw{target}. 
Thus, in addition to Stan's core statement syntax, we have a derived form for modelling statements:

\begin{display}[.58]{Derived Form for Model Statements:}
	\clause{E \sim  \kw{d}(E_1, \dots E_n) \deq \kw{target} = \kw{target} + \kw{d_lpdf}(E, E_1, \dots E_n)}{model statement} 	
\end{display}

In Stan, ``$\sim$'' is \emph{not} considered to mean ``draw a sample from'', but rather ``modify the joint distribution over parameters and data.''
This is also reflected by the semantics given in \autoref{ssec:stansem}.

\subsection{Operational Semantics of Stan Statements} \label{ssec:stanop}
Next, we define a standard big-step operational semantics for Stan expressions and statements:
\begin{display}[.2]{Big-step Relation}
	\clause{ (s, E) \Downarrow V }{expression evaluation} \\
	\clause{ (s, S) \Downarrow s'}{statement evaluation} 
\end{display}

Here, $s$ and $s'$ are states, and values $V$ are the expressions conforming to the following grammar:
\begin{display}[0.4]{Values and States:}
	\Category{V}{value}\\
	\entry{c}{constant}\\
	\entry{[V_1,\dots,V_n]}{array}\\
	\clause{s ::= x_1 \mapsto V_1, \dots, x_n \mapsto V_n \quad x_i\textrm{ distinct}}{state (finite map from variables to values)}
\end{display}

The relation $\Downarrow$ is deterministic but partial, as we do not explicitly handle error states.
The purpose of the operational semantics is to define a density function in \autoref{ssec:stansem}, and any errors lead to the density being undefined.

In the rest of the paper, we use the notation for states $s = x_1 \mapsto V_1, \dots, x_n \mapsto V_n$:
\begin{itemize}
	\item $s[x \mapsto V]$ is the state $s$, but where the value of $x$ is updated to $V$ if $x \in \dom(s)$, or the element $x \mapsto V$ is added to $s$ if $x \notin \dom(s)$.
	\item $s[-x]$ is the state s, but where $x$ is removed from the domain of $s$ (if it were present).
\end{itemize}

We also define lookup and update operations on values:
\begin{itemize}
	\item If $U$ is an $n$-dimensional array value for $n \geq 0$
	and $c_1$, \dots, $c_n$ are suitable indexes into $U$,
	then the \emph{lookup} $U[c_1]\dots[c_n]$ is the value in $U$ indexed by $c_1$, \dots, $c_n$.
	\item If $U$ is an $n$-dimensional array value for $n \geq 0$
	and $c_1$, \dots, $c_n$ are suitable indexes into $U$,
	then the \emph{update} $U[c_1]\dots[c_n] := V$ is the array that is the same as $U$ except that the
	value indexed by $c_1$, \dots, $c_n$ is $V$.
\end{itemize}

\vspace{3pt}
\begin{display}{Operational Semantics of Expressions:}
	\squad	
	\staterule{Eval Const}
	{ }
	{ (s, c) \Downarrow c }\qquad
	
	\staterule{Eval Var}
	{ V = s(x)  \quad x \in \dom(s)}
	{ (s, x) \Downarrow V  }\qquad
	
	\staterule{Eval Arr}
	{ (s, E_i) \Downarrow V_i \quad \forall i \in 1.. n }
	{ (s, [E_1, \dots, E_n]) \Downarrow [V_1, \dots, V_n]}\qquad	
	
	\\[1.3ex]\squad
	\staterule{Eval ArrEl}
	{ (s, E_1 \Downarrow V) \quad (s, E_2 \Downarrow c)}
	{ (s, E_1[E_2]) \Downarrow V[c]}\qquad
	
	\staterule{Eval PrimCall}
	{ (s, E_i) \Downarrow V_i \quad \forall i \in 1 \dots n \quad V = f(V_1, \dots, V_n)\footnotemark}
	{ (s, f(E_1, \dots, E_n)) \Downarrow V}
\end{display}

\footnotetext{$f(V_1, \dots, V_n)$ means applying the builtin function $f$ on the values $V_1, \dots, V_n$.}
\begin{display}{Operational Semantics of Statements:}
	\staterule[~(where $L=x[E_1{]}\dots[E_n{]}$)]{Eval Assign}
	{ (s,E_i) \Downarrow V_i \quad \forall i \in 1..n \quad (s,E) \Downarrow V \quad
		U = s(x) \quad
		U' = (U[V_1]\dots[V_n] := V) }
	{ (s, L=E) \Downarrow (s[x \mapsto U'])}
	
	\\[1.3ex]
	
	\staterule{Eval Seq}
	{ (s, S_1) \Downarrow s' \quad (s', S_2) \Downarrow s''}
	{ (s, S_1;S_2) \Downarrow s''}\qquad
	
	\staterule{Eval IfTrue}
	{ (s, E) \Downarrow \kw{true} \quad (s, S_1) \Downarrow s'}
	{ (s, \kw{if}(E)\; S_1\; \kw{else}\; S_2) \Downarrow s'}\qquad
	
	\staterule{Eval IfFalse}
	{ (s, E) \Downarrow \kw{false} \quad (s, S_2) \Downarrow s'}
	{ (s, \kw{if}(E)\; S_1\; \kw{else}\; S_2) \Downarrow s'}\qquad
	
	\\[1.3ex]\squad
	\staterule[\footnotemark]{Eval ForTrue}
	{ (s, E_1) \Downarrow c_1 \quad (s, E_2) \Downarrow c_2 \quad c_1 \leq c_2 \quad (s[x \mapsto c_1], S) \Downarrow s' \quad (s'[-x], \kw{for}(x\;\kw{in}\;(c_1+1):c_2)\;S) \Downarrow s''}
	{ (s, \kw{for}(x\;\kw{in}\;E_1:E_2)\;S) \Downarrow s'' } \qquad
	
	\\[1.3ex]\squad
	\staterule{Eval ForFalse}
	{ (s, E_1) \Downarrow c_1 \quad (s, E_2) \Downarrow c_2 \quad c_1 > c_2}
	{ (s, \kw{for}(x\;\kw{in}\;E_1:E_2)\;S) \Downarrow s } \qquad
	
	\staterule{Eval Skip}
	{ }
	{ (s, \kw{skip}) \Downarrow s }\qquad
\end{display}

\footnotetext{To make shredding to Stan possible, Core Stan only supports \lstinline|for|-loops where the loop bounds do not change during execution: $E_2$ does not contain any variables that $S$ writes to. This differs from the more flexible loops implemented in Stan.}

\subsection{Syntax of Stan}  \label{ssec:stansyntax}
A full Stan program consists of six program blocks, 
each of which is optional. Blocks appear in order.
Each block has a different purpose and can reference variables declared in itself or previous blocks. Formally, we define a Stan program as a sequence of six blocks, each containing variable declarations or Stan statements, as shown next. We also present an example Stan program that contains all six blocks in \autoref{ssec:expert}.
\vspace{16pt}
\begin{display}[0.5]{Stan program:}
	\Category{P}{Stan Program}\\
	\begin{lstlisting}
	data { $\decls{d}$ }
	transformed data { $\decls{td}, \cmds{td}$ }
	parameters { $\decls{p}$ }
	transformed parameters { $\decls{tp}, \cmds{tp}$ }
	model { $\cmds{m}$ }
	generated quantities { $\decls{gq}, \cmds{gq}$ }
	\end{lstlisting}
\end{display}
\vspace{-3pt}
Arrays in Stan are sized, but we do not include any static checks on array sizes in this paper. 
\begin{display}[0.5]{Stan Types and Type Environment:}
	\clause{\Gamma ::= x_1:\tau_1, \dots, x_n:\tau_n \quad \forall i \in 1\dots n \hquad x_i\textrm{ distinct}}{declarations} \\	
	\clause{\tau ::= \kw{real} \; | \; \kw{int} \; | \; \kw{bool} \; | \; \tau [n] }{type} \\
	\clause{n}{size}
\end{display}

The size of an array, $n$, can be a number or a variable.
For simplicity, we treat $n$ as decorative and do not include checks on the sizes in the type system of Stan. However, the system can be extended to a lightweight dependent type system, similarly to Tabular as extended by \citet{MarcinDiss}. 

Each program block in Stan has a different purpose as follows:
\begin{itemize}
	\item \lstinline|data|: declarations of the input data. 
	\item \lstinline|transformed data|: definition of known constants and \textit{preprocessing} of the data.
	\item \lstinline|parameters|: declarations of the parameters of the model. 
	\item \lstinline|transformed parameters|: declarations and statements defining transformations of the data and parameters.
	\item \lstinline|model|: statements defining the distributions of random variables in the model.
	\item \lstinline|generated quantities|: declarations and statements that do not affect inference, used for \textit{postprocessing}, or predictions for unseen data.
\end{itemize}

We define a conformance relation on states $s$ and typing environments $\Gamma$. A state $s$ conforms to an environment $\Gamma$, whenever $s$ provides values of the correct types for the variables given in $\Gamma$:
\vspace{-8pt}\begin{multicols}{2}
\begin{display}[-0.03]{Conformance Relation:}
	\\
	\clause{ s \models \Gamma }{state $s$ conforms to environment $\Gamma$}\\[2.5pt]	
\end{display}
\begin{display}[.2]{Rule for the Conformance Relation:}
	\hquad
	\staterule{Stan State}
	{ V_i \models \tau_i \quad \forall i \in I}
	{(x_i \mapsto V_i)^{i \in I} \models (x_i : \tau_i)^{i \in I}}
\end{display}
\end{multicols}  \vspace{-8pt}

Here, $V \models \tau$ denotes that the value $V$ is of type $\tau$, and has the following definition:
\begin{itemize}
	\item $c \models \kw{int}$, if $c \in \mathbb{Z}$, $c \models \kw{real}$, if $c \in \mathbb{R}$, and $c \models \kw{bool}$ if $c \in \{\kw{true}, \kw{false}\}$. 
	\item $[V_1,\dots,V_n] \models \tau[m]$, if $\forall i \in 1\dots n. V_i \models \tau$. 
\end{itemize}

We do not include any checks on array sizes in this paper, thus we do not assume $n$ and $m$ are the same in this definition. 
The evaluation relation is not defined on initial states that lead to array out-of-bounds errors.

\subsection{Density-based Semantics of Stan} \label{ssec:stansem}
Finally, we give the semantics of Stan in terms of the big-step relation from \autoref{ssec:stanop}. As the \citet{StanManual} explain:
\begin{quote}
 A Stan program defines a statistical model through a conditional probability function $p(\theta \mid y,x)$, where $\theta$ is a sequence of modeled unknown values (e.g., model parameters, latent variables, $\dots$), $y$ is a sequence of modeled known values, and $x$ is a sequence of unmodeled predictors and constants (e.g., sizes, hyperparameters). (p.~22)
\end{quote}

More specifically, a Stan program is executed to evaluate a function on the data and parameters $\log p^*(\params \mid \data)$, for some given (and fixed) values of $\data$ and $\params$. This function encodes the log joint density of the data and parameters $\log p(\params, \data)$ up to an additive constant, and also equals
the log density of the posterior $\log p(\params \mid \data)$ up to an additive constant:
$$ \log p(\params \mid \data) =  \log p(\params, \data) - \log p(\data) \propto \log p(\params, \data) \propto \log p^*(\params \mid \data)$$

The return value of $\log p^*(\params \mid \data)$ is stored in the reserved variable $\kw{target}$. We give the semantics of Core Stan through this \textit{unnormalised log posterior density function}.

Consider a Core Stan program $P$ defined as previously, and the statement $S = S_{td}; S_{tp}; S_{m}; S_{gq}$.
The semantics of $P$ is the unnormalised log posterior density function $\log p^*$ on parameters $\params$ given data $\data$ (where $\params \models \Gamma_p$ and $\data \models \Gamma_d$):
$$\log p^*\left( \params \mid \data \right) \deq s'[\kw{target}] \text{ if there is }s'\text{ such that }((\data, \params, \kw{target} \mapsto 0), S) \Downarrow s'$$

If there is no such $s'$ then the log density is undefined. Observe also that if such an $s'$ exists, it is unique, because the operational semantics is deterministic. 

For example, suppose that $P$ specifies a simple model for the data array \lstinline|y|:
\begin{lstlisting}
	data { int N; real[N] y; }
	parameters { real mu; real sigma; }
	model {
		mu ~ normal(0, 1); 
		sigma ~ normal(0, 1);
		for(i in 1:N){ y[i] ~ normal(mu, sigma); }
	}
\end{lstlisting}

Suppose also that $\params = (\kw{mu} \mapsto \mu, \kw{sigma} \mapsto \sigma)$ and $\data = (\kw{N} \mapsto n, \kw{y} \mapsto \mathbf{y})$, for some $\mu$, $\sigma$, $n$, and a vector $\mathbf{y}$ of length $n$. The statement $S = S_{td}; S_{tp}; S_{m}; S_{gq}$ is then the body of the \lstinline|model| block as specified above.
Then $((\data, \params, \kw{target} \mapsto 0), S) \Downarrow s'$, with
$s'[\kw{target}] = \log\normal(\mu \mid 0, 1) + \log\normal(\sigma \mid 0, 1) + \sum_{i=1}^n\log\normal(y_i \mid \mu, \sigma)$.
This is precisely the log joint density on \lstinline|mu|, \lstinline|sigma| and \lstinline|y|, which is proportionate to the posterior of \lstinline|mu| and \lstinline|sigma| given \lstinline|y|.

The function $\log p^*\left( \params \mid \data \right)$ is \emph{not} a (log) density, but rather it encodes the logarithm of the density of the posterior up to an additive constant. Such unnormalised log density uniquely defines the log density $\log p\left( \params \mid \data \right)$:
 $$\log p\left( \params \mid \data \right) = \log p^*\left( \params \mid \data \right) - \log Z(\data)\, \text{ where }\, Z(\data) = \int p^*(\params \mid \data)\,d\params$$

The value $Z(\data)$ is called the \emph{normalising constant} (it is a constant with respect to the variables $\params$ that the density is defined on). Computing $Z(\data)$ is in most cases intractable. Thus, many inference algorithms (including those of Stan) are designed to successfully approximate the posterior, relying only on being able to evaluate a function proportional to it: an \emph{unnormalised density function}, such as $\log p^*\left( \params \mid \data \right)$ above. 

The goal of this paper is to formalise the statistical meaning of a Stan program, as given by the quotation from the reference manual above. This semantics concentrates on defining the unnormalised log posterior of parameters given data, but omits the fact that the values of transformed parameters and generated quantities blocks are also part of the observable state.
Transformed parameters and generated quantities can be seen as variables that are generated using the function $g(\params, \data) \deq s'[\dom(\Gamma_{tp} \cup \Gamma_{gq})]$ for $s'$ defined as previously.
Appendix~\ref{ap:gq_semantics} discusses generated quantities in more detail, and we leave their full treatment for future work. 
Moreover, Appendix~\ref{ap:sampling_semantics} discusses how this density-based semantics relates to other imperative probabilistic languages semantics, such as the sampling-based semantics of \textsc{Prob} \cite{HurNRS15}.

\subsection{Inference}
Executing a Stan program consists of generating samples from the  \textit{posterior distribution} $p(\params \mid \data)$, as a way of performing \textit{Bayesian inference}.
The primary algorithm that Stan uses is the asymptotically exact algorithm  Hamiltonian Monte Carlo (HMC) \cite{HMCfirst}, and more specifically, an enhanced version of the No-U-Turn Sampler (NUTS) \cite{NUTS, HMCConceptual}, which is an adaptive path lengths extension to HMC.                    

HMC is a Markov Chain Monte Carlo (MCMC) method (see \citet{MCMCIain} for a review on MCMC). Similarly to other MCMC methods, it obtains samples $\{\params_i\}_{i=1}^{\infty}$ from the target distribution, by using the latest sample $\params_n$ and a carefully designed transition function $\delta$ to generate a new sample $\params_{n+1} = \delta(\params_n)$. 
When sampling from the posterior $p(\params \mid \data)$, HMC evaluates the unnormalised density $p^*(\params \mid \data)$ at several points in the parameter space at each step $n$.
To improve performance, HMC also uses the gradient of $\log p^*(\params \mid \data)$ with respect to $\boldsymbol{\theta}$.

\section{SlicStan} \label{sec:slicstan}
This section outlines the second key contribution of this work --- the design and semantics of SlicStan.
SlicStan is a probabilistic programming language, which aims to provide a more compositional alternative to Stan, while retaining Stan's efficiency and statement syntax natural to the statistics community. Thus, we design the language so that:
\begin{enumerate}
	\item SlicStan statements are a superset of the Core Stan statements given in \autoref{sec:stan}, 
	\item SlicStan programs contain no program blocks, and allow the interleaving of statements that would belong to different program blocks if the program was written in Stan, and 
	\item SlicStan supports first-order non-recursive functions.
\end{enumerate}

This results in a flexible syntax, that allows for better encapsulation and code reuse, similarly to Edward \cite{Edward} and PyMC3 \cite{PyMC3}.

The key idea behind SlicStan is to use \textit{information flow analysis} to optimise and transform the program to Stan code. 
Secure information flow analysis has a long history, summarised by \citet{InfoFlowSurvey}, and \citet{InfoFlowSmithPrinciples}. It concerns systems where variables have one of several \textit{security levels}, and the aim is to disallow the flow of high-level information to a low-level variable, but allow other flows of information. For example, consider two security levels, \lev{public} and \lev{secret}. We want to forbid \lev{public} information to depend on \lev{secret} information.  
Formally, the levels form a \textit{lattice} $\left(\{\lev{public}, \lev{secret}\}, <\right)$ with $\lev{public} < \lev{secret}$.
Secure information flow analysis is used to ensure that information flows only upwards in the lattice. This is formalized as the \textit{noninterference property} \cite{goguen1982security}: confidential data may not interfere with public data. 

Looking back to the description of Stan's program blocks in \autoref{ssec:stansyntax}, as well as the Stan Manual, we identify three information levels in Stan: \lev{data}, \lev{model}, and \lev{genquant}. We assign one of these levels to each program block, as summarised by \autoref{tab:blocks}. `Chain', `sample' and `leapfrog' refer to stages of the Hamiltonian Monte Carlo sampling algorithm. Usually, Stan runs several chains to perform inference, where there are many samples per chain, and many leapfrogs per sample.

\begin{table}[h]
	\centering
	\begin{tabular}{@{}lll@{}}\toprule[1.2pt]
		Block & Execution & Level \\\midrule
		\lstinline|data| & --- & \lev{data} \\
		\lstinline|transformed data| & per chain & \lev{data}\\
		\lstinline|parameters| & --- & \lev{model} \\
		\lstinline|transformed parameters| & per leapfrog & \lev{model} \\
		\lstinline|model| & per leapfrog & \lev{model} \\
		\lstinline|generated quantities| & per sample & \lev{genquant}\\		 
		\bottomrule[1.2pt]
	\end{tabular}
	\caption{Program blocks in Stan. Adapted from \citet{StanTutorial}.}
	\label{tab:blocks}
	\vspace{-21pt}
\end{table}

Even though our insight about the three information levels comes from Stan, they are not tied to Stan's peculiarities. Variables at level \lev{data} are the known quantities in the statistical inference problem, that is, the data. Computations at this level can be seen as a form of \textit{preprocessing}. 
Variables at level \lev{model} are unknown --- they are the quantities we wish to infer. Changing the \lev{model} variables or the dependencies between them changes the statistical model we are working with, which can have a huge effect on the quality of inference.   
Finally, generated quantities are variables that \lev{data} and \lev{model} variables do not depend on, and computing them can be seen as a form of \textit{postprocessing}. All three are fundamental concepts of statistical inference and are not specific to Stan.

The rest of this section defines the SlicStan language. 
The syntax of SlicStan statements (\autoref{ssec:slicsyntax}) extends that of the Core Stan statements from \autoref{sec:stan}, and its type system (\autoref{ssec:slictyping}) assumes level types \lev{data}, \lev{model} and \lev{genquant} on variables. The typing rules are then implemented so that in well-typed SlicStan programs, information flows from level \lev{data}, through \lev{model} to \lev{genquant}. Every Core Stan program can be turned into an equivalent SlicStan program by concatenating the statements and declarations in its compartments.

Next, we give the semantics of a SlicStan program, much as we did for Core Stan, as an unnormalised log density function on parameters and data (\autoref{ssec:slicsem}), and show some examples (\autoref{ssec:slicexamples}).
To do so, we \textit{elaborate} SlicStan's statements to Core Stan statements by statically unrolling user-defined function calls and bringing all variable declarations to the top level (\autoref{ssec:slicelab}).
The main purpose of elaboration is to identify all parameters statically so as to give the semantics as a function on the parameters.
Elaboration also serves as a first step in translating SlicStan to Stan (\autoref{sec:translate-slicstan-to-stan}).

\subsection{Syntax} \label{ssec:slicsyntax}

A SlicStan program is a sequence of function definitions $F_i$, followed by top-level statement $S$. 
\vspace{-2pt} 
\begin{display}{Syntax of a SlicStan Program}
	\clause{F_1, \dots, F_n, S \quad n \geq 0}{SlicStan program}
\end{display}

SlicStan's user-defined functions are not recursive (a call to $F_i$ can only occur in the body of $F_j$ if $i < j$).
Functions are specified by a return type $T$, arguments with their types $a_i:T_i$, and a body $S$.
There is a reserved variable \kw{ret_g} associated with each function, to hold the return value.
\vspace{-2pt}  
\begin{display}{Syntax of Function Definitions}
	\clause{F ::= T\;g(T_1\,a_1, \dots, T_n\,a_n)\;S}{function definition (signature $g: T_1,\dots,T_n \to T$)}
\end{display}

SlicStan's expressions and statements extend those of Stan, by user-defined function calls $g(E_1,\dots,E_n)$ and variable declarations $T\,x;\;S $ (in italic). 

In both declarations $T\,x;\;S$ and loops $\kw{for}(x\;\kw{in}\;E_1:E_2)\;S$, the variable $x$ is locally bound with scope $S$. We identify statements up to consistent renaming of bound variables. Note that occurrences of variables in L-values are free.  

\begin{multicols}{2}
\begin{display}[.2]{SlicStan Syntax of Expressions:}
	\Category{E}{expression}\\
	\entry{x}{variable}\\
	\entry{c}{constant}\\
	\entry{[E_1,...,E_n]}{array}\\
	\entry{E_1[E_2]}{array element}\\ 
	\entry{f(E_1,\dots,E_n)}{builtin function call}\\
	\entry{\mathit{g(E_1,\dots,E_n)}}{\emph{user-defined fun. call}}\\
	\Category{L}{L-value}\\
	\entry{x}{variable}\\
	\entry{x[E_1]\dots[E_n]}{array element} 
\end{display}
\begin{display}[.25]{SlicStan Syntax of Statements:}
	\\
	\Category{S}{statement}\\
	\entry{L = E}{assignment}\\
	\entry{S_1; S_2}{sequence}\\
	\entry{\kw{for}(x\;\kw{in}\;E_1:E_2)\;S}{for loop} \\
	\entry{\kw{if}(E)\;S_1\,\kw{else}\;S_2}{if statement} \\
	\entry{\kw{skip}}{skip} \\
	\entry{\mathit{T\,x;\;S}}{\emph{declaration}}\\
	\\
\end{display}
\end{multicols} \vspace{-12pt} 

We constrain the language to only support \lstinline|for| loops, disallowing the value of the loop guard to depend on the body of the loop. As described in later subsections, in order to give the semantics of a SlicStan program, as well as to translate it to Stan, we need to \textit{elaborate} the statements to Core Stan statements (\autoref{ssec:slicelab}), statically unrolling user-defined functions and extracting variable declarations to the top-level. Extending the language to support \lstinline|while| loops (or recursive functions) means risk of non-terminating elaboration step, and a potentially inefficient resulting Stan program.
This design choice is a small restriction on the usability and range of expressible models compared to Stan: models in Stan can only have a fixed number of parameters. As a result, an overwhelming number of examples in the Stan official repository use \lstinline|for| loops only. 

We define derived forms for data declarations, modelling statements, and return statements. 
Any user-defined function \lstinline|D_lpdf| can be used as the log density function of a user-defined distribution \lstinline|D| on the first argument of \lstinline|D_lpdf|.
For the sake of simplicity, we assume the body of a user-defined function \lstinline|g| contains \textit{at most one} return statement, at the end,
and we treat it as an assignment to the return variable \lstinline|ret_g|. 

\vspace{-1pt} 
\begin{display}[.52]{Derived Forms}
	\clause{\kw{data }\textrm{ }\tau\,x; S \deq (\tau, \lev{data})\,x; S}{data declaration} \\	
	\clause{E \sim \kw{d}(E_1, \dots E_n) \deq \kw{target} = \kw{target} + \kw{d_lpdf}(E, E_1, \dots E_n)}{model, builtin distribution} \\
	\clause{E \sim \kw{D}(E_1, \dots E_n) \deq \kw{target} = \kw{target} + \kw{D_lpdf}(E, E_1, \dots E_n)}{model, user-defined distribution} \\	
	\clause{\kw{return}\; E  \deq \kw{ret_g} = E}{return}
\end{display}
\vspace{-10pt} 
\subsection{Typing of SlicStan} \label{ssec:slictyping} 

Next, we present SlicStan's type system. 
We define a lattice $\left(\{\lev{data}, \lev{model}, \lev{genquant}\}, <\right)$ of level types, where $\lev{data} < \lev{model} < \lev{genquant}$. 
Types $T$ in SlicStan range over pairs $(\tau, \ell)$ of a base type $\tau$, and a level type $\ell$ --- one of \lev{data}, \lev{model}, or \lev{genquant}.
Arrays are sized, with $n \geq 0$.
Each builtin function $f$ has a family of signatures $f:(\tau_1,\ell),\dots,(\tau_n,\ell) \to (\tau,\ell)$, one for each level $\ell$.

\vspace{-1pt} 
\begin{display}{Types, and Typing Environment:}
	\Category{\ell}{level type}\\
	\entry{\lev{data}}{data, transformed data}\\
	\entry{\lev{model}}{parameters, transformed parameters}\\
	\entry{\lev{genquant}}{generated quantities}\\
	\clause{n}{size} \\
	\clause{\tau ::= \kw{real} \; | \; \kw{int} \; | \; \kw{bool} \; | \; \tau [n] }{base type} \\
	\clause{T ::= (\tau,\ell)}{type: base type and level} \\
	\clause{\Gamma ::= x_1:T_1, \dots, x_n:T_n \quad x_i\textrm{ distinct}}{typing environment} 
\end{display}
(While builtin functions of our formal system are level polymorphic, user-defined functions are monomorphic.
This design choice was made to keep the system simple, and we see no challenges to polymorphism that are unique to SlicStan.)
 
We assume the type of the reserved \kw{target} variable to be $(\kw{real}, \lev{model})$: this variable can only be accessed within the \lstinline|model| block in Stan, thus its level is \lev{model}.
Each function $g$ is associated with a single return variable \lstinline|ret_g| matching the return type of the function.

\begin{display}[.2]{Reserved variables}
	\clause{\kw{target} : (\kw{real}, \lev{model})}{log joint probability density function}\\
	\clause{\kw{ret_g} : T}{return value of a function $T\;g(T_1\,a_1, \dots, T_n\,a_n)\;S$}
\end{display}
 
We present the full set of declarative typing rules, inspired by those of the secure information flow calculus defined by \citet{InfoFlowVolpano}, and more precisely, its summary by \citet{InfoFlowAbadi}. The information flow constraints are enforced by the subsumption rules \ref{ESub} and \ref{SSub}, which together allow information to only flow upwards the $\lev{data} < \lev{model} < \lev{genquant}$ lattice. 

Intuitively, we need to associate each expression $E$ with a level type to prevent lower-level variables to \textit{directly} depend on higher-level variables, such as in the case \lstinline|d = m + 1|, for \lstinline|d| of level \lev{data} and \lstinline|m| of level \lev{model}. We also need to associate each statement $S$ with a level type to prevent lower-level variables to \textit{indirectly} depend on higher-level variables, such as in the case \lstinline|if(m > 0)$\,$d = 2|.

\vspace{3pt} 
\begin{display}[.2]{Judgments of the Type System:}
	\clause{ \Gamma \vdash E : (\tau, \ell) }{expression $E$ has type $(\tau, \ell)$ and reads only level $\ell$ and below}\\
	\clause{ \Gamma \vdash S : \ell }{statement $S$ assigns only to level $\ell$ and above}\\
    \clause{ \vdash F }{function definition $F$ is well-typed}
\end{display}

\vspace{3pt} 
The function \lstinline|ty(c)| maps constants to their types (for example \lstinline|ty(5.5) = real|). 

\vspace{3pt} 
\begin{display}{Typing Rules for Expressions:}
	\squad
	\staterule{ESub}
	{ \Gamma \vdash E : (\tau,\ell) \quad \ell \leq \ell'}
	{ \Gamma \vdash E : (\tau,\ell') }
	\quad\,
	\staterule{Var}
	{ }
	{ \Gamma, x:T \vdash x:T}  \quad\,
	\staterule{Const}
	{ \kw{ty}(c) = \tau }
	{ \Gamma \vdash c : (\tau,\lev{data}) }\quad\,
	\staterule{Arr}
	{\Gamma \vdash E_i : (\tau,\ell) \quad \forall i \in 1..n}
	{\Gamma \vdash [E_1,...,E_n] : (\tau [n],\ell)}
	
	\\[\GAP]\squad
	\staterule{ArrEl}
	{\Gamma \vdash E_1 : (\tau[n], \ell) \quad \Gamma \vdash E_2 : (\kw{int}, \ell)}
	{\Gamma \vdash E_1[E_2] : (\tau,\ell)}\quad

	\staterule[($f: T_1,\dots,T_n \to T$)]
	{PrimCall}
	{ \Gamma \vdash E_i : T_i \quad \forall i \in 1..n}
	{ \Gamma \vdash f(E_1,\dots,E_n) : T } \squad
	
	\staterule[($g: T_1,\dots,T_n \to T$)]
	{FCall}
	{ \Gamma \vdash E_i : T_i \quad \forall i \in 1..n}
	{ \Gamma \vdash g(E_1,\dots,E_n) : T }
\end{display}

\vspace{3pt} 
Here and throughout, we make use of several functions on the language building blocks:
\begin{itemize}
	\item $\assset(S)$ (Definition~\ref{def:assset}) is the set of variables that are assigned to in $S$: $\assset(x=2*y) = \{x\}$. 
	\item $\readset(S)$ (Definition~\ref{def:readset}) is the set of variables read by $S$: $\readset(x=2*y) = \{y\}$.
	\item $\Gamma(L)$ (Definition~\ref{def:lvaluetypes}) is the type of the L-value $L$ in the context $\Gamma$: \\ $\Gamma(x[0]) = (\kw{real}, \lev{data})$ for $x:(\kw{real}[], \lev{data}) \in \Gamma$.
\end{itemize}

The rule \ref{Decl} for a variable declaration $(\tau, \ell)\,x; S$ has a side-condition ($x \notin \dom(\Gamma)$), where $\Gamma$ is the local typing environment,
that enforces that the variable $x$ is globally unique, that is, there is no other declaration of $x$ in the program.
The condition $x \notin \assset(S)$ in \ref{For} enforces that the loop index $x$ is immutable inside the body of the loop. In \ref{Seq}, we make sure that the sequence $S_1; S_2$ is \textit{shreddable}, through the predicate $\shreddable(S_1, S_2)$ (Definition~\ref{def:shreddable}). This imposes a restriction on the range of well-typed programs, which is needed both to allow translation to Stan (see \autoref{ssec:shred}), and to allow interpreting of the program in terms of preprocessing, inference and postprocessing. 

Using the rules for expressions and statements, we can also obtain rules for the derived statements.

\vspace{1pt} 
\begin{display}{Typing Rules for Statements:}
	\squad
	\staterule{SSub}
	{ \Gamma \vdash S : \ell' \quad \ell \leq \ell'}
	{ \Gamma \vdash S : \ell }\quad
	\staterule{Assign}
	{ \Gamma(L) = (\tau, \ell) \quad \Gamma \vdash E : (\tau,\ell)}
	{ \Gamma \vdash (L = E) : \ell }\qquad 

	\staterule{Decl}
	{\Gamma, x:(\tau, \ell) \vdash S : \ell' \quad x \notin \dom(\Gamma)
    }
	{\Gamma \vdash (\tau, \ell)\,x; S : \ell'}\qquad

	\\[\GAP]\squad	
	\staterule{If}
	{ \Gamma \vdash E : (\kw{bool},\ell) \quad \Gamma \vdash S_1 : \ell \quad \Gamma \vdash S_2 : \ell}
	{ \Gamma \vdash \kw{if}(E)\;S_1 \;\kw{else}\; S_2: \ell }\qquad
	
	\staterule{Seq}
	{ \Gamma \vdash S_1 : \ell \quad \Gamma \vdash S_2 : \ell \quad \shreddable(S_1, S_2)}
	{ \Gamma \vdash (S_1; S_2) : \ell }\qquad	
	
	\staterule{Skip}
	{ }
	{ \Gamma \vdash \kw{skip} : \ell } \qquad
	
	\\[\GAP]\squad
	\staterule{For}
	{ \Gamma \vdash E_1 : (\kw{int},\ell) \quad \Gamma \vdash E_2 : (\kw{int},\ell) \quad \Gamma, x:(\kw{int}, \ell) \vdash S : \ell \quad x \notin \dom(\Gamma) \quad x \notin \assset(S)}
	{ \Gamma \vdash \kw{for}(x\;\kw{in}\;E_1:E_2)\;S : \ell } \qquad
\end{display}

\vspace{1pt} 
\begin{display}{Derived Typing Rules}
	\squad
	\staterule{DataDecl}
	{ \Gamma, x: (\tau, \lev{data}) \vdash S:\ell \quad x \notin \dom(\Gamma)}
	{\Gamma \vdash \kw{data }\textrm{ }\tau\,x; S : \ell}\qquad

	\staterule[($\kw{d_lpdf}: T, T_1,\dots,T_n \to (\kw{real}, \lev{model})$)]{PrimModel}
	{ \Gamma \vdash E : T \quad \Gamma \vdash E_i : T_i \quad \forall i \in 1..n} 
	{ \Gamma \vdash E \sim \kw{d}(E_1, \dots E_n) : \lev{model} }\qquad

	\\[\GAP]\squad		
	
	\staterule{Return}
	{ \Gamma \vdash \kw{ret_g}:(\tau,\ell) \quad \Gamma \vdash E : (\tau, \ell) }
	{ \Gamma \vdash \kw{return}\;E : \ell }\qquad

	\staterule[($\kw{D}: T, T_1,\dots,T_n \to (\kw{real}, \lev{model})$)]{FModel}
	{ \Gamma \vdash E : T \quad \Gamma \vdash E_i : T_i \quad \forall i \in 1..n}
	{ \Gamma \vdash E \sim \kw{D_dist}(E_1, \dots E_n) : \lev{model} }\qquad	
\end{display}

Finally, we complete the three judgments of the type system with the rule \ref{FDef} for checking the well-formedness of a function definition.
The condition $\ell_i \leq \ell$ ensures that the level of the result of a function call is no smaller than the level of its arguments.

\vspace{1pt} 
\begin{display}{Typing Rule for Function Definitions:}
	\squad
	\staterulealt{FDef}
	{ a_1:T_1,...,a_n:T_n, \kw{ret_g}:(\tau,\ell) \vdash S : \ell  \quad T_i = (\tau_i,\ell_i) \quad \ell_i \leq \ell \quad \forall i \in 1..n}
	{ \vdash (\tau,\ell)\;g(T_1\,a_1, \dots, T_n\,a_n)\;S }  
\end{display}

In our formal development, we implicitly assume a fixed program with well-typed functions $\vdash F_1$, \dots, $\vdash F_n$.
More precisely, we assume a given well-formed program defined as follows.

\vspace{1pt} 
\begin{display}{Well-Formed SlicStan Program:}
\squad
A program $F_1,\dots F_N, S$ is \emph{well-formed} iff $\vdash F_1$, \dots, $\vdash F_n$, and $\emptyset \vdash S : \lev{data}$.
\end{display}

SlicStan statements are, by design, a superset of Core Stan statements. Thus, we can treat any Core Stan statement as a SlicStan statement with big-step operational semantics defined as in \autoref{sec:stan}. By extending the conformance relation $s \models \Gamma$ to correspond to a SlicStan typing environment, we can prove type preservation of the operational semantics, with respect to SlicStan's type system. 

\begin{display}[.2]{Rule of the Conformance Relation:}
	\squad
	\staterulealt{State}
	{V_i \models \tau_i \quad \forall i \in I}
	{(x_i \mapsto V_i)^{i \in I} \models (x_i : (\tau_i, \ell_i))^{i \in I}}
\end{display}

\begin{theorem}[Type Preservation for $\Downarrow$]\label{th:eval}~
	
	For a Core Stan statement $S$ and a Core Stan expression $E$:
	\begin{enumerate}
		\item If $s \models \Gamma$ and $\Gamma \vdash E : (\tau, \ell)$ and $(s,E) \Downarrow V$ then $ V \models \tau$.
		\item If $s \models \Gamma$ and $\Gamma \vdash S : \ell$ and $(s,S) \Downarrow s'$ then $s' \models \Gamma$.
	\end{enumerate}
\end{theorem}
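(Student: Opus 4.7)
The plan is to prove the two parts simultaneously by mutual induction on the derivations of $(s,E) \Downarrow V$ and $(s,S) \Downarrow s'$. Because the two subsumption rules \ref{ESub} and \ref{SSub} change only the level $\ell$ but leave the base type $\tau$ untouched, and because the conformance relation $V \models \tau$ does not mention $\ell$, I would begin by observing that every typing derivation can be analysed up to stripping trailing subsumptions, so that the typing can be taken to end with the syntax-directed rule matching the outermost form of $E$ or $S$. This lets me line up each evaluation rule with a unique typing rule.

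For part (1), the interesting cases are \ref{Eval Var}, where $V = s(x)$ and the conclusion is immediate from $s \models \Gamma$ applied at $x$; \ref{Eval Arr}, where I combine the inductive hypothesis on each component with the clause $[V_1,\dots,V_n] \models \tau[n]$ of the conformance definition; \ref{Eval ArrEl}, which requires the auxiliary fact that looking up into a value of type $\tau[n]$ yields a value of type $\tau$ (a straightforward induction on $n$); and \ref{Eval PrimCall}, which I discharge by assuming that builtin functions respect their signatures. For part (2), \ref{Eval Seq}, \ref{Eval IfTrue}/\ref{Eval IfFalse}, and \ref{Eval Skip} reduce immediately to the inductive hypotheses, since the typing environment $\Gamma$ does not change.

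The two nontrivial cases of part (2) are assignment and the for loop. For \ref{Eval Assign} with $L = x[E_1]\dots[E_n]$, I would prove a small update lemma: if $U \models \tau'$, $V \models \tau$, and $\Gamma(L) = (\tau,\ell)$ for some $x : (\tau',\ell) \in \Gamma$ whose array structure is consistent with $n$ suitable indices, then the updated array $(U[V_1]\dots[V_n] := V) \models \tau'$. Combined with part (1) applied to $E$ and each $E_i$, this gives $s[x \mapsto U'] \models \Gamma$ at the single entry that changed, with all other entries untouched. For \ref{Eval ForTrue}, I use part (1) on $E_1$ and $E_2$ to deduce $c_1, c_2 \models \kw{int}$; then $(s[x \mapsto c_1]) \models \Gamma, x:(\kw{int},\ell)$, so the inductive hypothesis applied to the body $S$ gives $s' \models \Gamma, x:(\kw{int},\ell)$. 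Because the \ref{For} rule requires $x \notin \dom(\Gamma)$, removing $x$ yields $s'[-x] \models \Gamma$, and the recursive for-loop on the shrunken range still types at $\ell$ under $\Gamma$ (the new bounds $c_1+1$ and $c_2$ type at $(\kw{int},\lev{data})$ and are promoted by \ref{ESub}), so a final use of the inductive hypothesis closes the case.

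The main obstacle I anticipate is the for-loop step, precisely because it is the only rule where the conformance environment is temporarily extended and then shrunk, and where a recursive evaluation forces me to reconstruct a typing derivation for a syntactically different (but structurally smaller) loop. Carefully justifying that the side-condition $x \notin \dom(\Gamma)$ in \ref{For} — which makes $s[x \mapsto c_1]$ a genuine extension rather than an update, and makes $s'[-x]$ a clean restriction — is what lets the induction go through. Assignment to array elements is the next most delicate point, but it becomes routine once the update lemma above is stated and proved by induction on the index depth.
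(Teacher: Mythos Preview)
Your proposal is correct and takes essentially the same approach as the paper, which simply states that the result follows ``by inductions on the size of the derivations of the judgments $(s,E) \Downarrow V$ and $(s,S) \Downarrow s'$.'' Your expanded case analysis---particularly the treatment of \ref{Eval Assign} via an array-update lemma and the careful handling of the environment extension/restriction in \ref{Eval ForTrue}---faithfully fleshes out the argument the paper leaves implicit; the one minor remark is that since Core Stan expressions never evaluate statements, part~(1) can in fact be established first and then used in part~(2), so the induction need not be genuinely mutual.
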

\begin{proof}
	By inductions on the size of the derivations of the judgments $(s,E) \Downarrow V$ and $(s,S) \Downarrow s'$.	
\end{proof}

Finally, we state a termination-insensitive noninterference result.
Intuitively, the result means that (observed) data cannot depend on the model parameters, and that generated quantities do not affect the log density distribution defined by the model.
\begin{definition}[$\ell$-equal states] Given a typing environment $\Gamma$, states $s_1 \models \Gamma$ and $s_2 \models \Gamma$ are $\ell$-equal for some level $\ell$ (written $s_1 \approx_{\ell} s_2$), if they differ only for variables of a level strictly higher than $\ell$:
	$$s_1 \approx_{\ell} s_2 \deq \forall x:(\tau, \ell') \in \Gamma. \left( \ell' \leq \ell \implies s_1(x) = s_2(x) \right)$$
\end{definition}

\begin{theorem}[Noninterference] \label{th:noninterf} Suppose $s_1 \models \Gamma$, $s_2 \models \Gamma$, and $s_1 \approx_{\ell} s_2$ for some $\ell$. Then for Core Stan statements $S$ and Core Stan expressions $E$:
	\begin{enumerate}
		\item If $~\Gamma \vdash E:(\tau,\ell)$ and $(s_1, E) \Downarrow V_1$ and $(s_2, E) \Downarrow V_2$ then $V_1 = V_2$. 
		\item If $~\Gamma \vdash S:\ell$ and $(s_1, S) \Downarrow s_1'$ and $(s_2, S) \Downarrow s_2'$ then $s_1' \approx_{\ell} s_2'$.
	\end{enumerate}
\end{theorem}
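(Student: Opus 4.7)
The plan is to prove both parts by mutual induction, with Part 1 (expressions) being invoked inside Part 2 (statements) whenever we need to argue about the right-hand side of an assignment, the guard of an \kw{if}, or the bounds of a \kw{for}. Before the main induction, I would establish an auxiliary \emph{Confinement Lemma}: if $\Gamma \vdash S : \ell$ and $(s, S) \Downarrow s'$, then $s$ and $s'$ agree on every variable whose level is strictly below $\ell$. Intuitively, a statement well-typed at $\ell$ can only write locations at level $\ell$ or above; this follows by induction on the typing derivation, using the \textsc{Assign} rule (where the L-value's level is exactly the statement's level before any \textsc{SSub} relaxation) and propagating through sequencing, conditionals, and loops.

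Part 1 proceeds by induction on the derivation of $\Gamma \vdash E : (\tau, \ell)$. The \textsc{Var} case uses $s_1 \approx_\ell s_2$ directly, since a variable declared at level $\ell$ has matching values in both states. The \textsc{ESub} case reduces to the subderivation at some $\ell' \leq \ell$, using the monotonicity observation that $s_1 \approx_\ell s_2$ together with $\ell' \leq \ell$ implies $s_1 \approx_{\ell'} s_2$. The cases \textsc{Const}, \textsc{Arr}, \textsc{ArrEl}, and \textsc{PrimCall} follow from the determinism of the builtin operations applied to pointwise-equal sub-values obtained from the IH.

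Part 2 proceeds by induction on the derivation of $\Gamma \vdash S : \ell$. For \textsc{Assign} $L = E$ with $\Gamma(L) = (\tau, \ell)$, Part 1 applied to $E : (\tau, \ell)$ yields equal right-hand-side values, and the single write lands at level exactly $\ell$, so the updated states still agree on levels $\leq \ell$. For \textsc{If}, the guard has level $\ell$, so Part 1 forces the same branch to be taken in both runs, and IH on the chosen branch closes the case. For \textsc{For}, a nested induction on the number of iterations (equivalently, on the evaluation derivation) handles the loop, using Part 1 on the bounds and on each body execution, together with the fact that the fresh loop counter is assigned the same value in both runs before the body is evaluated. \textsc{Seq} and \textsc{Skip} are immediate.

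The \textsc{SSub} case is the main obstacle. We have $\Gamma \vdash S : \ell$ derived from $\Gamma \vdash S : \ell'$ with $\ell \leq \ell'$, but the hypothesis $s_1 \approx_\ell s_2$ does \emph{not} supply the stronger agreement $s_1 \approx_{\ell'} s_2$ needed to apply the IH at $\ell'$ (the direction of monotonicity goes the wrong way). This is exactly where the Confinement Lemma is invoked: under $\Gamma \vdash S : \ell'$, each evaluation $(s_i, S) \Downarrow s_i'$ preserves every variable at level $< \ell'$, hence in particular at every level $\leq \ell$ when $\ell < \ell'$; so $s_i$ and $s_i'$ coincide on levels $\leq \ell$, and $s_1' \approx_\ell s_2'$ follows immediately from the hypothesis $s_1 \approx_\ell s_2$. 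The remaining boundary case $\ell = \ell'$ is just the IH. Once the Confinement Lemma is in hand, the rest is routine case analysis, and the type-preservation guarantees from \autoref{th:eval} ensure that $s_1', s_2' \models \Gamma$ throughout.
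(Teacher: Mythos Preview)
Your proposal is correct and follows the same overall strategy as the paper: rule induction on the typing derivations, with Part~2 invoking Part~1 for expressions. The paper's proof sketch is terse and does not spell out the \textsc{SSub} case; your explicit Confinement Lemma is exactly the standard auxiliary (in the Volpano--Smith tradition the paper cites) that fills the gap the paper leaves implicit, so you are not taking a different route but rather supplying the detail the paper omits.
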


\begin{proof} (1)~follows by rule induction on the derivation $\Gamma \vdash E:(\tau, \ell)$, and using that if $\Gamma \vdash E:(\tau, \ell)$, $x \in \readset(E)$ and $\Gamma(x) = (\tau', \ell')$, then $\ell' \leq \ell$. (2)~follows by rule induction on the derivation $\Gamma \vdash S:\ell$ and using (1).
\end{proof}

\subsection{Elaboration of SlicStan} \label{ssec:slicelab}
Similarly to Stan, a SlicStan program defines a probabilistic model, through an unnormalised log density function on the model parameters and data. That is, the semantics of SlicStan is in terms of a fixed (or data-dependent) number of variables. Therefore, in order to be able to formally give the semantics, we need to statically unroll calls to user-defined functions, and pull all variable declarations to the top level.
(We discuss the difficulties of directly specifying the semantics of SlicStan without elaboration in \autoref{ssec:difficulty}.)

We call this static unrolling step \textit{elaboration}, and we formalise it through the elaboration relation $\elab$. Intuitively, to elaborate a program $F_1,\dots F_N, S$, we elaborate its main body $S$ by unrolling any calls to $F_1, \dots, F_N$ (as specified by \ref{Elab FCall}), and move all variable declarations to the top level (as specified by \ref{Elab Decl}). The result is an elaborated SlicStan statement $S'$ and a list of variable declarations $\Gamma$.
As mentioned previously, to avoid notational clutter, we assume a top-level SlicStan program $F_1,\dots, F_N, S$.
Since the syntax of a SlicStan statement differs from that of a Core Stan statement only by the presence of user-defined function calls and variable declarations, an \textit{elaborated SlicStan} statement is also a well-formed \textit{Core Stan} statement.

\vspace{-8pt}
\begin{multicols}{2} 
\begin{display}[.20]{Elaboration Relation}
	\clause{P \elab[\emptyset] \elpair{\ctx}{S'}}{program elaboration}\\	
	\clause{S \elab \elpair{\ctx}{S'}}{statement elaboration} \\
	\clause{E \elab \elpair{\ctx}{S.E'}}{expression elaboration} \\
	\clause{F \elab \elpair{r\!:\!T,A,\ctx}{S}}{fun. definition elaboration}
\end{display}
\begin{display}{Elaboration Rule for a SlicStan Program}
	\quad
	\staterule{Elab SlicStan}
	{S \elab[\emptyset] \elpair{\ctx}{S'}}
	{F_1,\dots F_N, S \elab[\emptyset] \elpair{\ctx}{S'}}  \\[-2.7pt]
\end{display}
\end{multicols}\vspace{-8pt}

The \textit{unrolling} rule \ref{Elab FCall} assumes a call to a user-defined function $g$ with definition $F=T\;g(T_1\,a_1, \dots, T_n\,a_n)\;S$, which elaborates as described by \ref{Elab FDef}. 

\begin{display}{Elaboration Rules for Expressions:}
	\staterule{Elab Var}
	{ }
	{ x \elab  \elpair{\emptyset}{\kw{skip}.x}}\qquad
	
	\staterule{Elab Const}
	{ }
	{ c \elab  \elpair{\emptyset}{\kw{skip}.c}}\qquad
	
	\staterule{Elab ArrEl}
	{ E_1 \elab \elpair{\ctx_1}{S_1.E_1'} \quad 
		E_2 \elab \elpair{\ctx_2}{S_2.E_2'} \quad 
		\ctx_1 \cap \ctx_2 = \emptyset}
	{ E_1[E_2] \elab \elpair{\ctx_1 \cup \ctx_2}{S_1; S_2.(E_1'[E_2'])}}\qquad
	
	\\[\GAP]
	\staterule{Elab Arr}
	{ E_i \elab \elpair{\ctx_i}{S_i.E_i'} \quad \forall i \in 1..n \quad \bigcap_{i=1}^n \ctx_i = \emptyset}
	{[E_1, ..., E_n] \elab \elpair{\bigcup_{i=1}^n \ctx_i}{S_1;...;S_n.([E_1', ..., E_n'])}}\quad
	
	\staterule{Elab PrimCall}
	{ E_i \elab \elpair{\ctx_i}{S_i.E_i'} \quad \forall i \in 1..n \quad \bigcap_{i=1}^n \ctx_i = \emptyset}
	{ f(E_1, ..., E_n) \elab \elpair{\bigcup_{i=1}^n \ctx_i}{S_1;...;S_n. f(E_1',...,E_n')}}

	\\[\GAP]
	\staterule[~(where $F$ is the definition for function $g$)]{Elab FCall}
	{ E_i \elab \elpair{\ctx_{E_i}}{S_{E_i}.E_i'} \quad \forall i \in 1..n \quad
		F \elab \elpair{r_F:T_F, A_F,\ctx_F}{S_F} \\
		A_F = \set{a_i:T_i \given i \in 1..n} \quad
		\{r_F:T_F\}\cap A_F\cap(\bigcap_{i=1}^n \ctx_i)\cap\ctx_F = \emptyset }
	{ g(E_1, ..., E_n) \elab \elpair{\{r_F:T_F\} \cup A_F\cup(\bigcup_{i=1}^n \ctx_i)\cup\ctx_F}{(S_{E_1}; a_1=E_1';...;S_{E_n};a_n=E_n'; S_F.r_F)}} 
\end{display}

\begin{display}{Elaboration Rule for Function Definitions:}
	\squad
	\staterule{Elab FDef}
	{ S \elab[\set{r:T} \cup \Gamma_A \cup \Gamma] \elpair{\ctx'}{S'} \quad 
		\Gamma_A = \set{a_1:T_1,...,a_n:T_n} \quad
		\set{r} \cap \dom(\Gamma_A) \cap \dom(\ctx') = \emptyset }
	{ T\;g(T_1\,a_1,...,T_n\,a_n)\;S \elab \elpair{r:T,\Gamma_A,\ctx' }{S'}}  
\end{display}

As we identify statements up to $\alpha$-conversion, $T\,x;\;x = 1$ elaborates to $\elpair{\{x_1:T\}}{x_1=1}$, but also to $\elpair{\{x_2:T\}}{x_2=1}$, and so on. The \ref{Elab Decl} rule simply extracts a variable declaration to the top level. Other than recursively applying $\elab$ to sub-parts of the statement, the \ref{Elab If} and \ref{Elab For} rules transform the guards of the respective compound statement to be the fresh variables $g$ or $g_1$, $g_2$ respectively (as opposed to unrestricted expressions). This is a necessary preparation step needed for the program to be correctly translated to Stan later (see \autoref{ssec:shred} and Appendix~\ref{ap:shred}). 

\begin{display}{Elaboration Rules for Statements:}		
	\staterule{Elab Decl}
	{ S \elab[\{x:T\} \cup \Gamma] \elpair{\ctx'}{S'} \quad x \notin \dom(\ctx')}
	{ T\,x; S \elab \elpair{\set{x:T}\cup\ctx'}{S'}} \qquad

	\staterule{Elab Skip}
	{}
	{\kw{skip} \elab \elpair{\emptyset}{\kw{skip}}}\qquad
	
	\\[\GAP]\squad
	\staterule{Elab Assign}
	{ L \elab \elpair{\ctx_L}{S_L.L'} \quad E \elab \elpair{\ctx_E}{S_E.E'} \quad \ctx_L \cap \ctx_E = \emptyset}
	{ L = E \elab \elpair{\ctx_L \cup \ctx_E}{S_L; S_E; L' = E' }}\quad	
	
	\staterule{Elab Seq}
	{ S_1 \elab \elpair{\ctx_1}{S_1'} \quad 
		S_2 \elab \elpair{\ctx_2}{S_2'} \quad 
		\ctx_1 \cap \ctx_2 = \emptyset}
	{ S_1; S_2 \elab \elpair{\ctx_1 \cup \ctx_2}{S_1';S_2'}  } \qquad		

	\\[\GAP]\squad
	\staterule[~(where $\Gamma \vdash E : T$)]{Elab If}
	{   E \elab \elpair{\ctx_E}{S_E.E'} \quad
		S_1 \elab \elpair{\ctx_1}{S_1'} \quad 
		S_2 \elab \elpair{\ctx_2}{S_2'} \quad 
		\{g:T\} \cap \ctx_E \cap \ctx_{1} \cap \ctx_{2} = \emptyset }
	{ \kw{if}(E)\; S_1\; \kw{else}\; S_2 \elab \elpair{\{g:T\} \cup \ctx_E \cup \ctx_{1} \cup \ctx_{2}}{(S_E; g=E'; \kw{if}(g)\; S_1'\; \kw{else}\; S_2')}} \qquad	
	
	\\[\GAP]\squad
	\staterule[~(where $\Gamma \vdash E_1 : T_1$ and $\Gamma \vdash E_2 : T_2$)]{Elab For}
	{   E_1 \elab \elpair{\ctx_1}{S_1.E_1'} \quad 
		E_2 \elab \elpair{\ctx_2}{S_2.E_2'} \quad 
		S \elab[\ctx \cup \{x:(\kw{int}, \lev{data})\}] \elpair{\ctx_S}{S'} \\ 
		\ctx_V = v_{\Gamma}(\ctx_S, n) \quad 
		\{g_1:T_1, g_2:T_2, n:(\kw{int}, \lev{data})\} \cap \ctx_1 \cap \ctx_2 \cap \ctx_V = \emptyset}
	{ \kw{for}(x\;\kw{in}\;E_1:E_2)\;S \elab \elpair{\{g_1:T_1, g_2:T_2, n:(\kw{int}, \lev{data})\} \cup \ctx_1 \cup \ctx_2 \cup \ctx_V}{\\S_1; S_2; g_1=E_1'; g_2=E_2'; n = g_2 - g_1 + 1; \kw{for}(x\;\kw{in}\;g_1:g_2)\;v_S(x,\ctx_V,S')} } \qquad
\end{display}

In some cases when elaborating a \lstinline|for| loop, $\Gamma_S$ will not be empty (in other words, the body of the loop will declare new variables). Thus, as \ref{Elab For} shows, variables in $\Gamma_S$ are upgraded to an array, and then accessed by the index of the loop. We use the function $v_S$ (Definition~\ref{def:vector}) which takes a variable $x$, a typing environment $\Gamma$, and a statement $S$, and returns a statement $S'$, where any mention of a variable $x' \in \dom(\Gamma)$ is substituted with $x'[x]$. For example, consider the statement
\lstinline|for(i in 1:N){real +++model +++ x ~ normal(0,1); y[i] ~ normal(x,1);}| and an environment $\Gamma$, such that $\Gamma \vdash N:(\kw{int}, \lev{data})$. The body of the loop declares a new variable $x$, thus it elaborates to $\elpair{\Gamma_S}{S'}$, where
$\Gamma_S = \{x:(\kw{real}, \lev{model})\}$, and \lstinline|$S' =$ {x ~ normal(0,1); y[i] ~ normal(x,1);}|.

By \ref{Elab For}, 
$S \elab \langle \{\kw{g1}:(\kw{int}, \lev{data}), \kw{g2}:(\kw{int}, \lev{data})\}\cup\Gamma_V,~$\lstinline|for(i in g1:g2){$S''$}|$\rangle$
where:
\begin{align*}
\Gamma_V &= v_{\Gamma}(\Gamma_S, N) =  \{x:(\kw{real[N]}, \lev{model})\} \\
S'' &= v_S(i, \Gamma_V, S') = \kw{x[i] ~ normal(0,1); y[i] ~ normal(x[i],1);}
\end{align*}\vspace{-10pt}

Next, we state and prove type preservation of the elaboration relation. 
\begin{theorem}[Type preservation of $\elab$] \label{th:elab} For SlicStan statements $S$, SlicStan expressions $E$, and SlicStan function definitions $F$:
	~
	\begin{enumerate}
		\item If $ \Gamma \vdash E:(\tau,\ell)$ and $E \elab \elpair{\Gamma'}{S'.E'}$ then $\Gamma, \Gamma' \vdash S':\lev{data}$ and $\Gamma, \Gamma' \vdash E':(\tau, \ell)$.
		\item If $\Gamma \vdash S:\ell$ and $S \elab \elpair{\Gamma'}{S'} $ then $ \Gamma, \Gamma' \vdash S':\ell $
		\item If $F \elab \elpair{\Gamma'}{S'.\kw{ret}} $ then $ \Gamma, \Gamma' \vdash S':\lev{data}$
	\end{enumerate}
\end{theorem}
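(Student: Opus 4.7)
I proceed by simultaneous rule induction on the three elaboration derivations $E \elab \elpair{\Gamma'}{S'.E'}$, $S \elab \elpair{\Gamma'}{S'}$, and $F \elab \elpair{\Gamma'}{S'.\kw{ret}}$, taking one case per elaboration rule. Throughout, a standard weakening lemma (any well-typed judgment remains well-typed under a disjoint extension of the context) is used implicitly to lift the sub-derivations supplied by the IH to the combined context $\Gamma, \bigcup_i \Gamma_i$; the freshness side-conditions $\Gamma_i \cap \Gamma_j = \emptyset$ that appear in every congruence rule are exactly what ensures this lifting is sound.

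The base rules \ref{Elab Var}, \ref{Elab Const}, and \ref{Elab Skip} yield an empty $\Gamma'$ together with a trivial $\kw{skip}$ or an unchanged expression, and the required typing is immediate (using \ref{Skip} and \ref{SSub} to type $\kw{skip}$ at $\lev{data}$). The congruence rules \ref{Elab Arr}, \ref{Elab ArrEl}, \ref{Elab PrimCall}, \ref{Elab Assign}, and \ref{Elab Seq} follow by the IH on the sub-derivations and reassembly under the matching typing rule. Rule \ref{Elab Decl} is a direct repackaging, since the premise of \ref{Decl} already types the body in the extended context supplied to the IH. The rules \ref{Elab If} and \ref{Elab For} introduce fresh guard and size variables of the same type as the guards, so each newly-inserted assignment $g = E'$ types by \ref{Assign} from the IH on the guard expression, and the matching \ref{If} or \ref{For} typing rule then applies with the guards appearing as variables. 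For \ref{Elab FCall}, the IH on the function definition provides a typing of $S_F$ in the context extended by $r_F:T_F$ and the argument bindings $\Gamma_A$; the constraint $\ell_i \leq \ell$ from \ref{FDef}, combined with \ref{ESub}, supplies the subsumption needed to chain the argument evaluations, the function body, and the final read of $r_F$ at the ambient level. Case \ref{Elab FDef} applies the IH to the function body and then appeals to \ref{SSub} to lower the level of the result down to $\lev{data}$, as demanded by part (3).

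The main obstacle is the vectorization step inside \ref{Elab For}. The body $S$ may itself introduce local declarations collected in $\Gamma_S$; elaboration lifts these to arrays $\Gamma_V = v_\Gamma(\Gamma_S, n)$ indexed by the loop counter $x$, while the rewriting $v_S(x, \Gamma_V, S')$ replaces every occurrence of $y \in \dom(\Gamma_S)$ by $y[x]$. To complete this case one needs an auxiliary substitution lemma: if $\Gamma, x:(\kw{int}, \lev{data}), \Gamma_S \vdash S' : \ell$ and $v_\Gamma$ preserves level types by turning $y:(\tau,\ell')$ into $y:(\tau[n],\ell')$, then $\Gamma, x:(\kw{int}, \lev{data}), \Gamma_V \vdash v_S(x, \Gamma_V, S'):\ell$. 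The lemma goes through by rule induction on $S'$; the only delicate step is a free occurrence of $y$, which becomes an L-value or expression $y[x]$ typed by \ref{ArrEl} using $y:(\tau[n],\ell')$ and the loop index $x:(\kw{int}, \lev{data})$ promoted to level $\ell'$ via \ref{ESub}. With this lemma in hand, the \ref{For} rule fires on $v_S(x, \Gamma_V, S')$ at the ambient level $\ell$ and the case concludes.
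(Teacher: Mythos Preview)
Your approach is correct and matches the paper's, which simply states ``by inductions on the size of the derivations of the judgments $E \elab \elpair{\Gamma'}{S'.E'}$, $S \elab \elpair{\Gamma'}{S'}$, and $F \elab \elpair{\Gamma'}{S'.\kw{ret}}$'' without any case analysis. Your proposal is a faithful fleshing-out of that one-line sketch; in particular, your identification of the auxiliary type-preservation lemma for the vectorising substitution $v_S$ in the \ref{Elab For} case is the one genuinely nontrivial ingredient, and the paper leaves it entirely implicit.
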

\begin{proof}
	By inductions on the size of the derivations of the judgments $E \elab \elpair{\Gamma'}{S'.E'}$, $S \elab \elpair{\Gamma'}{S'}$, and $F \elab \elpair{\Gamma'}{S'.\kw{ret}}$.	
\end{proof}

\subsection{Semantics of SlicStan} \label{ssec:slicsem}

We now show how SlicStan's type system allows us to specify the semantics of the probabilistic program as an unnormalised posterior density function. This shows how the semantics of SlicStan connects to that of Stan, and demonstrates that explicitly encoding the roles of program variables into the block syntax of the language is not needed.  

We specify the semantics --- the unnormalised density $\log p^*_{F_1, \dots, F_n, S}(\params \mid \data)$ --- in two steps.

\subsubsection{Semantics of (elaborated) SlicStan statements} \label{ssec:sem_statement}
Consider an elaborated SlicStan statement $S$ such that $\Gamma \vdash S : \lev{data}$. The semantics of $S$ is the function $\log p^*_{\Gamma \vdash S}$, such that
for any state $s \models \Gamma$:
$$\log p^*_{\Gamma \vdash S} (s) \deq 
s'[\kw{target}] \text{ if there is } s' \text{ such that } ((s,\kw{target} \mapsto 0), S) \Downarrow s'$$

\subsubsection{Semantics of SlicStan programs} \label{ssec:sem_prog_density}

Consider a well-formed SlicStan program $F_1, \dots, F_n, S$ and suppose that $S \elab[\emptyset] \elpair{\Gamma'}{S'}$.
(Observe that $\Gamma'$ and $S'$ are uniquely determined by $F_1, \dots, F_n, S$.)
Suppose also that:
\begin{itemize}
	\item $\Gamma_{\data}$ corresponds to \emph{data variables}, $\Gamma_{\data} = \{ x:\ell \in \Gamma' \mid \ell=\lev{data} \wedge x \notin \assset(S') \}$, and
	\item $\Gamma_{\theta}$ corresponds to \emph{model parameters}, $\Gamma_{\theta} = \{ x:\ell \in \Gamma' \mid \ell=\lev{model} \wedge x \notin \assset(S')\}$.
\end{itemize}

Similarly to Stan (\autoref{ssec:stansem}),
the semantics of a SlicStan program $S$ is the unnormalised log posterior density function $\log p^*_{F_1, \dots, F_n, S}$ on parameters $\params$ given data $\data$
(with $\params \models \Gamma_{\theta}$ and $\data \models \Gamma_{\data}$):
\begin{equation}\label{eq:slicposterior}
\log p^*_{F_1, \dots, F_n, S}\left( \params \mid \data \right) \deq \log p^*_{\Gamma' \vdash S'}(\params, \data)
\end{equation}

\subsection{Examples} \label{ssec:slicexamples}
Next, we give two examples of SlicStan programs, their elaborated versions, and their semantics in the form of an unnormalised log density function. Here, we specify the levels of variables in SlicStan programs explicitly. In \autoref{sec:demo} we describe how type inference can be implemented to infer optimal levels for program variables, thus making explicit declaration of levels unnecessary. 

\subsubsection{Simple Example} 
Consider a SlicStan program $\emptyset,S$
($\emptyset$ denotes no function definitions), where we simply model the distribution of a data array $\mathbf{y}$:
\begin{lstlisting}
	$S =\;$ real +++model+++ mu ~ normal(0, 1);
			 real +++model+++ sigma ~ normal(0, 1);
			 int +++data+++ N;
			 real +++data+++ y[N];
			 for(i in 1:N){ y[i] ~ normal(mu, sigma); }
\end{lstlisting}
We define the semantics of $S$ in three steps:
\begin{enumerate}
	\item Elaboration: $S \elab[\emptyset] \elpair{\Gamma'}{S'}$, where: \vspace{-20pt}
\begin{multicols}{2}
\begin{align*}
\Gamma' =& \kw{mu}:(\kw{real}, \lev{model}), \kw{sigma}:(\kw{real}, \lev{model}), \\
&\kw{y}:(\kw{real[N]}, \lev{model}), \kw{N}:(\kw{int}, \lev{data})
\end{align*}
\vspace{5pt}
\begin{lstlisting}
	$S' =$ mu ~ normal(0, 1);
	     sigma ~ normal(0, 1);
	     for(i in 1:N){ 
		     y[i] ~ normal(mu, sigma); }
\end{lstlisting}
\end{multicols}\vspace{-14pt}

\item Semantics of $S'$: For any state $s \models \Gamma'$, $\log p^*_{\Gamma' \vdash S'}(s) = s'[\kw{target}]$, where $(s, S') \Downarrow s'$. Thus: \vspace{-1pt}
$$\textstyle \log p^*_{\Gamma' \vdash S'} (s) = \log \mathcal{N}(\mu, 0, 1) + \log \mathcal{N}(\sigma, 0, 1) + \sum_{i=1}^{N}\log \mathcal{N}(y_i, \mu, \sigma)$$

\item Semantics of $S$: We derive $\Gamma_{\data} = \{ x:\ell \in \Gamma' \mid \ell=\lev{data} \wedge x \notin \assset(S') \} = \{N, y\}$, and  $\Gamma_{\theta} = \{ x:\ell \in \Gamma' \mid \ell=\lev{model} \wedge x \notin \assset(S')\} = \{\mu, \sigma\}$.
Therefore, the semantics of $S$ is the unnormalised density on the parameters $\mu$ and $\sigma$, given data $N$ and $y$: \vspace{-1pt}
$$\textstyle \log p_S^*(\mu, \sigma \mid y, N) = \log \mathcal{N}(\mu, 0, 1) + \log \mathcal{N}(\sigma, 0, 1) + \sum_{i=1}^{N}\log \mathcal{N}(y_i, \mu, \sigma)$$

\end{enumerate}

\subsubsection{User-defined Functions Example} \label{sssec:udex}

Next, we look at an example that includes a user-defined function. Here, the function \lstinline|my_normal| is a reparameterising function (\autoref{ssec:encaps}), that defines a Gaussian random variable, by scaling and shifting a standard Gaussian variable:
\begin{lstlisting}
	$S\,$ = real +++model+++ my_normal(real +++model+++ m, real +++model+++ s){
				 real +++model+++ x_std ~ normal(0, 1);
				 return m + x_std * s;
			 }	
			 real +++model+++ mu ~ normal(0, 1);
			 real +++model+++ sigma ~ normal(0, 1);	
			 int +++data+++ N;
			 real +++genquant+++ x[N]; 
			 for(i in 1:N) { x[i] = my_normal(mu, sigma); }
\end{lstlisting}
\vspace{-2pt}
\begin{enumerate}
	\item Elaboration: $S \elab[\emptyset] \elpair{\Gamma'}{S'}$, where: \vspace{-10pt}
\begin{multicols}{2}
\noindent
\begin{align*}
\Gamma' =& \kw{mu}:(\kw{real}, \lev{model}), \kw{sigma}:(\kw{real}, \lev{model}), \\
&\kw{m}:(\kw{real}, \lev{model}), \kw{s}:(\kw{real}, \lev{model}), \\ &\kw{x_std}:(\kw{real[N]}, \lev{model}), \\
&\kw{x}:(\kw{real[N]}, \lev{genquant}), \kw{N}:(\kw{int}, \lev{data})
\end{align*}
\vspace{1.5cm}
\begin{lstlisting}
	$S' =$ mu ~ normal(0, 1);
			sigma ~ normal(0, 1);		
			for(i in 1:N){
				m = mu; s = sigma;
				x_std[i] ~ normal(0, 1);
				x[i] = m + x_std[i] * s; }
\end{lstlisting}
\end{multicols}\vspace{-20pt}
\item Semantics of $S'$: Consider any $s \models \Gamma'$.
Then: \vspace{-1pt} 
$$\textstyle \log p^*_{\Gamma' \vdash S'} (s) = \log \mathcal{N}(\mu, 0, 1) + \log \mathcal{N}(\sigma, 0, 1) + \sum_{i=1}^{N}\log \mathcal{N}(x^{\mathrm{std}}_i, 0, 1)$$
\item  Semantics of $S$: We derive $\Gamma_{\data} = \{N\}$, and $\Gamma_{\theta} =  \{\mu, \sigma, \mathbf{x}^{\mathrm{std}}\}$.
The semantics of the program $S$ is the unnormalised density on the parameters $\mu$, $\sigma$, and $\mathbf{x}^{\mathrm{std}}$, given data $N$: \vspace{-1pt}
$$\textstyle \log p_S^*(\mu, \sigma, \mathbf{x}^{\mathrm{std}} \mid N) = \log \mathcal{N}(\mu, 0, 1) + \log \mathcal{N}(\sigma, 0, 1) + \sum_{i=1}^{N}\log \mathcal{N}(x^{\mathrm{std}}_i, 0, 1)$$

\end{enumerate}

\subsection{Difficulty of Specifying Direct Semantics Without Elaboration} \label{ssec:difficulty}
Specifying the direct semantics $\log p^*_{\emptyset \vdash S}(s)$, without an elaboration step, is not simple.
SlicStan's user-defined functions are flexible enough to allow new model parameters to be declared inside of the body of a function. Having some of the parameters declared this way means that it is not obvious what the complete set of parameters is, unless we elaborate the program. 

Consider the program from \autoref{sssec:udex}. Its semantics is ${\log p^*_{S} (\mu, \sigma, \mathbf{x}^{\mathrm{std}} \mid N)} = \log \mathcal{N}(\mu, 0, 1) + \log \mathcal{N}(\sigma, 0, 1) + \sum_{i=1}^{N}\log \mathcal{N}(x^{\mathrm{std}}_i, 0, 1)$. 
This differs from ${\log p^*_{S} (\mu, \sigma, x^{\mathrm{std}} \mid N))} = \log \mathcal{N}(\mu, 0, 1) + \log \mathcal{N}(\sigma, 0, 1) + N \times \log \mathcal{N}(x^{\mathrm{std}}, 0, 1)$, which would be the accumulated log density in case we do not unroll the \lstinline|my_normal| call, and instead implement direct semantics. In one case, the model has $N+2$ parameters: $\mu, \sigma, x^\mathrm{std}_1, \dots, x^\mathrm{std}_N$. In the other, the model has only 3 parameters: $\mu, \sigma, x^{\mathrm{std}}$. 

\section{Translation of SlicStan to Stan} \label{sec:translate-slicstan-to-stan}
Translating SlicStan to Stan happens in two steps: \emph{shredding} (\autoref{ssec:shred}) and \emph{transformation} (\autoref{ssec:trans}). In this section, we formalise these steps and show that the semantics, seen as an unnormalised log posterior density function on parameters given data, is preserved in the translation.

\subsection{Shredding} \label{ssec:shred}
The first step in translating an elaborated SlicStan program to Stan is the idea of \textit{shredding} (or \textit{slicing}) by level. SlicStan allows statements that assign to variables of different levels to be interleaved. Stan, on the other hand, requires all \lev{data} level statements to come first (in the \lstinline|data| and \lstinline|transformed data| blocks), then all \lev{model} level statements (in the \lstinline|parameters|, \lstinline|transformed parameters| and \lstinline|model| blocks), and finally, the \lev{genquant} level statements (in the \lstinline|generated quantities| block).
 
Therefore, we define the shredding relation $\shred$ on an elaborated SlicStan statement $S$ and triples of \textit{single-level statements} $\shredded$ (Definition~\ref{def:singlelev}). That is, $\shred$ shreds a statement into three elaborated SlicStan statements  $S_D$, $S_M$ and $S_Q$, where $S_D$ only assigns to variables of level $\lev{data}$, $S_M$ only assigns to variables of level $\lev{model}$, and $S_Q$ only assigns to variables of level $\lev{genquant}$. We formally state and prove this result in Lemma~\ref{lem:shredisleveled}.

\begin{display}[.3]{Shredding Relation}
	\clause{S \shred \shredded}{statement shredding} 
\end{display}

Currently, Stan can only assign to \lev{data} variables inside the \lstinline|transformed data| block, to \lev{model} variables inside the \lstinline|transformed parameters| block, and to generated quantities inside the \lstinline|generated quantities| block. Therefore, in Stan it is not possible to write an \lstinline|if| statement or a \lstinline|for| loop which assigns to variables of different levels inside its body. The \ref{Shred If} and \ref{Shred For} rules resolve this by copying the entire body of the \lstinline|if| statement or \lstinline|for| loop on each of the three levels. Notice that we restrict the \lstinline|if| and \lstinline|for| guards to be variables (as opposed to any expression), which we have ensured is the case after the elaboration step (\ref{Elab If} and \ref{Elab For}).

For example, consider the SlicStan program $S$, as defined below. It elaborates to $S'$ and $\Gamma'$, and it is then shredded to the single-level statements $\shredded$:
\vspace{-8pt}\begin{multicols}{3}
\begin{lstlisting}
$S=$ real +++data+++ d;
	  $\hspace{1pt}$real +++model+++ m;
	  $\hspace{1pt}$if(d > 0){
	     d = 1; 
	     m = 2;	 
	  }
\end{lstlisting}
	\vspace{9pt}
\begin{align*}\hspace{-14pt}
\Gamma' =\;\{&\kw{d}:(\kw{real}, \lev{data}),\\
&\kw{m}:(\kw{real}, \lev{model}),\\
&\kw{g}:(\kw{bool}, \lev{data})\}
\end{align*}
\vspace{-18pt}
\begin{lstlisting}
 $S' =$ g = (d > 0);
		  if(g){d=1; m=2;}
\end{lstlisting}
\begin{lstlisting}
 $S_D =$ g = (d > 0);
		  $\hspace{3pt}$if(g){d = 1;}
 $S_M =$ if(g){m = 2;}
 $S_Q =$ skip;
\end{lstlisting}
\end{multicols}
\begin{display}{Shredding Rules for Statements:}
	\squad	
	\staterule{Shred DataAssign}
	{\Gamma(L) = (\_,\lev{data})}
	{ L = E \shred (L = E, \kw{skip}, \kw{skip})}\quad\hquad
	
	\staterule{Shred ModelAssign}
	{ \Gamma(L) = (\_,\lev{model}) }
	{ L = E \shred \kw{skip}, L = E, \kw{skip}}\quad\hquad
	
	\staterule{Shred GenQuantAssign}
	{ \Gamma(L) = (\_,\lev{genquant})}
	{ L = E \shred \kw{skip}, \kw{skip}, L = E}\qquad	
	
	\\[\GAP]\squad
	\staterule{Shred Seq}
	{ S_1 \shred S_{D_1}, S_{M_1}, S_{Q_1} \quad 
		S_2 \shred S_{D_2}, S_{M_2}, S_{Q_2}}
	{ S_1; S_2 \shred (S_{D_1};S_{D_2}), (S_{M_1};S_{M_2}), (S_{Q_1};S_{Q_2})  } \qquad		

	\staterule{Shred Skip}
	{}
	{\kw{skip} \shred (\kw{skip}, \kw{skip}, \kw{skip})}\qquad

	\\[\GAP]\squad	
	\staterule{Shred If}
	{   S_1 \shred \shredded[1] \quad 
		S_2 \shred \shredded[2] \quad}
	{ \kw{if}(g)\; S_1\; \kw{else}\; S_2 \shred  
			(\kw{if}(g)\; S_{D_1}\; \kw{else}\; S_{D_2}),  
			(\kw{if}(g)\; S_{M_1}\; \kw{else}\; S_{M_2}), 
			(\kw{if}(g)\; S_{Q_1}\; \kw{else}\; S_{Q_2})} \qquad	
	
	\\[\GAP]\squad
	\staterule{Shred For}
	{   S \shred \shredded  }
	{ \kw{for}(x\;\kw{in}\;g_1:g_2)\;S \shred  
		(\kw{for}(x\;\kw{in}\;g_1:g_2)\;S_D),  
		(\kw{for}(x\;\kw{in}\;g_1:g_2)\;S_M), 
		(\kw{for}(x\;\kw{in}\;g_1:g_2)\;S_Q)} \qquad
\end{display}

In the rest of this section, we show that shredding a SlicStan program preserves its semantics (Theorem~\ref{th:shred}), in the sense that an elaborated program $S$ has the same meaning as the sequence of its shredded parts $S_D; S_M; S_Q$. We do so by:
\begin{enumerate}
	\item Proving that shredding produces \textit{single-level statements} (Definition~\ref{def:singlelev} and Lemma~\ref{lem:shredisleveled}).
	\item Defining a notion of \textit{statement equivalence} (Definition~\ref{def:equiv}) and specifying what conditions need to hold to change the order of two statements (Lemma~\ref{lem:reorder}).
	\item Showing how to extend the type system of SlicStan in order for the language to fulfil the criteria from (2) (Definition~\ref{def:shreddable}, Lemma~\ref{lem:commutativity}).
\end{enumerate} 

Intuitively, a single-level statement of level $\ell$ is one that updates only variables of level $\ell$. 
\begin{definition}[Single-level Statement $\Gamma \vdash \ell(S)$] \label{def:singlelev}
$S$ is a single-level statement of level $\ell$ with respect to $\Gamma$ (written $\Gamma \vdash \ell(S)$) if and only if,
$\Gamma \vdash S : \ell$ and $\forall x \in \assset(S)$ there is some $\tau$, s.t. $x:(\tau, \ell) \in \Gamma$.
\end{definition}

\begin{lemma}[Shredding produces single-level statements] \label{lem:shredisleveled}
	$$S \shred[\Gamma] \shredded \implies \singlelevelS{\lev{data}}{S_D} \wedge \singlelevelS{\lev{model}}{S_M} \wedge \singlelevelS{\lev{genquant}}{S_Q}$$
\end{lemma}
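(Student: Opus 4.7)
The plan is to proceed by rule induction on the derivation of $S \shred \shredded$. Since the shredding rules reference the typing context $\Gamma$ (for example through the side condition $\Gamma(L) = (\_, \lev{data})$ on Shred DataAssign), I will carry through the implicit assumption that $\Gamma \vdash S : \ell_0$ for some $\ell_0$ --- this is both what makes the shredding well-defined and what provides the inversion facts needed in the harder cases. For each rule I must verify, at each level $\ell \in \{\lev{data}, \lev{model}, \lev{genquant}\}$, both that $\Gamma \vdash S_\ell : \ell$ and that $\assset(S_\ell)$ contains only variables with $\Gamma$-level $\ell$.

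The base cases are routine. For Shred DataAssign the Assign typing rule gives $\Gamma \vdash L=E : \lev{data}$ directly from the side condition, and $\assset(L=E)$ is a single data variable; the two $\kw{skip}$ components satisfy the predicate at the other two levels because $\assset(\kw{skip}) = \emptyset$ and the Skip rule types $\kw{skip}$ at every level. Shred ModelAssign and Shred GenQuantAssign are symmetric, and Shred Skip is immediate. Shred Seq follows by applying the induction hypothesis to both subderivations and combining through the Seq typing rule: the assignment sets decompose as unions, so the containment conjunct composes, and the $\shreddable$ side condition of Seq transfers from the source sequence to each of the three shred pairs since shredding only replaces subterms with equal or smaller assignment sets.

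The interesting cases are Shred If and Shred For. By the induction hypothesis each inner component $S_{D_1}, S_{D_2}$, and so on, is single-level at the matching level, so the assignment-set containment for the outer compound follows immediately. The real work is in establishing typeability, for example $\Gamma \vdash \kw{if}(g)\;S_{D_1}\;\kw{else}\;S_{D_2} : \lev{data}$, when the elaborated guard $g$ has intrinsic level $\ell_g \neq \lev{data}$. I would split by comparing $\ell_g$ to the target shred level: when $\ell_g$ is below the target, ESub raises $g$'s type to match the level at which the inner statements are typeable; when $\ell_g$ is strictly above the target, inversion on the source typing shows that the original body assigns only at levels $\geq \ell_g$, so the shredded lower-level bodies contain no assignments and are structurally $\kw{skip}$-like, hence typeable at every level --- one derives the full conditional at level $\ell_g$ using the If rule and then applies SSub to lower it to the target. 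Shred For is structurally identical, with the pair of bound variables $g_1, g_2$ playing the role of $g$.

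I expect the main obstacle to be precisely this typeability argument in the Shred If and Shred For cases: the interaction between ESub, which raises, and SSub, which lowers, is delicate, and one must carefully track that the shredded body genuinely has empty assignment set whenever the guard level exceeds the target. A clean treatment will likely pull out an auxiliary lemma to the effect that any elaborated SlicStan statement $S$ with $\assset(S) = \emptyset$ is typeable at every level at which its embedded guard and bound variables are typeable, which is itself a routine structural induction but is what makes the SSub lowering step in the outer case work uniformly.
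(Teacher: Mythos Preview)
Your proposal is correct and follows the same approach as the paper, which proves this lemma by a one-line ``rule induction on the derivation of $S \shred \shredded$'' without spelling out any cases. You have correctly identified and resolved the genuine subtleties the paper glosses over---the implicit well-typedness hypothesis on $S$, and the delicate \textsc{Shred If}/\textsc{Shred For} cases where the guard level may exceed the target level, handled via your auxiliary observation that an assignment-free statement can be typed at a high level and then lowered with \textsc{SSub}.
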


The core of proving Theorem~\ref{th:shred} is that if we take a statement $S$ that is well-typed in $\Gamma$, and reorder its building blocks according to $\shred$, the resulting statement $S'$ will be \textit{equivalent} to $S$.
\begin{definition}[Statement equivalence] \label{def:equiv}
	$S \eveq S' \deq \left( \forall s, s'. (s, S) \Downarrow s'  \iff (s, S') \Downarrow s' \right)$
\end{definition}

In the general case, to swap the order of executing $S_1$ and $S_2$, it is enough for each statement not to assign to a variable that the other statement reads or assigns to:

\begin{lemma}[Statement Reordering]
\label{lem:reorder}
	For statements $S_1$ and $S_2$ that are well-typed in $\Gamma$, if $\readset(S_1)\cap\assset(S_2) = \emptyset$, $\assset(S_1)\cap\readset(S_2) = \emptyset$, and $\assset(S_1)\cap\assset(S_2) = \emptyset$ then $S_1;S_2 \eveq S_2; S_1$.
\end{lemma}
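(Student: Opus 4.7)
The plan is to prove both directions of $\eveq$ by the same symmetric argument: since the three disjointness hypotheses are invariant under swapping $S_1$ and $S_2$, it suffices to show $(s, S_1; S_2) \Downarrow s''$ implies $(s, S_2; S_1) \Downarrow s''$, and the converse follows by the same argument with the roles reversed.

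First I would establish two auxiliary frame-style lemmas on arbitrary well-typed statements $S$, both by rule induction on the big-step derivation. (F1) If $(s, S) \Downarrow s'$ then $s'(y) = s(y)$ for every $y \notin \assset(S)$; this is essentially Lemma~A.9. (F2) If $(s, S) \Downarrow s'$ and $t$ is a state with $t(y) = s(y)$ for every $y \in \readset(S) \cup \assset(S)$, then there is $t'$ with $(t, S) \Downarrow t'$, with $t'(y) = s'(y)$ for $y \in \assset(S)$ and $t'(y) = t(y)$ for $y \notin \assset(S)$. The content of (F2) is that $\readset$ over-approximates what any subexpression of $S$ consults, so expression evaluations on $s$ and $t$ produce equal values; for assignments this forces the same update, for conditionals the same branch is taken, and for loops the bounds are the same so the loop unrolls identically.

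Given (F1) and (F2), unpack $(s, S_1; S_2) \Downarrow s''$ via \textsc{Eval Seq} into $(s, S_1) \Downarrow s_1$ and $(s_1, S_2) \Downarrow s''$. Because $\assset(S_1)$ is disjoint from both $\readset(S_2)$ and $\assset(S_2)$, (F1) gives $s = s_1$ on $\readset(S_2) \cup \assset(S_2)$. Applying (F2) to $(s_1, S_2) \Downarrow s''$ with initial state $s$ yields $s_2$ with $(s, S_2) \Downarrow s_2$, $s_2 = s''$ on $\assset(S_2)$, and $s_2 = s$ off $\assset(S_2)$; in particular $s_2 = s$ on $\readset(S_1) \cup \assset(S_1)$. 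Applying (F2) again to $(s, S_1) \Downarrow s_1$ with initial state $s_2$ yields $s_3$ with $(s_2, S_1) \Downarrow s_3$, $s_3 = s_1$ on $\assset(S_1)$, and $s_3 = s_2$ off $\assset(S_1)$. It remains to check $s_3 = s''$ pointwise: on $\assset(S_1)$ we have $s_3 = s_1 = s''$, the last equality by (F1) applied to $(s_1, S_2) \Downarrow s''$ together with $\assset(S_1) \cap \assset(S_2) = \emptyset$; on $\assset(S_2)$ we have $s_3 = s_2 = s''$; and elsewhere $s_3 = s_2 = s = s''$, the last by two uses of (F1). Then \textsc{Eval Seq} delivers $(s, S_2; S_1) \Downarrow s''$.

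The main obstacle is (F2), and specifically the \textsc{Eval ForTrue} case, because the rule extends the state with the loop variable $x$ and later removes it via $s'[-x]$. One must verify that agreement on $\readset \cup \assset$ is preserved under $s[x \mapsto c_1]$ and $t[x \mapsto c_1]$ so the induction hypothesis applies on the body, and preserved again after the removal $[-x]$ so the hypothesis applies on the tail loop. This is fine because $x$ is bound (identifying statements up to $\alpha$-renaming lets us assume $x$ is fresh for both $s$ and $t$) and because the typing rule \textsc{For} forces $x \notin \assset(S)$, so $x$ is not tracked by the frame condition. A minor wrinkle is that \textsc{Eval ArrEl} requires an in-bounds index; since index expressions read only variables in $\readset$, agreement on $\readset$ transports in-bounds evaluations from $s$ to $t$, so termination is preserved in both directions as required by the definition of $\eveq$.
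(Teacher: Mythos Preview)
Your proposal is correct and follows essentially the same route as the paper: both arguments rest on the two frame facts you call (F1) and (F2), which are the paper's Lemma~A.9 and Lemma~A.10, and both finish with the same pointwise case split over $\assset(S_1)$, $\assset(S_2)$, and their complement.

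The one noteworthy difference is that your (F2) is a genuine \emph{existence} lemma (``then there is $t'$ with $(t,S)\Downarrow t'$\ldots''), whereas the paper's Lemma~A.10 is phrased with both evaluations already assumed to terminate. Consequently the paper's proof simply supposes all four runs $(s,S_1)\Downarrow s_1$, $(s,S_2)\Downarrow s_2$, $(s_1,S_2)\Downarrow s_{12}$, $(s_2,S_1)\Downarrow s_{21}$ and shows $s_{12}=s_{21}$, leaving the termination-transfer part of the biconditional in $\eveq$ implicit. Your formulation handles that aspect explicitly, which is a small improvement; otherwise the two proofs coincide.
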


Shredding produces single-level statements, therefore we only encounter reordering single-level statements of distinct levels. Thus, two of the conditions needed for reordering already hold. 
\begin{lemma}[] \label{lem:halfway}	If $~\Gamma \vdash \ell_1(S_1)$, $\Gamma \vdash \ell_2(S_2)$ and $\ell_1 < \ell_2$ then $\readset(S_1) \cap \assset(S_2) = \emptyset$ and $\assset(S_1) \cap \assset(S_2) = \emptyset$.
\end{lemma}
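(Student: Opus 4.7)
The lemma is a direct consequence of two basic properties of the information-flow type system: (i) a variable has a unique declared level in $\Gamma$, and (ii) in a single-level statement at level $\ell$, only variables of declared level at most $\ell$ may be read. Accordingly, I would split the proof into two essentially independent arguments, one for each conjunct.

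For $\assset(S_1) \cap \assset(S_2) = \emptyset$, the reasoning is immediate from the single-level side-conditions of Definition~\ref{def:singlelev}. If some $x$ lay in both assigns-to sets, then $\Gamma \vdash \ell_1(S_1)$ would force $x:(\tau,\ell_1)\in\Gamma$ and $\Gamma \vdash \ell_2(S_2)$ would force $x:(\tau',\ell_2)\in\Gamma$. Since the grammar of $\Gamma$ in \autoref{ssec:slictyping} requires the $x_i$ to be distinct, $\Gamma$ assigns each variable a unique type, forcing $\ell_1 = \ell_2$ and contradicting $\ell_1 < \ell_2$.

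For $\readset(S_1) \cap \assset(S_2) = \emptyset$, I would first establish the auxiliary claim: \emph{if $\Gamma \vdash \ell(S)$ and $x \in \readset(S)$, then $\Gamma(x) = (\tau,\ell')$ with $\ell' \leq \ell$} (this is essentially Lemma~\ref{lem:exp_in_singleleveled_s}, which may be invoked directly). The claim is shown by rule induction on $\Gamma \vdash S : \ell$. The base case \ref{Assign} is the only nontrivial one: the L-value has level $\ell$, the RHS is typed at $(\tau,\ell)$, and by the combination of \ref{Var} and \ref{ESub} any variable occurring in the RHS (or in the index expressions of the L-value) must have declared level at most $\ell$. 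The \ref{Seq}, \ref{If}, \ref{For}, and \ref{Skip} cases follow by direct induction, taking unions of readsets and recalling that the guard expression in \ref{If}/\ref{For} is typed at $\ell$. The \ref{SSub} case is not needed, since the single-level condition pins $\ell$ to the common level of every assignment. With the claim in hand, if $x \in \readset(S_1) \cap \assset(S_2)$, then the claim gives $\Gamma(x)$ level $\leq \ell_1$, while the single-level condition on $S_2$ gives $\Gamma(x)$ level exactly $\ell_2$; hence $\ell_2 \leq \ell_1$, contradicting $\ell_1 < \ell_2$.

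\textbf{Main obstacle.} The only substantive content is the auxiliary claim about readsets of single-level statements. It is routine but pedantic: one has to trace occurrences of variables through both the L-value's index expressions and the RHS in \ref{Assign}, and handle \ref{ESub} carefully by noting that it can only raise the apparent level of an expression, never lower the declared level of a variable below what \ref{Var} provides. If Lemma~\ref{lem:exp_in_singleleveled_s} is taken as given, the whole proof collapses to two short proofs by contradiction, making Lemma~\ref{lem:halfway} essentially a corollary.
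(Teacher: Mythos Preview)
Your proposal is correct and follows essentially the same approach as the paper: both conjuncts are handled by contradiction, the first directly from Definition~\ref{def:singlelev} and uniqueness of levels in $\Gamma$, the second via the fact that any variable read in a single-level statement at level $\ell$ has declared level at most $\ell$. The only difference is packaging: the paper factors the second argument through an intermediate expression $E$ containing $x$ and invokes Lemma~\ref{lem:vars_in_exps} (variable level $\leq$ expression level) followed by Lemma~\ref{lem:exp_in_singleleveled_s} (expression level $\leq$ statement level), whereas you fuse these into a single auxiliary claim and sketch its direct proof. Note that your auxiliary claim is not quite Lemma~\ref{lem:exp_in_singleleveled_s} alone---it is the composition of Lemmas~\ref{lem:vars_in_exps} and~\ref{lem:exp_in_singleleveled_s}---so if you wish to cite existing lemmas rather than reprove the claim, you should cite both.
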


To reorder the sequence $S_2;S_1$ according to Lemma~\ref{lem:reorder}, we need to satisfy one more condition, which is 
$\readset(S_2) \cap \assset(S_1) = \emptyset$. We achieve this through the predicate $\shreddable$ in the \ref{Seq} typing rule.

One way to define $\shreddable(S_2, S_1)$ is so that it directly reflects this condition: $\shreddable(S_2, S_1) = \readset(S_2) \cap \assset(S_1)$. This corresponds to a form of a single-assignment system, where variables become immutable once they are read.

We adopt a more flexible strategy, where we enforce variables of level $\ell$ to become immutable only once they have been  \textit{read at a level higher than} $\ell$. We define:
\begin{itemize}
	\item $\readset_{\Gamma \vdash \ell}(S)$: the set of variables $x$ that are read at level $\ell$ in $S$. For example, if $y$ is of level $\ell$, then $x\in \readset_{\Gamma \vdash \ell}(y=x)$. (Definition~\ref{def:read_level_set}).
	\item $\assset_{\Gamma \vdash \ell}(S)$: the set of variables $x$ of level $\ell$ that have been assigned to in $S$ (Definition~\ref{def:write_level_set}).
\end{itemize}

Importantly, if $\Gamma \vdash \ell(S)$, then the sets $\readset_{\Gamma \vdash \ell}(S)$ and $\assset_{\Gamma \vdash \ell}(S)$ are the same as $\readset(S)$ and $\assset(S)$:
\begin{lemma} \label{lem:same_sets_when_singlelevel}
	If $~\Gamma \vdash \ell(S)$, then $\readset_{\Gamma \vdash \ell}(S) = \readset(S)$ and $\assset_{\Gamma \vdash \ell}(S) = \assset(S)$.
\end{lemma}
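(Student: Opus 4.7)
The plan is to prove both equalities by structural induction on the derivation of $\Gamma \vdash \ell(S)$, establishing the two statements simultaneously. The unconditional inclusions $\readset_{\Gamma \vdash \ell}(S) \subseteq \readset(S)$ and $\assset_{\Gamma \vdash \ell}(S) \subseteq \assset(S)$ follow immediately from Definitions~\ref{def:read_level_set} and~\ref{def:write_level_set}, since the level-indexed variants are obtained by restricting to variables used/assigned at a specific level. The work lies in the reverse inclusions.

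For the base case $S = (L = E)$, the hypothesis $\Gamma \vdash \ell(L = E)$ together with the \ref{Assign} rule gives $\Gamma(L) = (\tau,\ell)$ and $\Gamma \vdash E : (\tau,\ell)$. Hence every variable of $L$ that is assigned has level $\ell$, so $\assset(S) \subseteq \assset_{\Gamma \vdash \ell}(S)$; and since $E$ is used at level $\ell$, every variable appearing in $E$ is read at level $\ell$, giving $\readset(S) \subseteq \readset_{\Gamma \vdash \ell}(S)$. The \ref{Skip} case is immediate since both sets are empty.

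The inductive cases reduce to verifying that sub-statements remain single-level, then invoking the induction hypothesis. For \ref{Seq}, inversion gives $\Gamma \vdash S_i : \ell$, and since $\assset(S_i) \subseteq \assset(S_1;S_2)$ we get $\Gamma \vdash \ell(S_i)$, so the IH applies to each and the sets distribute over the sequence. For \ref{If}, inversion yields $\Gamma \vdash E : (\kw{bool}, \ell)$ and $\Gamma \vdash S_i : \ell$; the guard $E$ is read at level $\ell$ by definition, and both branches are single-level by the same $\assset$-containment argument. The \ref{For} case is analogous, noting additionally that $x \notin \assset(S)$ ensures the bound loop variable contributes nothing to $\assset_{\Gamma \vdash \ell}$ beyond what it contributes to $\assset$.

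The main obstacle will be carefully handling subsumption: a statement typed at level $\ell$ via \ref{SSub} may have been originally typed at a strictly lower level $\ell' < \ell$, in which case some reads may contribute to $\readset_{\Gamma \vdash \ell'}$ rather than $\readset_{\Gamma \vdash \ell}$. The single-level hypothesis $\Gamma \vdash \ell(S)$ rules this out for assignments (since $\Gamma(L)$ fixes the level syntactically) but it must be verified that the definition of $\readset_{\Gamma \vdash \ell}$ in Definition~\ref{def:read_level_set} tracks the level of the enclosing L-value rather than the possibly-subsumed level of the entire statement, so that every read in the derivation tree of an assignment $L = E$ with $\Gamma(L) = (\_,\ell)$ is genuinely recorded at level $\ell$. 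Given the intended reading of the example (``if $y$ is of level $\ell$, then $x \in \readset_{\Gamma \vdash \ell}(y = x)$''), this compatibility is immediate, and the induction then goes through cleanly.
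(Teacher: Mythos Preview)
Your approach is correct but does more work than the paper's proof. The key observation you are missing is that the $\assset$ equality requires no induction at all: by Definition~\ref{def:write_level_set}, $\assset_{\Gamma \vdash \ell}(S) = \{x \in \assset(S) \mid \Gamma(x) = (\tau,\ell)\}$, and Definition~\ref{def:singlelev} of single-level says precisely that every $x \in \assset(S)$ has $\Gamma(x) = (\tau,\ell)$. So $\assset_{\Gamma \vdash \ell}(S) = \assset(S)$ is immediate, in one line, with no case analysis. The paper proves it exactly this way.

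For the $\readset$ equality, both you and the paper argue by structural induction, but the paper inducts on $S$ (equivalently, on the recursive definition of $\readset_{\Gamma \vdash \ell}$ in Definition~\ref{def:read_level_set}), whereas you induct on the typing derivation $\Gamma \vdash S : \ell$. This difference is what forces you to worry about \ref{SSub}: since $\readset_{\Gamma \vdash \ell}$ is defined purely syntactically in terms of $\Gamma(L)$, subsumption in the typing derivation is irrelevant, and the paper's choice of induction sidesteps the issue entirely. Your handling of subsumption is fine, but it is an artefact of the induction you chose rather than a genuine obstacle. A minor terminological point: you write ``induction on the derivation of $\Gamma \vdash \ell(S)$'', but $\Gamma \vdash \ell(S)$ is a predicate (Definition~\ref{def:singlelev}), not an inductively defined judgment; what you actually carry out is induction on the derivation of $\Gamma \vdash S : \ell$.
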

Finally, we give the formal definition of $\shreddable$:
\begin{definition}[Shreddable sequence] \label{def:shreddable} $\shreddable(S_1, S_2) \deq  \forall \ell_1,\ell_2. (\ell_2 < \ell_1) \implies \readset_{\Gamma \vdash \ell_1}(S_1) \cap \assset_{\Gamma \vdash \ell_2}(S_2) = \emptyset$
\end{definition}

\begin{lemma}[Commutativity of sequencing single-level statements]  \label{lem:commutativity} ~ 
		
If $~\singlelevelS{\ell_1}{S_1}$, $\singlelevelS{\ell_2}{S_2}$, $\Gamma \vdash S_2;S_1 : \lev{data}$ and $\ell_1 < \ell_2$ then $S_2; S_1; \eveq S_1; S_2;$
\end{lemma}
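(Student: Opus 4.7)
The plan is to reduce this to the general Statement Reordering lemma (Lemma~\ref{lem:reorder}) applied to the sequence $S_2;S_1$. For that I would need to verify the three non-interference conditions: $\readset(S_2) \cap \assset(S_1) = \emptyset$, $\assset(S_2) \cap \readset(S_1) = \emptyset$, and $\assset(S_2) \cap \assset(S_1) = \emptyset$. Two of these fall out essentially for free from the hypotheses, and the third is precisely what the $\shreddable$ predicate in the \ref{Seq} rule was introduced to guarantee.

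First I would apply Lemma~\ref{lem:halfway} to $\Gamma \vdash \ell_1(S_1)$, $\Gamma \vdash \ell_2(S_2)$, $\ell_1 < \ell_2$, which immediately gives $\readset(S_1) \cap \assset(S_2) = \emptyset$ and $\assset(S_1) \cap \assset(S_2) = \emptyset$. Two of Lemma~\ref{lem:reorder}'s premises are then in hand. For the remaining premise $\readset(S_2) \cap \assset(S_1) = \emptyset$, I would inspect the derivation of $\Gamma \vdash S_2;S_1 : \lev{data}$: stripping off any uses of \ref{SSub}, the last rule producing a sequence must be \ref{Seq}, and its side-condition is exactly $\shreddable(S_2, S_1)$. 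Instantiating Definition~\ref{def:shreddable} at $\ell_1' := \ell_2$ and $\ell_2' := \ell_1$ (legal because $\ell_1 < \ell_2$) yields $\readset_{\Gamma \vdash \ell_2}(S_2) \cap \assset_{\Gamma \vdash \ell_1}(S_1) = \emptyset$. Now using Lemma~\ref{lem:same_sets_when_singlelevel} on both single-level hypotheses, $\readset_{\Gamma \vdash \ell_2}(S_2) = \readset(S_2)$ and $\assset_{\Gamma \vdash \ell_1}(S_1) = \assset(S_1)$, so the set-intersection collapses to $\readset(S_2) \cap \assset(S_1) = \emptyset$, giving the third condition. Lemma~\ref{lem:reorder} then concludes $S_2;S_1 \eveq S_1;S_2$.

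The only real subtlety I expect to wrestle with is the first step of the second paragraph: extracting $\shreddable(S_2, S_1)$ from the global typing judgment $\Gamma \vdash S_2;S_1 : \lev{data}$. Because the type system includes the subsumption rule \ref{SSub}, the derivation may interleave subsumption steps around the sequence, so the argument needs a small inversion lemma (or a direct inspection) showing that every derivation of a sequencing judgment ultimately uses \ref{Seq} at some level and thereby carries the $\shreddable$ side-condition, which is level-insensitive. Once this is noted, the rest of the proof is essentially bookkeeping combining Lemmas~\ref{lem:reorder},~\ref{lem:halfway}, and~\ref{lem:same_sets_when_singlelevel}.
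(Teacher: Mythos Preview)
Your proposal is correct and follows essentially the same route as the paper's proof: both derive two of the three disjointness conditions from Lemma~\ref{lem:halfway}, obtain the third by extracting $\shreddable(S_2,S_1)$ from the typing of $S_2;S_1$ and then collapsing the level-indexed sets via Lemma~\ref{lem:same_sets_when_singlelevel}, and conclude with Lemma~\ref{lem:reorder}. Your explicit remark about needing to peel off uses of \ref{SSub} before inverting \ref{Seq} is a point the paper leaves implicit, so if anything your write-up is slightly more careful.
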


\begin{theorem}[Semantic Preservation of $\shred$] \label{th:shred} ~
	
If $~\Gamma \vdash S:\lev{data} $ and $ S \shred[\Gamma] \shredded $ then $ \log p^*_{\Gamma \vdash S}(s) = \log p^*_{\Gamma \vdash (S_D; S_M; S_Q)}(s)$, for all $s \models \Gamma$.
\end{theorem}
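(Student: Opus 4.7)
The plan is to reduce the semantic identity to the statement equivalence $S \eveq (S_D; S_M; S_Q)$ of Definition~\ref{def:equiv}. Since $\log p^*_{\Gamma \vdash S}(s)$ is defined as $s'[\kw{target}]$ for the unique $s'$ with $((s, \kw{target} \mapsto 0), S) \Downarrow s'$, it suffices to show that $S$ and $S_D; S_M; S_Q$ evaluate to the same final state from any initial state; the equality of target values then follows. I would prove this equivalence by rule induction on the derivation $S \shred \shredded$.

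The four atomic cases \ref{Shred Skip}, \ref{Shred DataAssign}, \ref{Shred ModelAssign}, and \ref{Shred GenQuantAssign} are immediate: in each, two of the three shreds are $\kw{skip}$, so $S_D; S_M; S_Q$ is operationally identical to the original single statement. In the \ref{Shred If} and \ref{Shred For} cases, elaboration (\ref{Elab If}, \ref{Elab For}) guarantees that the guard $g$ (respectively loop bounds $g_1, g_2$) is a fresh variable appearing in no assign-set of the bodies. Hence evaluating the guard across each of the three distributed copies of the compound statement in $S_D; S_M; S_Q$ always selects the same branch (resp.\ runs the same number of iterations); combined with the inductive hypotheses on the sub-statements and congruence of $\eveq$ (Lemma~\ref{lem:congr}), these cases reduce to a short unfolding of $\Downarrow$.

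The main case is \ref{Shred Seq}. Given $S_1; S_2 \shred (S_{D_1};S_{D_2}),(S_{M_1};S_{M_2}),(S_{Q_1};S_{Q_2})$ with inductive hypotheses $S_1 \eveq S_{D_1}; S_{M_1}; S_{Q_1}$ and $S_2 \eveq S_{D_2}; S_{M_2}; S_{Q_2}$, congruence and associativity of $\eveq$ (Lemmas~\ref{lem:congr} and \ref{lem:assoc}) reduce the goal to showing that $S_{D_1}; S_{M_1}; S_{Q_1}; S_{D_2}; S_{M_2}; S_{Q_2}$ is equivalent to $S_{D_1}; S_{D_2}; S_{M_1}; S_{M_2}; S_{Q_1}; S_{Q_2}$. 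I would achieve this by three adjacent swaps — first $S_{Q_1}$ past $S_{D_2}$, then $S_{M_1}$ past $S_{D_2}$, then $S_{Q_1}$ past $S_{M_2}$ — each an instance of Lemma~\ref{lem:commutativity}. Lemma~\ref{lem:shredisleveled} supplies single-levelness of every shred, and the strict level ordering $\lev{data} < \lev{model} < \lev{genquant}$ is given.

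The main obstacle is verifying the remaining hypothesis of Lemma~\ref{lem:commutativity} — well-typedness of the pre-swap sequence — at each of the three commutations. By the \ref{Seq} rule this reduces to checking $\shreddable$ on each adjacent pair such as $(S_{Q_1}, S_{D_2})$. This in turn requires a small auxiliary lemma stating that the level-indexed read and assign sets of any shred $S_{L_i}$ are contained in those of its parent $S_i$; the original $\shreddable(S_1, S_2)$ supplied by the typing of $S_1; S_2$ then transports to each adjacent pair, combined with Lemma~\ref{lem:same_sets_when_singlelevel} to rewrite $\readset$ and $\assset$ on the now single-level shreds. Establishing this containment — by a direct induction on the shredding derivation — is where most of the bookkeeping lives, but once in hand the Seq case, and hence the theorem, closes.
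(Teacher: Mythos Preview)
Your overall strategy --- reduce to statement equivalence $S \eveq S_D;S_M;S_Q$ and proceed by rule induction on the shredding derivation --- is exactly the paper's, and your treatment of the atomic cases, the \ref{Shred If} case, and especially the \ref{Shred Seq} case (the three swaps via Lemma~\ref{lem:commutativity}) matches the paper's proof closely. Your observation that each application of Lemma~\ref{lem:commutativity} requires checking the typing hypothesis $\Gamma \vdash S_2;S_1:\lev{data}$, and that this needs a small containment lemma on level-indexed read/assign sets of shreds, is a point the paper leaves implicit; you are right to flag it.

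There is, however, a genuine gap in your handling of \ref{Shred For}. You write that since the bounds $g_1,g_2$ are immutable the three distributed loops ``run the same number of iterations'' and that, together with the inductive hypothesis on the body and congruence, this ``reduces to a short unfolding of $\Downarrow$.'' That is not enough. The inductive hypothesis gives you $S' \eveq S_D';S_M';S_Q'$ for a \emph{single} iteration, so the original loop is equivalent to the interleaved sequence
\[
x{=}n_1;S_D';S_M';S_Q';\ x{=}n_1{+}1;S_D';S_M';S_Q';\ \dots;\ x{=}n_2;S_D';S_M';S_Q',
\]
whereas the shredded side groups all $S_D'$ iterations first, then all $S_M'$, then all $S_Q'$. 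Bridging these requires commuting, for every pair of iterations $i<j$, a level-$\ell_2$ body at iteration $i$ past a level-$\ell_1$ body at iteration $j$ (with $\ell_1<\ell_2$), \emph{including} the intervening assignments $x{=}j$ to the loop index. This is not a mere unfolding: it is the same commutativity argument as in the \ref{Shred Seq} case, but iterated and entangled with the index updates. The paper handles this with three dedicated lemmas --- an unrolling lemma for \kw{for} (Lemma~\ref{lem:forloops}), a lemma allowing duplicate index assignments $x{=}n;S_1;S_2 \eveq x{=}n;S_1;x{=}n;S_2$ when $x\notin\assset(S_1)$ (Lemma~\ref{lem:moreassignments}), and a cross-iteration reordering lemma $x{=}i;S_2;x{=}j;S_1 \eveq x{=}j;S_1;x{=}i;S_2;x{=}j$ for single-level $S_1,S_2$ of distinct levels (Lemma~\ref{lem:loopreorder}). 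You will need analogues of these; without them the \ref{Shred For} case does not close.
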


\begin{proof}
	Note that if $S \eveq S'$ then $\log p^*_{\Gamma \vdash S}(s) = \log p^*_{\Gamma \vdash S'}(s)$ for all states $s \models \Gamma$. 
	Semantic preservation then follows from proving the stronger result $\Gamma \vdash S:\lev{data} \wedge S \shred[\Gamma] \shredded \implies S \eveq (S_D; S_M; S_Q)$ by structural induction on the structure of $S$. 
	
	We give the full proof, together with proofs for Lemma~\ref{lem:shredisleveled}, \ref{lem:reorder}, \ref{lem:halfway} and \ref{lem:same_sets_when_singlelevel}, in \ref{ap:proofshred}.
\end{proof}

\subsection{Transformation} \label{ssec:trans}

The last step of translating SlicStan to Stan is \textit{transformation}. We formalise how a shredded SlicStan program $\elpair{\Gamma}{\shredded}$ transforms to a Stan program $P$, through the transformation relations:
\vspace{4pt} 
\begin{display}[.2]{Transformation Relations}
	\clause{\Gamma \ctxtostan \stan}{variable declarations transformation} \\
	\clause{S \dtostan \stan}{\lev{data} statement transformation} \\
	\clause{S \mtostan \stan}{\lev{model} statement transformation} \\
	\clause{S \qtostan \stan}{\lev{genquant} statement transformation} \\
	\clause{\elpair{\Gamma}{S} \tostan \stan}{top-level transformation}
\end{display}

Intuitively, a shredded program $\elpair{\Gamma}{\shredded}$ transforms to Stan in four steps:
\begin{enumerate}
	\item The declarations $\Gamma$ are split into blocks, depending on the level of variables and whether or not they have been assigned to inside of $S_D$, $S_M$ or $S_Q$.
	\item The \lev{data}-levelled statement $S_D$ becomes the body of the \lstinline|transformed data| block.
	\item The \lev{model}-levelled statement $S_M$ is split into the \lstinline|transformed parameters| and \lstinline|model| block, depending on whether or not substatements assign to the \lstinline|target| variable or not.
	\item The \lev{genquant}-levelled statement $S_Q$ becomes the body of the \lstinline|generated quantities| block.
\end{enumerate} 

This is formalised by the \ref{Trans Prog} rule below. The Stan program $P_1;P_2$ is the Stan programs $P_1$ and $P_2$ merged by composing together the statements in each program block (Definition~\ref{def:stanmerge}).

\vspace{12pt} 
\begin{display}{Top-level Transformation Rule}
	\squad\staterule{Trans Prog}
	{ S \shred[\Gamma] \shredded \quad \Gamma \ctxtostan[(S_D;S_M;S_Q)] \stan \quad S_D \dtostan \stan_D \quad S_M \mtostan \stan_M \quad S_Q \qtostan \stan_Q}
	{ \elpair{\Gamma}{S} \tostan \stan ; \stan_D; \stan_M; \stan_Q}
\end{display}
\begin{display}{Transformation Rules for Declarations:}
	\squad
	\staterule{Trans Data}
	{ \Gamma \ctxtostan P \quad  \quad x \notin \assset(S)}
	{ \Gamma, x:(\tau, \lev{data}) \ctxtostan \databl{x:\tau} ; P} \qquad
	
	\staterule{Trans TrData}
	{ \Gamma \ctxtostan P \quad  \quad x \in \assset(S)}
	{ \Gamma, x:(\tau, \lev{data}) \ctxtostan \trdatabl{x:\tau} ; P} 
	
	\\[\GAP]\squad
	\staterule{Trans Param}
	{ \Gamma \ctxtostan P \quad  \quad x \notin \assset(S)}
	{ \Gamma, x:(\tau, \lev{model}) \ctxtostan \paramsbl{x:\tau} ; P} \quad
	
	\staterule{Trans TrParam}
	{ \Gamma \ctxtostan P \quad  \quad x \in \assset(S)}
	{ \Gamma, x:(\tau, \lev{model}) \ctxtostan \trparamsbl{x:\tau} ; P} 
	
	\\[\GAP]\squad
	\staterule{Trans GenQuant}
	{ \Gamma \ctxtostan P}
	{ \Gamma, x:(\tau, \lev{genquant}) \ctxtostan \genquantbl{x:\tau} ; P} \qquad
	
	\staterule{Trans Empty}{}{\emptyset \tostan \emptyprog}
\end{display}

\vspace{-8pt}\begin{multicols}{2}
\begin{display}{Transformation Rule for Data Statements:}
	\squad	
	\staterule{Trans Data}
	{ }
	{ S_D \dtostan \trdatabl{S_D} }\qquad	
	\\[-8.5pt]	
\end{display}
\begin{display}{Transformation Rule for GenQuant Statements:}
	\squad	
	\staterule{Trans GenQuant}
	{ }
	{ S_Q \dtostan \genquantbl{S_Q} }\qquad	
\end{display}
\end{multicols}\vspace{-8pt}

The rules \ref{Trans ParamIf}, \ref{Trans ModelIf}, \ref{Trans ParamFor}, and \ref{Trans ModelFor} might produce a Stan program that does not compile in the current version of Stan. This is because Stan restricts the \lstinline|transformed parameters| block to only assign to transformed parameters, and the \lstinline|model| block to only assign to the \lstinline|target| variable. However, a \lstinline|for| loop, for example, can assign to both kinds of variables in its body:
\begin{lstlisting}
	for(i in 1:N){
		sigma[i] = pow(tau[i], -0.5);
		y[i] ~ normal(0, sigma[i]); }
\end{lstlisting}

To the best of our knowledge, this limitation is an implementational particularity of the current version of the Stan compiler, and does not have an effect on the semantics of the language.\footnotemark Therefore, we assume Core Stan to be a slightly more expressive version of Stan, that allows transformed parameters to be assigned in the \lstinline|model| block.

\footnotetext{Moreover, there is an ongoing discussion amongst Stan developers to merge the parameters, transformed parameters and model blocks in future versions of Stan \url{http://andrewgelman.com/2018/02/01/stan-feature-declare-distribute/}.}  

\begin{display}{Transformation Rules for Model Statements:}
	\squad
	\staterule{Trans ParamAssign}
	{ L \neq \kw{target}}
	{ L = E \mtostan \trparamsbl{L = E} }\quad	
	
	\staterule{Trans Model}
	{ }
	{ \kw{target} = E \mtostan \modelbl{\kw{target} = E} }\quad	
	
	\staterule{Trans ParamSeq}
	{ S_1 \mtostan \stan_1 \quad S_2 \tostan \stan_2}
	{ S_1;S_2 \mtostan \stan_1 ; \stan_2} \quad
	
	\\[\GAP]\squad	
	\staterule{Trans ParamIf}
	{ \kw{target} \notin \assset(S_1)\cup\assset(S_2) }
	{ \kw{if}(E)\; S_1\; \kw{else}\; S_2 \mtostan \trparamsbl{\kw{if}(E)\; S_1\; \kw{else}\; S_2}}\quad

	\staterule{Trans ModelIf}
	{ \kw{target} \in \assset(S_1)\cup\assset(S_2) }
	{ \kw{if}(E)\; S_1\; \kw{else}\; S_2 \mtostan \modelbl{\kw{if}(E)\; S_1\; \kw{else}\; S_2}}\quad
	
	\\[\GAP]\squad
	\staterule{Trans ParamFor}
	{ \kw{target} \notin \assset(S) }
	{ \kw{for}(x\;\kw{in}\;E_1:E_2)\;S \mtostan \trparamsbl{\kw{for}(x\;\kw{in}\;E_1:E_2)\;S}}\qquad

	\staterule{Trans ParamSkip}
	{}
	{\kw{skip} \mtostan \emptyprog}\qquad
	
	\\[\GAP]\squad
	\staterule{Trans ModelFor}
	{ \kw{target} \in \assset(S) }
	{ \kw{for}(x\;\kw{in}\;E_1:E_2)\;S \mtostan \modelbl{\kw{for}(x\;\kw{in}\;E_1:E_2)\;S}}\qquad

\end{display}

\begin{theorem}[Semantic Preservation of $\tostan$]\label{th:trans}
Consider a well-formed SlicStan program $F_1, \dots, F_n, S$, such that $S \elab[\emptyset] \elpair{\Gamma'}{S'}$.
Consider also a Core Stan program $P$, such that $\elpair{\Gamma'}{S'} \tostan P$.
Then for any $\params \models \{ (x: (\tau, \lev{data})) \in \Gamma' \mid x \notin \assset(S') \}$ and $\data \models \{ (x: (\tau, \lev{model}))  \in \Gamma' \mid x \notin \assset(S')\}$:
$$\log p^*_{F_1, \dots, F_n, S}(\params \mid \data) = \log p^*_{P}(\params \mid \data)$$
\end{theorem}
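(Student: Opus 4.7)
The plan is to chain together the SlicStan semantics (Equation~\ref{eq:slicposterior}), Theorem~\ref{th:shred}, and a new equivalence between the shredded SlicStan statement and the Stan program produced by $\tostan$. First, by definition of SlicStan's program semantics, $\log p^*_{F_1,\dots,F_n,S}(\params \mid \data) = \log p^*_{\Gamma' \vdash S'}(\params,\data)$. Since $\emptyset \vdash S : \lev{data}$ and elaboration preserves types (Theorem~\ref{th:elab}), we have $\Gamma' \vdash S' : \lev{data}$, so Theorem~\ref{th:shred} applies and gives $\log p^*_{\Gamma' \vdash S'}(s) = \log p^*_{\Gamma' \vdash (S_D;S_M;S_Q)}(s)$ for every $s \models \Gamma'$, where $\shredded$ is the decomposition used by the \ref{Trans Prog} premise. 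It therefore suffices to prove the \emph{central claim} that $\log p^*_{\Gamma' \vdash (S_D;S_M;S_Q)}(\params,\data) = \log p^*_P(\params \mid \data)$, where $P = \stan;\stan_D;\stan_M;\stan_Q$ and the right-hand side unfolds (by the Stan semantics of \autoref{ssec:stansem}) to $s'[\kw{target}]$ for the state $s'$ produced by running the concatenation $S_{td};S_{tp};S_{m};S_{gq}$ of the four Stan blocks of $P$ from $(\data,\params,\kw{target}\mapsto 0)$.

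The central claim is established in three sub-steps. \emph{(i)~Declaration placement.} A straightforward induction on the derivation $\Gamma' \ctxtostan[(S_D;S_M;S_Q)] \stan$ shows that $\stan$ distributes the variables of $\Gamma'$ into the six Stan blocks exactly according to level and assignment status: unassigned \lev{data} variables become data (supplied by $\data$), assigned \lev{data} variables become transformed data, unassigned \lev{model} variables become parameters (supplied by $\params$), assigned \lev{model} variables become transformed parameters, and \lev{genquant} variables become generated quantities. This matches the partition of $\Gamma'$ into $\Gamma_{\data}$ and $\Gamma_{\theta}$ used by SlicStan's semantics. \emph{(ii)~Data and genquant blocks.} Unfolding \ref{Trans Data} and \ref{Trans GenQuant} (each of which places the entire statement into a single block), we obtain $S_{td} = S_D$ and $S_{gq} = S_Q$. \emph{(iii)~Model block.} By structural induction on $S_M$, show that $S_M \mtostan \stan_M$ implies $(S_{tp};S_{m}) \eveq S_M$ where $S_{tp}$ and $S_m$ are the transformed-parameters and model contents of $\stan_M$ after Stan-program merging (Definition~\ref{def:stanmerge}). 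The base cases \ref{Trans ParamAssign} and \ref{Trans Model} deposit a single assignment in one of the two blocks, so the equivalence is immediate; the compound cases \ref{Trans ParamIf}/\ref{Trans ModelIf} and \ref{Trans ParamFor}/\ref{Trans ModelFor} keep the whole compound statement intact. The inductive \ref{Trans ParamSeq} case is where the work happens: one must show that merging block-by-block corresponds to reordering non-target statements past target statements within $S_M$, and this reordering is licensed by Lemma~\ref{lem:reorder} once one verifies that the relevant read/assign sets are disjoint.

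The main obstacle is precisely the intra-model commutation argument in sub-step (iii). Lemma~\ref{lem:shredisleveled} guarantees that $S_M$ is single-level at \lev{model}, but within that single level we still need to commute non-target updates (targets of \ref{Trans ParamAssign}) past target-accumulating statements (targets of \ref{Trans Model}). The key invariant to extract is that, by \ref{Trans ParamSeq}, the statements sent to transformed parameters only assign to (non-target) model-level variables; combined with the shreddability condition baked into \ref{Seq} and with the way elaboration has already lifted declarations to the top level, one can show that the read-set of each target-updating substatement is disjoint from the assign-set of every non-target substatement it needs to be reordered with. Applying Lemma~\ref{lem:reorder} repeatedly then converts the interleaved $S_M$ into the separated $S_{tp};S_m$ without altering the final value of $\kw{target}$. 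Chaining sub-steps (i)--(iii) with the two earlier equalities finishes the proof.
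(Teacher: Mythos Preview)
Your overall structure matches the paper's brief sketch (rule induction on the $\tostan$ derivation together with Equation~\ref{eq:slicposterior}), and you correctly isolate the only non-trivial step: showing that the \ref{Trans ParamSeq} reordering of $S_M$ into $S_{tp};S_m$ preserves the final value of \kw{target}.

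The justification you offer for that step, however, does not go through. The $\shreddable$ predicate (Definition~\ref{def:shreddable}) constrains reads at a level $\ell_1$ only against writes at a \emph{strictly lower} level $\ell_2 < \ell_1$; since every substatement of $S_M$ lives at level \lev{model} (Lemma~\ref{lem:shredisleveled}), the predicate is vacuous for exactly the pairs you need to commute. Concretely, take
\[
S_M \;=\; (x = 3);\ (\kw{target} = \kw{target} + x);\ (x = 5)
\]
with $x:(\kw{real},\lev{model})$. This sequence is well-typed and satisfies $\shreddable$ at each \ref{Seq} node. The transformation places $x=3;\,x=5$ in the transformed-parameters block and the \kw{target} update in the model block, so running $S_{tp};S_m$ yields $\kw{target}=5$ whereas running $S_M$ yields $\kw{target}=3$. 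Here $\readset(\kw{target}=\kw{target}+x)\cap\assset(x=5)=\{x\}$, so Lemma~\ref{lem:reorder} does not license the swap and your claimed disjointness (``the read-set of each target-updating substatement is disjoint from the assign-set of every non-target substatement it needs to be reordered with'') fails; nothing in the type system, the $\shreddable$ side-condition, or elaboration rules this pattern out. A symmetric problem arises if a non-\kw{target} assignment reads \kw{target}. The paper's own proof is a one-line appeal to rule induction and does not spell out this case either, so the obstacle you have put your finger on is real---but invoking $\shreddable$ does not close it.
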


\begin{proof}
By rule induction on the derivation of $\elpair{\Gamma'}{S'} \tostan \stan$, and equation \ref{eq:slicposterior} from \autoref{ssec:sem_prog_density}.
\end{proof}

\section{Examples and Discussion} \label{sec:demo}

In this section, we demonstrate and discuss the functionality of SlicStan. We compare several Stan code examples, from Stan's Reference Manual \cite{StanManual} and Stan's GitHub repositories \cite{StanGitHub}, with their equivalent written in SlicStan, and analyse the differences. 
All examples presented in this section have been tested using a preliminary implementation of SlicStan, developed by \citet{SlicStanPPS, SlicStanStanCon}, although in this paper we use \lstinline|for| loops where the work makes use of a vectorised notation.

Firstly, we assume a type inference strategy for level types, which allows us to remove the explicit specification of levels from the language (\autoref{ssec:typeinference}). 
Next, we show that SlicStan allows the user to better follow the principle of locality --- related concepts can be kept closer together (\autoref{ssec:expert}). Secondly, we demonstrate the advantages of the more compositional syntax, when code refactoring is needed (\autoref{ssec:refactor}). The last comparison point shows the usage of more flexible user-defined functions, and points out a few limitations of SlicStan (\autoref{ssec:encaps}). More examples and a further discussion on the usability of the languages is presented in Appendix~\ref{ap:examples}.

\subsection{Type Inference} \label{ssec:typeinference}
Going back to \autoref{ssec:stansyntax}, and \autoref{tab:blocks}, we identify that different Stan blocks are executed a different number of times, which gives us another ordering on the level types: a performance ordering. 

Code associated with variables of level \lev{data} is executed only once, as a \textit{preprocessing} step before inference. Code associated with variables of level \lev{genquant} is executed once per sample, right after inference has completed, as these quantities can be \emph{generated} from the already obtained samples of the model parameters (in other words, this is a \textit{postprocessing} step). Finally, code associated with \lev{model} variables is needed at each step of the inference algorithm itself. In the case of HMC, this means such code is executed once per leapfrog step (many times per sample).   

Thus, there is a \emph{performance ordering} of level types: $\lev{data} \leq \lev{genquant} \leq \lev{model}$: it is cheaper for a variable to be \lev{data} than to be \lev{genquant}, and is cheaper for it to be \lev{genquant} than to be \lev{model}.
We can implement type inference following the rules from \autoref{ssec:slictyping}, to infer the level type of each variable in a SlicStan program, so that:
\begin{itemize}
	\item the hard constraint on the information flow direction $\lev{data} < \lev{model} < \lev{genquant}$ is enforced
	\item the choice of levels is optimised with respect to the ordering $\lev{data} \leq \lev{genquant} \leq \lev{model}$.
\end{itemize} 

We have implemented type inference for a preliminary version of SlicStan. In the rest of this section, we assume that no level type annotations are necessary in SlicStan, except for what the data of the probabilistic model is (specified using the derived form \lstinline|data|$~\tau~x;~S$), and that the optimal level type of each variable is inferred as part of the translation process.

\subsection{Locality} \label{ssec:expert}
With the first example, we demonstrate that SlicStan's blockless syntax makes it easier to follow good software development practices, such as declaring variables close to where they are used, and for writing out models that follow a \textit{generative story}. It abstracts away some of the specifics of the underlying inference algorithm, and thus writing optimised programs requires less mental effort. 

Consider an example adapted from \cite[p.~101]{StanManual}. We are interested in inferring the mean $\mu_y$ and variance $\sigma_y^2$ of the independent and identically distributed variables $\mathbf{y} \sim \normal(\mu_y, \sigma_y)$. The model parameters are $\mu_y$ (the mean of $\mathbf{y}$), and $\tau_y = 1 / \sigma_y^2$ (the precision of $\mathbf{y}$). 

Below, we show this example written in SlicStan (left) and Stan (right). 

\begin{minipage}[t][11.7cm][t]{\linewidth} 
\vspace{2pt} 
\begin{multicols}{2} \label{shred2}
	\centering
	\textbf{SlicStan}
	\vspace{-1pt}
	\begin{lstlisting}[numbers=left,numbersep=\numbdist,numberstyle=\tiny\color{\numbcolor},basicstyle=\small]	
	real alpha = 0.1;
	real beta = 0.1;
	real tau_y ~ gamma(alpha, beta);
	
	data real mu_mu;
	data real sigma_mu;
	real mu_y ~ normal(mu_mu, sigma_mu);
	
	real sigma_y = pow(tau_y, -0.5);
	real variance_y = pow(sigma_y, 2);
	
	data int N;
	data real[N] y;
	for(i in 1:N){ y[i] ~ normal(mu_y, sigma_y); }	
	\end{lstlisting}
	\vspace{4.5cm}
	\textbf{Stan}
	\vspace{-4pt}
	\begin{lstlisting}[numbers=left,numbersep=\numbdist,numberstyle=\tiny\color{\numbcolor},basicstyle=\small]
	data {
		real mu_mu;
		real sigma_mu;
		int N;
		real y[N];
	}
	transformed data {
		real alpha = 0.1;
		real beta = 0.1;
	}
	parameters {
		real mu_y;
		real tau_y;
	}
	transformed parameters {
		real sigma_y = pow(tau_y,-0.5);
	}
	model {
		tau_y ~ gamma(alpha, beta);
		mu_y ~ normal(mu_mu, sigma_mu);
		for(i in 1:N){ y[i] ~ normal(mu_y, sigma_y); }
	}
	generated quantities {
		real variance_y = pow(sigma_y,2);
	}
	
	\end{lstlisting}
\end{multicols}
\end{minipage}

The lack of blocks in SlicStan makes it more flexible in terms of order of statements. The code here is written to follow more closely than Stan the \textit{generative story}: we firstly define the prior distribution over parameters, and then specify how we believe data was generated from them. We also keep declarations of variables close to where they have been used: for example,  \lstinline{sigma_y} is defined right before it is used in the definition of \lstinline|variance_y|. This model can be expressed in SlicStan by using any order of the statements, provided that variables are not used before they are declared. In Stan this is not always possible and may result in closely related statements being located far away from each other.

With SlicStan there is no need to understand when different statements are executed in order to perform inference. The SlicStan code is translated to the hand-optimised Stan code, as specified by the manual, without any annotations from the user, apart from what the input data to the model is. In Stan, however, an inexperienced Stan programmer might have attempted to define the \lstinline|transformed data| variables \lstinline|alpha| and \lstinline|beta| in the \lstinline|data| block, which would result in a syntactic error. Even more subtly, they could have defined \lstinline|alpha|, \lstinline|beta| and \lstinline|variance_y| all in the \lstinline|transformed parameters| block, in which case the program will compile to a less efficient, semantically equivalent model.

\subsection{Code Refactoring} \label{ssec:refactor}
The next example is adapted from \cite[p.~202]{StanManual}, and shows how the absence of program blocks can lead to easier to refactor code. We start from a simple model, standard linear regression, and show what changes need to be made in both SlicStan and Stan, in order to change the model to account for measurement error. 
The initial model is a simple Bayesian linear regression with $N$ predictor points $\mathbf{x}$, and $N$ outcomes $\mathbf{y}$. It has 3 parameters --- the intercept $\alpha$, the slope $\beta$, and the amount of noise $\sigma$. In other words,
$\mathbf{y} \sim \normal(\alpha \mathbf{1} + \beta \mathbf{x}, \sigma I)$.

If we want to account for measurement noise, we need to introduce another vector of variables $\mathbf{x}_{meas}$, which represents the \emph{measured} predictors (as opposed to the true predictors $\mathbf{x}$). We postulate that the values of $\mathbf{x}_{meas}$ are noisy (with standard deviation $\tau$) versions of $\mathbf{x}$: 
$\mathbf{x}_{meas} \sim \normal(\mathbf{x}, \tau I)$.

The next page shows these two models written in SlicStan (left) and Stan (right). Ignoring all the lines/corrections in red gives us the initial regression model, the one \emph{not} accounting for measurement errors. The entire code, including the red corrections, gives us the second regression model, the one that \emph{does} account for measurement errors. Transitioning from model one to model two requires the following corrections: 

\begin{itemize}
	\item \textbf{In SlicStan:} 
	\begin{itemize}
		\item Delete the \lstinline|data| keyword for $\mathbf{x}$ (line 2).
		\item Introduce \emph{anywhere} in the program statements declaring the measurements $\mathbf{x}_{meas}$, their deviation $\tau$, the now parameter $\mathbf{x}$, and its hyperparameters $\mu_x, \sigma_x$ (lines 11--17).
	\end{itemize}
	\item \textbf{In Stan:}
	\begin{itemize}
		\item Move $\mathbf{x}$'s declaration from \lstinline|data| to \lstinline|parameters| (line 5 and line 9).
		\item Declare $\mathbf{x}_{meas}$ and $\tau$ in \lstinline|data| (lines 3--4).
		\item Declare $\mathbf{x}$'s hyperparameters $\mu_x$ and $\sigma_x$ in \lstinline|parameters| (lines 10--11).
		\item Add statements modelling $\mathbf{x}$ and $\mathbf{x}_{meas}$ in \lstinline|model| (lines 18--19).
	\end{itemize}
\end{itemize}

Performing the code refactoring requires the same amount of code in SlicStan and Stan. However, in SlicStan the changes interfere much less with the code already written. We can add statements extending the model anywhere (as long variables are declared before they are used). In Stan, on the other hand, we need to modify each block separately. This example demonstrates a successful step towards our aim of making Stan more compositional --- composing programs is easier in SlicStan.

\begin{minipage}[t][10.7cm][t]{\linewidth}
\vspace{-8pt} 
\begin{multicols}{2} \label{refactor}
	\centering
	\textbf{Regression in SlicStan}
	\vspace{1cm}
\begin{lstlisting}[numbers=left,numbersep=\numbdist,numberstyle=\tiny\color{\numbcolor},basicstyle=\small]	
	data int N;
	$\hbox{\stkw{data}}$ real[N] x; 
	%%real mu_x;%%
	%%real sigma_x;%%
	%%data real[N] x_meas;%%
	%%data real tau;%%
	
	real alpha ~ normal(0, 10);
	real beta ~ normal(0, 10);
	real sigma ~ cauchy(0, 5);
	data real[N] y;
	
	for(i in 1:N){
		%%x[i] ~ normal(mu_x, sigma_x);%%
		%%x_mean[i] ~ normal(x[i], tau);%%
		y[i] ~ normal(alpha + beta*x[i], sigma);
	}
\end{lstlisting}
\vspace{4cm}
\textbf{Regression in Stan}
\begin{lstlisting}[numbers=left,numbersep=\numbdist,numberstyle=\tiny\color{\numbcolor},basicstyle=\small]
	data {
		int N;	
		%%real[N] x_meas;%% 
		%%real tau;%% 	
		$\hbox{\stkw{real}}$$\hbox{\stlst{[N] x;}}$
		real[N] y; 
	}
	parameters {
		%%real[N] x;%% 
		%%real mu_x;%% 
		%%real sigma_x;%% 
		real alpha; 
		real beta;
		real sigma; 
	}
	model {
		alpha ~ normal(0, 10);
		beta ~ normal(0, 10);
		sigma ~ cauchy(0, 5);
		for(i in 1:N){
			%%x[i] ~ normal(mu_x, sigma_x);%%
			%%x_mean[i] ~ normal(x[i], tau);%%
			y[i] ~ normal(alpha + beta*x[i], sigma); }
	}
\end{lstlisting}
\end{multicols}
\end{minipage}
\vspace{-8pt}

\subsection{Code Reuse} \label{ssec:encaps}

Finally, we demonstrate the usage of more flexible functions in SlicStan, which allow for better code reuse, and therefore can lead to shorter, more readable code. 
In the introduction of this paper, we presented a transformation that is commonly used when specifying hierarchical model --- the \textit{non-centred parametrisation} of a normal variable. 
In brief, an MCMC sampler may have difficulties in exploring a posterior density well, if there exist strong non-linear dependencies between variables. In such cases, we can \emph{reparameterise} the model: we can express it in terms of different parameters, so that the original parameters can be recovered. In the case of a normal variable $x \sim \normal(\mu, \sigma)$, we define it as $x = \mu + \sigma x'$, where $x' \sim \normal(0,1)$. We explain in more detail the usage of the non-centered parametrisation in Appendix~\ref{ap:ncp}.

In this section, we show the  ``Eight Schools'' example \cite[p.~119]{Gelman2013}, which also uses non-centred parametrisation in order to improve performance. 
Eight schools study the effects of their SAT-V coaching program. The input data is the estimated effects $\mathbf{y}$ of the program for each of the eight schools, and their shared standard deviation $\boldsymbol{\sigma}$. The task is to specify a model that accounts for errors, by considering the observed effects to be noisy estimates of the \emph{true effects} $\boldsymbol{\theta}$. Assuming a Gaussian model for the effects and the noise, we have $\mathbf{y} \sim \normal(\boldsymbol{\theta}, \boldsymbol{\sigma} I)$ and $\boldsymbol{\theta} \sim \normal(\mu\mathbf{1}, \tau I)$.

Below is this model written in SlicStan (left) and Stan (right, adapted from Stan's GitHub repository \cite{StanGitHub}). In both cases, we use non-centred reparameterisation to improve performance: in Stan, the coaching effect for the \lstinline|i|\textsuperscript{th} school, \lstinline|theta[i]|, is declared as a transformed parameter obtained from the standard normal variable \lstinline|eta[i]|; in SlicStan, we can once again make use of the non-centred reparameterisation function \lstinline|my_normal|.

\begin{minipage}[t][8.7cm][t]{\linewidth}
\vspace{1pt}
\begin{multicols}{2} \label{schools}
	\centering
	\textbf{``Eight Schools'' in SlicStan}
	\begin{lstlisting}[numbers=left,numbersep=\numbdist,numberstyle=\tiny\color{\numbcolor},basicstyle=\small]
	real my_normal(real m, real v){
		real std ~ normal(0, 1); 
		return v * std + m;
	}
	
	data real[8] y;
	data real[8] sigma;
	real[8] theta;
	
	real mu;
	real tau;
	
	for (i in 1:8){
		theta[i] = my_normal(mu, tau);
		y[i] ~ normal(theta[i], sigma[i]);
	}	
	\end{lstlisting}
	\vspace{3cm}
	\textbf{``Eight Schools'' in Stan}
	\begin{lstlisting}[numbers=left,numbersep=\numbdist,numberstyle=\tiny\color{\numbcolor},basicstyle=\small]
	data {
		real y[8]; 
		real sigma[8]; 
	}
	parameters {
		real mu; 
		real tau; 
		real theta_std[8];
	}
	transformed parameters {
		real theta[8]; 
		for (j in 1:8){theta[j] = mu + tau * theta_std[i];}
	}
	model {
		for (j in 1:8){
			y[i] ~ normal(theta[i], sigma[i]);$\footnotemark$
			theta_std[i] ~ normal(0, 1); 	
		}
	}
	\end{lstlisting}
\end{multicols}
\end{minipage}

\footnotetext{In the full version of Stan these statements can be ``vectorised'' for efficiency, e.g. \lstinline|y ~ normal(theta,sigma);|}

\vspace{6pt}
One advantage of the original Stan code compared to SlicStan is the flexibility the user has to name all model parameters. In Stan, the auxiliary standard normal variables \lstinline|theta_std| are named by the user, while in SlicStan, the names of parameters defined inside of a function are automatically generated, and might not correspond to the names of transformed parameters of interest. 
All parameter names are important, as they are part of the output of the sampling algorithm, which is shown to the user. Even though in this case the auxiliary parameters were introduced solely for performance reasons, inspecting their values in Stan's output can be useful for debugging purposes.

\section{Related Work}
\label{sec:related}

There exists a range of probabilistic programming languages and systems. Stan's syntax is inspired by that of BUGS \cite{BUGS}, which uses Gibbs sampling to perform inference. Other languages include Anglican \cite{Anglican}, Church \cite{Church} and Venture \cite{Venture}, which focus on expressiveness of the language and range of supported models. They provide clean syntax and formalised semantics, but use less efficient, more general-purpose inference algorithms. 
The Infer.NET framework \cite{InferNET} uses an efficient inference algorithm called expectation propagation, but supports a limited range of models. Turing \cite{Turing} allows different inference techniques to be used for different sub-parts of the model, but requires the user to explicitly specify which inference algorithms to use as well as their hyperparameters

More recently, there has been the introduction of \textit{deep probabilistic programming}, in the form of Edward \cite{Edward, Edward2} and Pyro \cite{Pyro}, which focus on using deep learning techniques for probabilistic programming. Edward and Pyro are built on top of the deep learning libraries TensorFlow \cite{Tensorflow} and PyTorch \cite{PyTorch} respectively, and support a range of efficient inference algorithms. However, they lack the conciseness and formalism of some of the other systems, and it many cases require sophisticated understanding of inference. 

Other languages and systems include Hakaru \cite{Hakaru}, Figaro \cite{Figaro}, Fun \cite{Fun}, Greta \cite{Greta} and many others.

The rest of this section addresses related work done mostly within the programming languages community, which focuses on the semantics (\autoref{ssec:relatedsem}), static analysis (\autoref{ssec:relatedstatic}), and usability (\autoref{ssec:relatedusab}) of probabilistic programming languages. A more extensive overview of the connection between probabilistic programming and programming language research is given by \citet{GordonPP}.

\subsection{Formalisation of Probabilistic Programming Languages} \label{ssec:relatedsem}
There has been extensive work on the formalisation of probabilistic programming languages syntax and semantics. A widely accepted denotational semantics formalisation is that of \citet{Kozen81}. Other work includes a domain-theoretic semantics \cite{Plotkin},
measure-theoretic semantics \cite{Fun, ScibiorPPMonads, Toronto2015}, operational semantics \cite{Dal2012, MarcinICFP2016, Staton2016}, and more recently, categorical formalisation for higher-order probabilistic programs \cite{HeunenStaton}. Most previous work specifies either a measure-theoretic denotational semantics, or a sampling-based operational semantics. Some work \cite{Staton2016, Huang2016, HurNRS15} gives both denotational and operational semantics, and shows a correspondence between the two.  

The density-based semantics we specify for Stan and SlicStan is inspired by the work of \citet{HurNRS15}, who give an operational sampling-based semantics to the imperative language \textsc{Prob}. Intuitively, the difference between the two styles of operational semantics is:
\begin{itemize}
\item Operational \textit{density-based} semantics specifies how a program $S$ is executed to evaluate the (unnormalised) posterior density $p^*(\params \mid \data)$ at some specific point $\params$ of the parameter space.
\item Operational \textit{sampling-based} semantics specifies how a program $S$ is executed to evaluate the (unnormalised) probability $p^*(\mathbf{t})$ of the program generating some specific trace of samples $\mathbf{t}$.
\end{itemize}

Refer to Appendix~\ref{ap:sampling_semantics} for examples and further discussion of the differences between density-based and sampling-based semantics.

\subsection{Static Analysis for Probabilistic Programming Languages} \label{ssec:relatedstatic}
Work on static analysis for probabilistic programs includes several papers that focus on improving efficiency of inference. R2 \cite{R2} applies a semantics-preserving transformation to the probabilistic program, and then uses a modified version of the Metropolis--Hastings algorithm that exploits the structure of the model. This results in more efficient sampling, which can be further improved by \textit{slicing} the program to only contain parts relevant to estimating a target probability distribution \cite{Slicing}.
\citet{DataFlow2013} present a new inference algorithm that is based on data-flow analysis.
Hakaru \cite{Hakaru} is a relatively new probabilistic programming language embedded in Haskell, which performs automatic and semantic-preserving transformations on the program, in order to calculate conditional distributions and perform exact inference by computer algebra. 
The PSI system \cite{PSI2016} analyses probabilistic programs using a symbolic domain, and outputs a simplified expression representing the posterior distribution. 
The Julia-embedded language Gen \cite{Gen} uses type inference to automatically generate inference tactics for different sub-parts of the model. Similarly to Turing, the user then combines the generated tactics to build a model-specific inference algorithm.

With the exception of the work on slicing \cite{Slicing}, which is shown to work with Church and Infer.NET, each of the above systems either uses its own probabilistic language or the method is applicable only to a restricted type of models (for example boolean probabilistic programs). SlicStan is different in that it uses information flow analysis and type inference in order to self-optimise to Stan --- a scalable probabilistic programming language with a large user-base.  

\subsection{Usability of Probabilistic Programming Languages} \label{ssec:relatedusab} \label{ssec:usab}
This paper also relates to the line of work on usability of probabilistic programming languages.
\citet{Tabular} implement a schema-driven language, Tabular, which allows probabilistic programs to be written as annotated relational schemas. Fabular \cite{Fabular} extends this idea by incorporating syntax for hierarchical linear regression inspired by the lme4 package \cite{lmer}. BayesDB \cite{BayesDB} introduces BQL (Bayesian Query Language), which can be used to answer statistical questions about data, through SQL-like queries. 
Other work includes visualisation of probabilistic programs, in the form of graphical models \cite{BUGS, Vibes, GorinovaIDE}, and  
more data-driven approaches, such as synthesising programs from relational datasets \cite{PPSynth2015, PPSynth2017}.

\section{Conclusion} \label{sec:conc}

Probabilistic inference is a challenging task. As a consequence, existing probabilistic languages are forced to trade off efficiency of inference for range of supported models and usability. For example, Stan, an increasingly popular probabilistic programming language, makes efficient scalable automatic inference possible, but sacrifices compositionality of the language. 

This paper formalises the syntax of a core subset of Stan and gives its operational \textit{density-based semantics}; it introduces a new, compositional probabilistic programming language, SlicStan; and it gives a semantic-preserving procedure for translating SlicStan to Stan.
SlicStan adopts an \emph{information-flow type system}, that captures the taxonomy classes of variables of the probabilistic model. The classes can be inferred to 
automatically optimise the program for probabilistic inference. 
To the best of our knowledge, this work is the first formal treatment of the Stan language. 

We show that the use of static analysis and formal language treatment can facilitate efficient black-box probabilistic inference, and improve usability.  
Looking forward, it would be interesting to formalise the usage of pseudo-random generators inside of Stan. Variables in the \lstinline|generated$$quantities| block can be generated using pseudo-random number generators. In other words, the user can explicitly compose Hamiltonian Monte Carlo with forward (ancestral) sampling to improve inference performance. SlicStan can be extended to automatically determine what the most efficient way to sample a variable is,
which could significantly improve usability. 
Another interesting future direction would be to adapt the sampling-based semantics of \citet{HurNRS15} to SlicStan and establish how the density-based semantics of this paper corresponds to it. 
 
\newpage
\begin{acks}             
We thank Bob Carpenter and the Stan team for insightful discussions, and George Papamakarios for useful comments.
Maria Gorinova was supported by the EPSRC Centre for Doctoral Training in Data Science, funded by the UK Engineering and Physical Sciences Research Council (grant EP/L016427/1) and the University of Edinburgh.
\end{acks}

\bibliography{icfpbib} 

\cleardoublepage

\iftoggle{LONG}{ \appendix 
\section{Definitions and Proofs} \label{ap:proofs}
\subsection{Definitions}


\begin{definition}[Assigns-to set $\assset(S)$] \label{def:assset}
	\assset(S) is the set that contains the names of global variables that have been assigned to within the statement S. It is defined recursively as follows:	\vspace{-10pt}
	\begin{multicols}{2}\noindent
		$\assset(x[E_1]\dots[E_n] = E) = \{x\}$ \\
		$\assset(\{T x; S\}) = \assset(S) \setminus \{x\}$\\
		$\assset(S_1; S_2) = \assset(S_1) \cup \assset(S_2) $ \\
		$\assset(\kw{skip}) = \emptyset $ \\	
		$\assset(\kw{if}(E)\; S_1 \;\kw{else}\; S_2) = \assset(S_1)\cup \assset(S_2) $
		$\assset(\kw{for}(x\;\kw{in}\;E_1:E_2)\;S) = \assset(S) \setminus \{x\}$
	\end{multicols}\vspace{-12pt}
\end{definition} 

\begin{definition}[Reads set $\readset(S)$] \label{def:readset}
	\readset(S) is the set that contains the names of global variables that have been read within the statement S. It is defined recursively as follows:	\vspace{-10pt}
	\begin{multicols}{2}\noindent
		$\readset(x) = \{x\}$ \\
		$\readset(c) = \emptyset$ \\
		$\readset([E_1,\dots,E_n]) = \bigcup_{i=1}^n\readset(E_i)$ \\
		$\readset(E_1[E_2]) = \readset(E_1) \cup \readset(E_2) $ \\
		$\readset(f(E_1,\dots,E_n)) = \bigcup_{i=1}^n\readset(E_i)$\\
		$\readset(F(E_1,\dots,E_n)) = \bigcup_{i=1}^n\readset(E_i)$\\		
		$\readset(x[E_1]\dots[E_n] = E) = \bigcup_{i=1}^n\readset(E_i) \cup \readset(E)$ \\
		$\readset(\{T x; S\}) = \readset(S) \setminus \{x\}$\\
		$\readset(S_1; S_2) = \readset(S_1) \cup \readset(S_2) $ \\
		$\readset(\kw{skip}) = \emptyset $ \\	
		$\readset(\kw{if}(E)\; S_1 \;\kw{else}\; S_2) = \readset(E)\cup\readset(S_1)\cup \readset(S_2) $
		$\readset(\kw{for}(x\;\kw{in}\;E_1:E_2)\;S) = \readset(E_1) \cup \readset(E_2) \cup \readset(S) \setminus \{x\}$
	\end{multicols}\vspace{-12pt}
\end{definition} 

\begin{definition}[Type of expression $E$ in $\Gamma$] \label{def:lvaluetypes}
	$\Gamma(E)$ is the type of the expression $E$ with respect to $\Gamma$:\\
\noindent
$\Gamma(x) = (\tau, \ell)$ for $x: (\tau, \ell) \in \Gamma$ \\
$\Gamma(c) = (\kw{ty}(c), \lev{data})$ \\
$\Gamma([E_1,\dots,E_n]) = (\tau[], \ell_1 \sqcup \dots \sqcup \ell_n )$ if $\Gamma(E_i) = (\tau, \ell_i)$ for $i \in 1..n$, and $\ell' \sqcup \ell''$ denoting the least upper bound of $\ell'$ and $\ell''$, , and it is undefined otherwise.  \\
$\Gamma(E_1[E_2]) = (\tau, \ell \sqcup \ell')$ if $\Gamma(E_1)=(\tau[], \ell)$ and $\Gamma(E_2)=(\kw{int}, \ell')$, and it is undefined otherwise.
$\Gamma(f(E_1,\dots, E_n)) = (\tau, \ell)$ if $\Gamma(E_i) = (\tau_i, \ell_i)$ for $i \in 1..n$, and $f:(\tau_1, \ell_1), \dots, (\tau_n, \ell_n) \rightarrow (\tau, \ell)$.
\end{definition} 

The \textit{elaboration relation} transforms SlicStan statements and expressions to Core Stan statements and expressions. Thus, throughout this document, we use the terms ``elaborated statement'' and ``elaborated expression'' to mean a Core Stan statement and a Core Stan expression respectively. 

\begin{definition}[Vectorising functions $v_{\Gamma}, v_E, v_S$] \label{def:vector} ~
	\begin{enumerate}
		\item $v_{\Gamma}(\Gamma, n) \deq \{x:(\tau[n], \ell)\}_{x:(\tau, \ell) \in \Gamma}$, for any typing environment $\Gamma$.
		\item $v_E(x, \Gamma, E)$ is defined for a variable $x$, typing environment $\Gamma$, and an elaborated expression $E$:
		\vspace{-8pt}
		\begin{multicols}{2}
			$v_E(x, \Gamma, x') = \begin{cases} x'[x] &\text{if } x' \in \dom (\Gamma) \\ x' &\text{if } x' \notin \dom(\Gamma) \end{cases}$ \\
			$v_E(x, \Gamma, c) = c$ \\
			$v_E(x, \Gamma, [E_1,\dots, E_n]) = [v_E(E_1),\dots,v_E(E_n)]$ \\
			$v_E(x, \Gamma, E_1[E_2]) = v_E(E_1)[v_E(E_2)]$ \\
			$v_E(x, \Gamma, f(E_1,\dots,E_n)) = f(v_E(E_1),\dots,v_E(E_n))$
		\end{multicols} \vspace{-6pt}
		\item $v_{S}(x, \Gamma, S)$ is defined for a variable $x$, typing environment $\Gamma$, and an elaborated statement $S$: 
		\vspace{-8pt}
		\begin{multicols}{2}
			$v_{S}(x, \Gamma, L = E) = (v_E(L) = v_E(E))$ \\
			$v_{S}(x, \Gamma, S_1; S_2) = v_{S}(x, \Gamma, S_1);v_{S}(x, \Gamma, S_2) $ \\
			$v_{S}(x, \Gamma, \kw{if}(E)\; S_1\; \kw{else}\; S_2) =$ \\
			$\text{ }\qquad\kw{if}(v_E(E))\; v_{S}(S_1)\; \kw{else}\; v_{S}(S_2)$ \\
			$v_{S}(x, \Gamma, \kw{for}(x'\;\kw{in}\;E_1:E_2)\;S') =$ \\
			$\text{ }\qquad\kw{for}(x'\;\kw{in}\;v_E(E_1):v_E(E_2))\;v_{S}(S'))$ \\
			$v_{S}(\kw{skip}) = \kw{skip}$
		\end{multicols} \vspace{-6pt}
	\end{enumerate}
\end{definition}

\begin{definition}[$\readset_{\Gamma \vdash \ell}(S)$]\label{def:read_level_set}
$\readset_{\Gamma \vdash \ell}(S)$ is the set that contains the names of global variables that have been read at level $\ell$ with respect to $\Gamma$ within the statement S. It is defined recursively as follows:	\vspace{-10pt}
\begin{multicols}{2}\noindent
	$\readset_{\Gamma \vdash \ell}(x) = \emptyset $ \\
	$\readset_{\Gamma \vdash \ell}(x[E_1]\dots[E_n]) = \begin{cases}
	\bigcup_{i=1}^n\readset(E_i) & \text{if}~\Gamma(x) = \ell \\
	\emptyset & \text{otherwise}
	\end{cases}$ \\	
	$\readset_{\Gamma \vdash \ell}(L = E) = 
	\begin{cases}
	\readset_{\Gamma \vdash \ell}(L) \cup \readset(E) & \text{if}~\Gamma(L) = \ell \\
	\emptyset & \text{otherwise}
	\end{cases}$	 \\
	$\readset_{\Gamma \vdash \ell}(S_1; S_2) = \readset_{\Gamma \vdash \ell}(S_1) \cup \readset_{\Gamma \vdash \ell}(S_2) $ \\
	$\readset_{\Gamma \vdash \ell}(\kw{skip}) = \emptyset $ \\	
	For $S = \kw{if}(E)\; S_1 \;\kw{else}\; S_2$, and \\ $A =  \readset_{\Gamma \vdash \ell}(S_1) \cup \readset_{\Gamma \vdash \ell}(S_2)$: \\
	$\readset_{\Gamma \vdash \ell}(S) = 
	\begin{cases}
	\readset(E) \cup A & \text{if}~A \neq \emptyset \\
	\emptyset & \text{otherwise}
	\end{cases}$ \\
	For $S = \kw{for}(x\;\kw{in}\;E_1:E_2)\;S'$, and\\
	$A = \readset_{\Gamma \vdash \ell}(S')$\\
	$\readset_{\Gamma \vdash \ell}(S) = 
		\begin{cases}
		\readset(E_1) \cup \readset(E_2) \cup A & \text{if}~A \neq \emptyset \\
		\emptyset & \text{otherwise}
		\end{cases}$
\end{multicols}\vspace{-12pt}

\end{definition}

\begin{definition}[$\assset_{\Gamma \vdash \ell}(S)$]\label{def:write_level_set}
$\assset_{\Gamma \vdash \ell}(S) \deq \{x \in \assset(S) \mid \Gamma(x) = (\tau, \ell) \text{ for some } \tau\}$
\end{definition}

\begin{definition}{Merging Stan programs $\stan_1; \stan_2$} \label{def:stanmerge} \\
	Let $\stan_1$ and $\stan_2$	be two Stan programs, such that for $i = 1, 2$:
	\begin{align*}
	\stan_i = \squad\, &\databl{\,\Gamma_d^{(i)}\,} \\ &\kw{transformed data}\{\,\Gamma_d^{(i)'}\hquad S_d^{(i)'}\,\} \\ &\paramsbl{\,\Gamma_m^{(i)}\,} \\ &\kw{transformed parameters}\{\,\Gamma_m^{(i)'}\hquad S_m^{(i)'}\,\} \\ &\modelbl{\,\Gamma_m^{(i)''}\hquad S_m^{(i)''}\,} \\ &\kw{generated quantities}\{\,\Gamma_q^{(i)}\hquad S_q^{(i)}\,\}
	\end{align*}
	
	The sequence of $\stan_1$ and $\stan_2$, written $\stan_1; \stan_2$ is then defined as:
	\begin{align*}
	\stan_i = \squad\, &\databl{\,\Gamma_d^{(1)},\Gamma_d^{(2)}\,} \\ &\kw{transformed data}\{\,\Gamma_d^{(1)'},\Gamma_d^{(2)'}\hquad S_d^{(1)'};S_d^{(2)'}\,\} \\ &\paramsbl{\,\Gamma_m^{(1)},\Gamma_m^{(2)}\,} \\ &\kw{transformed parameters}\{\,\Gamma_m^{(1)'},\Gamma_m^{(2)'}\hquad S_m^{(1)'};S_m^{(2)'}\,\} \\ &\modelbl{\,\Gamma_m^{(1)''},\Gamma_m^{(2)''}\hquad S_m^{(1)''};S_m^{(2)''}\,} \\ &\kw{generated quantities}\{\,\Gamma_q^{(1)},\Gamma_q^{(2)}\hquad S_q^{(1)};S_q^{(2)}\,\}
	\end{align*}
	
\end{definition}

\subsection{Proof of Semantic Preservation of Shredding}  \label{ap:proofshred}

\begin{lemma}\label{lemma:dom}
	If $s \models \Gamma$ then $\dom(s)=\dom(\Gamma)$.
\end{lemma}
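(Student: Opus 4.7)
The proof is essentially immediate from the single defining rule \textsc{State} for the conformance relation. The plan is to invoke that rule directly: the only way to derive $s \models \Gamma$ is to have $s = (x_i \mapsto V_i)^{i \in I}$ and $\Gamma = (x_i : (\tau_i, \ell_i))^{i \in I}$ for a common index set $I$ (with $V_i \models \tau_i$ for each $i \in I$). Reading off the two sides gives $\dom(s) = \{x_i \mid i \in I\} = \dom(\Gamma)$, which is exactly the claim.

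Concretely, I would argue by case analysis on the derivation of $s \models \Gamma$. Since \textsc{State} is the only rule producing such a judgment (both in the Stan system, rule \textsc{Stan State}, and its SlicStan extension), inversion immediately yields the common index set $I$. No induction on statement or expression structure is required, nor any appeal to the operational semantics.

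There is no real obstacle here; the lemma is a bookkeeping observation used later to justify that lookups $s(x)$ are well-defined whenever $x \in \dom(\Gamma)$ and $s \models \Gamma$. The only subtlety worth flagging in the proof is that the two conformance rules stated earlier (one for Core Stan, one for SlicStan) share the same shape, so the argument is uniform: in both, the premise enforces that the variables on the left and on the right are indexed by the same set $I$ with distinct $x_i$'s, so the conclusion on domains follows by inspection.
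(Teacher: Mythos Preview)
Your proposal is correct and matches the paper's own proof, which simply states ``By inspection of the definition of $s \models \Gamma$.'' You have spelled out that inspection in more detail, but the approach is identical: inversion on the single \textsc{State} rule yields a common index set for both $s$ and $\Gamma$, hence equal domains.
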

\begin{proof}
	By inspection of the definition of $s \models \Gamma$.
\end{proof}

\begin{lemma}\label{lemma:assset}
	If $S$ is well-typed in some environment $\Gamma$, $x \in \dom(s)$ and $(s, S) \Downarrow s'$ and $x \notin \assset(S)$ then $s(x)=s'(x)$.
\end{lemma}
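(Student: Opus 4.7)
The plan is to prove this by rule induction on the derivation of $(s, S) \Downarrow s'$, using Definition A.1 of $\assset$. The statement is a textbook fact about imperative programs: variables outside the assigns-to set retain their values. The typing hypothesis is needed only indirectly, to justify the usual bound-variable freshness conventions in the \textsc{Eval ForTrue} case.

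First I would dispatch the easy cases. For \textsc{Eval Skip}, $s = s'$, so nothing to show. For \textsc{Eval Assign} with $L = y[E_1]\ldots[E_n]$, we have $\assset(L = E) = \{y\}$ and $s' = s[y \mapsto U']$, so $x \notin \{y\}$ gives $x \ne y$ and hence $s'(x) = s(x)$ by definition of the update operator. For \textsc{Eval Seq} with $(s, S_1) \Downarrow s_1$ and $(s_1, S_2) \Downarrow s'$, note $\assset(S_1;S_2) = \assset(S_1) \cup \assset(S_2)$, so $x$ lies in neither summand; I apply the IH to $S_1$ to get $s(x) = s_1(x)$ (noting $x \in \dom(s_1)$ since $S_1$ only updates or extends the state), and then the IH to $S_2$ to get $s_1(x) = s'(x)$. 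The two \textsc{Eval If} cases apply the IH to the executed branch, using $\assset(\kw{if}(E)\,S_1\,\kw{else}\,S_2) = \assset(S_1) \cup \assset(S_2)$.

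The main obstacle is the \textsc{Eval ForTrue} case, because $\assset(\kw{for}(i\,\kw{in}\,E_1:E_2)\,S) = \assset(S) \setminus \{i\}$ removes the loop variable. If $x = i$, then $x \notin \assset(\kw{for}\dots)$ does not imply $x \notin \assset(S)$, and the induction hypothesis on the body would fail. I resolve this by appealing to the paper's convention of identifying statements up to consistent renaming of bound variables: since $x \in \dom(s)$ is fixed, I may rename the loop variable $i$ so that $i \ne x$ and $i \notin \dom(s)$. Under this renaming $x \ne i$, so $x \notin \assset(S) \setminus \{i\}$ together with $x \ne i$ yields $x \notin \assset(S)$. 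The premises of the rule give $(s[i \mapsto c_1], S) \Downarrow s^\dagger$ and $(s^\dagger[-i], \kw{for}(i\,\kw{in}\,(c_1{+}1):c_2)\,S) \Downarrow s''$; the IH applied to the body (noting $x \in \dom(s[i \mapsto c_1])$ and $s[i \mapsto c_1](x) = s(x)$ because $i \ne x$) gives $s(x) = s^\dagger(x)$, and removing $i$ preserves this since $i \ne x$. A second IH on the shorter-derivation recursive for call gives $s^\dagger[-i](x) = s''(x)$, and chaining the equalities yields $s(x) = s''(x)$. The \textsc{Eval ForFalse} case is immediate since $s = s'$. This completes the induction.
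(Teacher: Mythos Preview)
Your proposal is correct and matches the paper's approach exactly: the paper's proof is the single line ``By induction on the derivation $(s, S) \Downarrow s'$,'' and you have supplied the routine case analysis that this entails. One small observation: in the \textsc{Eval ForTrue} case you could also exploit the typing hypothesis more directly, since the \textsc{(For)} rule already enforces $i \notin \assset(S)$ for the loop body, making the subtraction in $\assset(S)\setminus\{i\}$ vacuous and rendering the $\alpha$-renaming step unnecessary once you also use $i \notin \dom(\Gamma)$; but your route via the bound-variable convention is equally valid.
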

\begin{proof}
	By induction on the derivation $(s, S) \Downarrow s'$.
\end{proof}

\begin{lemma} \label{lem:same_effect}
If $(s_1, S) \Downarrow s_1'$ and $(s_2, S) \Downarrow s_2'$ for some $s_1, s_1', s_2, s_2'$, and $s_1(x) = s_2(x)$ for all $x \in A$, where $A \supseteq \readset(S)$, then $s_1'(y) = s_2'(y)$ for all $y \in A\cup\assset(S)$.	
\end{lemma}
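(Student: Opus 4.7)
The plan is to proceed by induction on the derivation of $(s_1, S) \Downarrow s_1'$, paired throughout with the parallel derivation $(s_2, S) \Downarrow s_2'$. Because $\Downarrow$ is deterministic and depends only on the values of the variables actually read, the two derivations will have identical shape at every level, which lets the induction ``track'' both executions simultaneously. A first preliminary is the analogous expression-level fact: if $s_1(x) = s_2(x)$ for every $x \in \readset(E)$ and $(s_i, E) \Downarrow V_i$ for $i = 1,2$, then $V_1 = V_2$. This is a routine induction on the expression-evaluation rules using the $\readset$ definition.

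The Skip case is immediate since $s_1' = s_1$, $s_2' = s_2$, and $\assset(\kw{skip}) = \emptyset$. For the Assign case $L = x[E_1]\dots[E_n] = E$, the expression lemma forces the indices $V_1, \dots, V_n$ and the right-hand value $V$ to coincide in both runs, so both updates produce the same new array at variable $x$; any $y \in A \setminus \{x\}$ then has $s_1'(y) = s_1(y) = s_2(y) = s_2'(y)$. For $S_1; S_2$, applying the inductive hypothesis to $S_1$ upgrades the agreement set from $A$ to $A' = A \cup \assset(S_1)$; since $A \supseteq \readset(S_1;S_2) \supseteq \readset(S_2)$ we still have $A' \supseteq \readset(S_2)$, so a second application to $S_2$ at the intermediate states yields agreement on $A' \cup \assset(S_2) = A \cup \assset(S_1;S_2)$, as required.

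For the if-statement, the expression lemma applied to the guard $E$ guarantees both runs descend into the same branch $S_i$; the inductive hypothesis then covers every $y \in A \cup \assset(S_i)$. For the remaining conclusion set, namely $y \in \assset(S_j)$ for the branch $j \neq i$ that was \emph{not} executed, Lemma~\ref{lemma:assset} gives $s_1'(y) = s_1(y)$ and $s_2'(y) = s_2(y)$, and if such a $y$ lies in $A$ the equality $s_1(y) = s_2(y)$ closes the case. For \kw{for} loops the ForFalse case reduces to Lemma~\ref{lemma:assset} just as in Skip, while ForTrue requires a secondary induction on $c_2 - c_1$: the expression lemma equates the bounds, we extend the agreement set by $\{x\}$ at value $c_1$ in both states, apply the primary inductive hypothesis to the body $S$ to obtain agreement on $A \cup \{x\} \cup \assset(S)$, and then feed the tail loop $\kw{for}(x\;\kw{in}\;(c_1+1):c_2)\;S$ to the secondary induction.

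The main obstacle I anticipate is the careful bookkeeping of the agreement set $A$ across compound statements: one has to verify at each step that the set threaded through the induction still contains $\readset$ of whatever sub-statement comes next, and that the final conclusion set matches $A \cup \assset(S)$ for the outer statement. A lesser but related nuisance is the ``unexecuted branch'' of an if-statement, where variables in $\assset$ of the other branch are only handled because they lie in $A$; this is where combining the inductive hypothesis with Lemma~\ref{lemma:assset} is essential, and any slackness in the $A \supseteq \readset(S)$ side-condition would break the induction on sequences.
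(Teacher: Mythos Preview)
Your approach---structural induction paired with an expression-level agreement lemma---is exactly what the paper's one-line ``by induction on the structure of $S$'' intends, and the overall shape is right.

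There is, however, a genuine gap in your Assign case when $L = x[E_1]\dots[E_n]$ with $n \geq 1$. The evaluation rule reads the \emph{old} array value $U = s(x)$ and writes back $U' = (U[V_1]\dots[V_n] := V)$. Your claim that ``both updates produce the same new array at variable $x$'' therefore needs $s_1(x) = s_2(x)$; but by the paper's Definition of $\readset$, the variable $x$ is \emph{not} in $\readset(x[E_1]\dots[E_n] = E)$, so $x \in A$ is not guaranteed by $A \supseteq \readset(S)$. Concretely: take $S$ to be $x[0] = 5$, $A = \emptyset$, $s_1(x) = [1,2]$, $s_2(x) = [1,3]$; then $s_1'(x) = [5,2] \neq [5,3] = s_2'(x)$, yet $x \in \assset(S) \subseteq A \cup \assset(S)$.

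The same defect appears in your if-case, and you half-acknowledge it: for $y \in \assset(S_j) \setminus (A \cup \assset(S_i))$ with $j$ the unexecuted branch, you write ``if such a $y$ lies in $A$ the equality closes the case''---but nothing forces $y \in A$. (Take the guard true, $S_1$ assigning to $x$, $S_2$ assigning to $y$, and $A$ containing only the guard variable.) A zero-iteration \kw{for} loop fails the same way. In each instance the lemma, exactly as stated, is false; the paper's proof sketch does not confront this. The paper's only \emph{use} of the lemma (inside the proof of the Statement Reordering lemma) applies it with $A = \readset(S) \cup \assset(S)$, where all three problems evaporate. The honest repair is to strengthen the hypothesis to $A \supseteq \readset(S) \cup \assset(S)$ (or, equivalently, patch $\readset$ so that an array-element assignment records the array variable as read); your induction then goes through verbatim.
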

\begin{proof}
By induction on the structure of $S$.
\end{proof}

\begin{restate}{Lemma~\ref{lem:shredisleveled}(Shredding produces single-level statements)}
	$$S \shred[\Gamma] \shredded \implies \singlelevelS{\lev{data}}{S_D} \wedge \singlelevelS{\lev{model}}{S_M} \wedge \singlelevelS{\lev{genquant}}{S_Q}$$
\end{restate}
\begin{proof}
	By rule induction on the derivation of $S \shred \shredded$.
\end{proof}

\begin{restate}{Lemma~\ref{lem:reorder} (Statement Reordering)}
	For statements $S_1$ and $S_2$ that are well-typed in $\Gamma$, if $\readset(S_1)\cap\assset(S_2) = \emptyset$, $\assset(S_1)\cap\readset(S_2) = \emptyset$, and $\assset(S_1)\cap\assset(S_2) = \emptyset$ then $S_1;S_2 \eveq S_2; S_1$.
\end{restate}
\begin{proof}
	Let $R_i = \readset(S_i)$ and $W_i = \assset(S_i)$ for $i = 1,2$.
	Take any state $s$ and assume that $s \models \Gamma$. Suppose that
	$(s, S_1) \Downarrow s_1$,
	$(s, S_2) \Downarrow s_2$,
	$(s_1, S_2) \Downarrow s_{12}$, and
	$(s_2, S_1) \Downarrow s_{21}$.
	We want to prove that $s_{12}=s_{21}$.
	
	By Theorem~\ref{th:eval} and Lemma~\ref{lemma:dom}, we have $\dom(\Gamma)=\dom(s)=\dom(s_1)=\dom(s_2)=\dom(s_{12})=\dom(s_{21})$.
	Now, as $S_1$ writes only to $W_1$, by Lemma~\ref{lemma:assset}, we have that for all variables $x \in \dom(\Gamma)$:
	\begin{equation}\label{eq:1}
	x \notin W_1 \implies s(x) = s_1(x) \wedge s_2(x) = s_{21}(x)
	\end{equation}
	
	But $R_2$ and $W_1$ are disjoint, and $W_2$ and $W_1$ are disjoint, therefore $x \notin W_1$ for all $x \in R_2 \cup W_2$, and hence by Lemma~\ref{lemma:assset}:
	\begin{equation} \label{eq:2}
	x \in R_2 \cup W_2 \implies s(x) = s_1(x) \wedge s_2(x) = s_{21}(x)
	\end{equation}
	
	If two states are equal up to all variables in $\readset(S_2)$, then $S_2$ has the same effect on them (Lemma~\ref{lem:same_effect}).
	Combining this with (\ref{eq:2}) gives us:
	\begin{equation} \label{eq:4}
	x \in R_2 \cup W_2 \implies s_2(x) = s_{12}(x)
	\end{equation}
	
	Next, combining (\ref{eq:2}) and (\ref{eq:4}), gives us:
	\begin{equation} \label{eq:5}
	x \in R_2 \cup W_2 \implies s_2(x) = s_{12}(x) = s_{21}(x)
	\end{equation}
	
	Applying the same reasoning, but starting from $S_2$, we also obtain:
	\begin{align} \label{eq:6}
	x \notin W_2 \implies s(x)=s_2(x) \wedge s_1(x) = s_{12}(x) 
	\\ \label{eq:7}
	x \in R_1 \cup W_1 \implies s_1(x) = s_{21}(x) = s_{12}(x)
	\end{align}
	
	Finally, we have:
	\begin{itemize}
		\item $\forall x\in R_1\cap W_2. s_{12}(x)=s_{21}(x)$, as 	
		$R_1\cap W_2 = \emptyset$;
		
		\item $\forall x\in W_1\cap R_2. s_{12}(x)=s_{21}(x)$, as		
		$W_1\cap R_2 = \emptyset$;
		
		\item $\forall x\in W_1\cap W_2. s_{12}(x)=s_{21}(x)$, as
		$W_1\cap W_2 = \emptyset$;
		
		\item $\forall x\notin W_1 \cup W_2. s_{12}(x)=s_{21}(x)$, by combining (\ref{eq:1}) with (\ref{eq:6});
		
		\item $\forall x\in R_2\cup W_2. s_{12}(x)=s_{21}(x)$, by (\ref{eq:5});
		
		\item $\forall x\in R_1\cup W_1. s_{12}(x)=s_{21}(x)$, by (\ref{eq:7});
	\end{itemize}
	
	This covers all possible cases for $x \in \dom(\Gamma)$, therefore $\forall x \in \dom(\Gamma). s_{12}(x) = s_{21}(x)$. But $\dom(\Gamma) = \dom(s_{12}) = \dom(s_{21})$, thus $s_{12} = s_{21}$.
	 
\end{proof}

\begin{lemma}[] \label{lem:vars_in_exps}
	If $~\Gamma(x) = (\tau, \ell)$, $\Gamma \vdash E :(\tau', \ell')$, and $x \in \readset(E)$ then $\ell \leq \ell'$.
\end{lemma}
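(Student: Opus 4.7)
The plan is to prove the statement by rule induction on the derivation of $\Gamma \vdash E : (\tau', \ell')$. The result is essentially saying that every free variable appearing in a well-typed expression has a level at most the level of the expression itself, which is the standard soundness property for information-flow typing of expressions.

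For the base cases: in \textsc{Const} the set $\readset(c) = \emptyset$, so the premise $x \in \readset(E)$ is false and the conclusion holds vacuously. In \textsc{Var} the expression is $E = y$ with $\readset(E) = \{y\}$, so $x = y$, and we read $\Gamma(x) = (\tau', \ell')$ directly from the rule, yielding $\ell = \ell'$.

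For most inductive cases (\textsc{Arr}, \textsc{ArrEl}, \textsc{PrimCall}), the rules force every immediate subexpression to be typed at exactly the same level $\ell'$ as the whole expression, so once we locate a subexpression $E_i$ with $x \in \readset(E_i)$, the induction hypothesis applied to $\Gamma \vdash E_i : (\tau_i, \ell')$ gives $\ell \leq \ell'$ immediately. The \textsc{ESub} case is handled by transitivity: if $\Gamma \vdash E : (\tau', \ell')$ is derived from $\Gamma \vdash E : (\tau', \ell'')$ with $\ell'' \leq \ell'$, then by the induction hypothesis $\ell \leq \ell''$, hence $\ell \leq \ell'$.

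The main obstacle is the \textsc{FCall} case, since unlike \textsc{PrimCall} (where builtins are level-polymorphic so all arguments match the result level), user-defined functions are monomorphic and their argument levels may be strictly below the return level. Here, given $g : T_1, \dots, T_n \to T$ with $T_i = (\tau_i, \ell_i)$ and $T = (\tau', \ell')$, if $x \in \readset(E_i)$ for some $i$, the induction hypothesis yields $\ell \leq \ell_i$; we then invoke the premise $\ell_i \leq \ell'$ from the \textsc{FDef} rule for $g$ to conclude $\ell \leq \ell_i \leq \ell'$. This is precisely why \textsc{FDef} imposes that constraint on the function signature, so the proof simply has to cite it.
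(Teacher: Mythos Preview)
Your proof is correct and follows exactly the approach the paper takes: rule induction on the derivation of $\Gamma \vdash E : (\tau', \ell')$. The paper's own proof is a one-line ``by induction on the structure of the derivation $\Gamma \vdash E : (\tau', \ell')$,'' and your expansion of the cases---including the observation that \textsc{FCall} requires invoking the $\ell_i \leq \ell$ constraint from \textsc{FDef} (available because the ambient program is assumed well-formed)---is precisely what that one line unpacks to.
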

\begin{proof}
	By induction on the structure of the derivation $\Gamma \vdash E:(\tau', \ell')$. 	
\end{proof}

\begin{lemma}[] \label{lem:exp_in_singleleveled_s}
	If $~\Gamma(E) = (\tau, \ell)$, $\Gamma \vdash \ell'(S)$, and $E$ occurs in $S$, then $\ell \leq \ell'$.
\end{lemma}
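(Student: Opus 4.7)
The plan is to prove the lemma by structural induction on $S$, preceded by a small auxiliary fact about expressions: if $\Gamma \vdash E : (\tau, \ell)$ is derivable, then $\Gamma(E) = (\tau, \ell_0)$ for some $\ell_0 \leq \ell$. This auxiliary fact follows by a routine induction on the derivation of $\Gamma \vdash E : (\tau, \ell)$; the only place where the level changes is \textsc{ESub}, which feeds the inequality through directly, while every other expression rule produces exactly the type that Definition~\ref{def:lvaluetypes} uses to compute $\Gamma(E)$. A corollary is that every syntactic sub-expression appearing in the derivation of $\Gamma \vdash E : (\tau, \ell)$ inherits a minimum-level bound of $\ell$.

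Given this, the main induction on $S$ is straightforward in the non-compound cases. For an assignment $L = E'$ with $\Gamma \vdash \ell'(L = E')$, single-levelness forces $\Gamma(L) = (\tau, \ell')$, so \textsc{Assign} must fire at level $\ell'$, giving $\Gamma \vdash E' : (\tau, \ell')$; the auxiliary fact then bounds by $\ell'$ every expression occurring inside $L$ or $E'$. The sequence case follows by observing that $\Gamma \vdash S_i : \ell'$ (by \textsc{Seq}) and $\assset(S_i) \subseteq \assset(S_1; S_2)$, so each $S_i$ is itself single-level at $\ell'$ and the induction hypothesis applies. \textsc{Skip} is vacuous.

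The main obstacle is the \textbf{if} and \textbf{for} cases, because statement-level subsumption \textsc{SSub} can lower the conclusion level below the level at which \textsc{If}/\textsc{For} was last applied in the derivation. My plan is to unwind the derivation of $\Gamma \vdash S : \ell'$ to locate the innermost use of \textsc{If} (resp.\ \textsc{For}) at some level $\ell'' \geq \ell'$; then use the single-levelness of the body, which assigns only to level-$\ell'$ variables, together with an \textsc{Assign}-then-\textsc{SSub} analysis to conclude $\ell'' \leq \ell'$, hence $\ell'' = \ell'$. Once $\ell'' = \ell'$, the guard $E$ (resp.\ the bounds $E_1, E_2$) is typed at level $\ell'$, so the auxiliary fact closes the case for the guard, while the induction hypothesis, applied to each branch (which inherits single-levelness at $\ell'$), closes the case for the body. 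The residual sub-case where the body contains no assignments reduces to a guard occurring only alongside level-polymorphic \textsc{Skip} constructs; there the innermost \textsc{If}/\textsc{For} rule can be chosen at exactly $\ell'' = \ell'$, since \textsc{Skip} is derivable at every level, preserving the argument.
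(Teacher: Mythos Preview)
Your overall strategy---structural induction on $S$, preceded by an auxiliary monotonicity fact relating $\Gamma(E)$ to any derivable $\Gamma \vdash E : (\tau,\ell)$---is exactly what the paper's one-line proof (``By induction on the structure of $S$'') intends, and your treatment of the assignment and sequence cases is correct.

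The residual sub-case of your \textsc{If}/\textsc{For} analysis, however, does not go through, and in fact uncovers a counterexample to the lemma as stated. You claim that when the body contains no assignments ``the innermost \textsc{If}/\textsc{For} rule can be chosen at exactly $\ell'' = \ell'$, since \textsc{Skip} is derivable at every level.'' But \textsc{If} also constrains the \emph{guard}: to fire it at level $\ell'$ you need $\Gamma \vdash g : (\kw{bool},\ell')$, and \textsc{ESub} only raises levels, so if the level of $\Gamma(g)$ sits strictly above $\ell'$ no such derivation exists. Concretely, take $g:(\kw{bool},\lev{model}) \in \Gamma$ and $S = \kw{if}(g)\;\kw{skip}\;\kw{else}\;\kw{skip}$. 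Then $\assset(S) = \emptyset$ and $\Gamma \vdash S : \lev{data}$ holds (apply \textsc{If} at $\lev{model}$, then \textsc{SSub}), so $\Gamma \vdash \lev{data}(S)$ by Definition~\ref{def:singlelev}; yet $g$ occurs in $S$ with $\Gamma(g) = (\kw{bool},\lev{model})$ and $\lev{model} \not\leq \lev{data}$. The paper's proof sketch does not address this case either; the lemma (and, downstream, Lemma~\ref{lem:halfway}) needs an extra hypothesis---for instance $\assset(S) \neq \emptyset$, or a bound on the level of the guard variables introduced by elaboration---to be literally true.
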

\begin{proof}
	By induction on the structure of S. 	
\end{proof}

\begin{restate}{Lemma~\ref{lem:halfway}} ~
	
If $~\Gamma \vdash \ell_1(S_1)$, $\Gamma \vdash \ell_2(S_2)$ and $\ell_1 < \ell_2$ then $\readset(S_1) \cap \assset(S_2) = \emptyset$ and $\assset(S_1) \cap \assset(S_2) = \emptyset$.
\end{restate}
\begin{proof} Suppose $\Gamma \vdash \ell_1(S_1)$, $\Gamma \vdash \ell_2(S_2)$ and $\ell_1 < \ell_2$. Then, directly from Definition~\ref{def:singlelev}, we have that $\assset(S_1) \cap \assset(S_2) = \emptyset$. 
	Next, we prove by contradiction that $\readset(S_1) \cap \assset(S_2) = \emptyset$. 
	Suppose that for some $x$,  $x\in \readset(S_1)$ and $x \in \assset(S_2)$. From $x \in \assset(S_2)$ and $\Gamma \vdash \ell_2(S_2)$, we have $\Gamma(x) = (\tau, \ell_2)$ for some $\tau$. 
	From the definition of $\readset(S_1)$ and $x \in \readset(S_1)$, there must be an expression $E$ in $S_1$, such that $x \in \readset(E)$. 
	Suppose $\Gamma(E) = (\tau_E, \ell_E)$.
       Now, $\Gamma(x) = (\tau, \ell_2)$, $x \in \readset(E)$, and $\Gamma \vdash E:(\tau_E, \ell_E)$, therefore $\ell_2 \leq \ell_E$ (Lemma~\ref{lem:vars_in_exps}). 
	But $\Gamma \vdash \ell_1(S_1)$, and $E$ occurs in $S_1$, therefore $\ell_E \leq \ell_1$ (Lemma~\ref{lem:exp_in_singleleveled_s}).   
	
	We have $\ell_2 \leq \ell_E \leq \ell_1$. But $\ell_1 < \ell_2$, which is a contradiction.
\end{proof}

\begin{restate}{Lemma~\ref{lem:same_sets_when_singlelevel}} ~
	
	If $~\Gamma \vdash \ell(S)$, then $\readset_{\Gamma \vdash \ell}(S) = \readset(S)$ and $\assset_{\Gamma \vdash \ell}(S) = \assset(S)$.
\end{restate}
\begin{proof}
	Suppose $\Gamma \vdash \ell(S)$. The first equality, 
	$\readset_{\Gamma \vdash \ell}(S) = \readset(S)$, follows by structural induction on $\readset_{\Gamma \vdash \ell}(S)$.
	Furthermore, by definition of single-level statements, if $x \in \assset(S)$, then $\Gamma(x) = (\tau, \ell)$ for some $\tau$. Thus, by definition of $\assset_{\Gamma \vdash \ell}(S)$, we have that $\assset_{\Gamma \vdash \ell}(S) = \assset(S)$.
\end{proof}

\begin{lemma}[Associativity of $\eveq$] \label{lem:assoc}
$S_1;(S_2;S_3) \eveq (S_1;S_2);S_3$ for any $S_1$, $S_2$, and $S_3$.
\end{lemma}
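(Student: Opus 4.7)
The plan is to prove the equivalence by directly unfolding the \ref{Eval Seq} rule in both directions. By Definition~\ref{def:equiv}, it suffices to show that for all states $s, s'$, $(s, S_1;(S_2;S_3)) \Downarrow s'$ if and only if $(s, (S_1;S_2);S_3) \Downarrow s'$. Since \ref{Eval Seq} is the only operational rule that applies to a sequence, every derivation of a sequence must end with an application of this rule, which gives us full control over how the evaluation proceeds.

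For the forward direction, suppose $(s, S_1;(S_2;S_3)) \Downarrow s'$. Inverting \ref{Eval Seq}, there must exist some intermediate state $s_1$ with $(s, S_1) \Downarrow s_1$ and $(s_1, S_2;S_3) \Downarrow s'$. Inverting \ref{Eval Seq} again on the second judgment, there exists $s_2$ with $(s_1, S_2) \Downarrow s_2$ and $(s_2, S_3) \Downarrow s'$. I then reassemble these three atomic derivations: applying \ref{Eval Seq} to $(s, S_1) \Downarrow s_1$ and $(s_1, S_2) \Downarrow s_2$ yields $(s, S_1;S_2) \Downarrow s_2$, and a final application of \ref{Eval Seq} with $(s_2, S_3) \Downarrow s'$ yields $(s, (S_1;S_2);S_3) \Downarrow s'$. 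The reverse direction is entirely symmetric: invert \ref{Eval Seq} twice on the left-associated sequence to extract the three sub-derivations, then reassemble them into the right-associated form.

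I expect no real obstacle here: the result is a purely structural property of the sequencing rule, and does not depend on the type system, shredding, or any semantic side-conditions. The only subtlety worth noting is that the operational semantics is deterministic (as remarked in Section~\ref{ssec:stanop}), so the intermediate states $s_1$ and $s_2$ extracted by rule inversion are uniquely determined, but this determinism is not strictly needed for the proof — only the existence of such intermediate states matters. The argument is entirely syntactic at the level of derivation trees and does not require induction on the structure of $S_1$, $S_2$, or $S_3$.
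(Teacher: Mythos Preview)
Your proposal is correct and is precisely the argument the paper has in mind: its proof is the one-line ``By expanding the definition of statement equivalence (Definition~\ref{def:equiv}),'' and your inversion-then-reassembly of \ref{Eval Seq} in both directions is exactly what that expansion amounts to. Your remark that determinism is not needed is also right.
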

\begin{proof}
	By expanding the definition of statement equivalence (Definition~\ref{def:equiv}).
\end{proof}

\begin{lemma}[Congruence of $\eveq$] \label{lem:congr}
	If $S_1 \eveq S_2$ then $S_1; S \eveq S_2; S$ and $S;S_1 \eveq S;S_2$ for any $S$.
\end{lemma}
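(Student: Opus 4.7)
The plan is to prove both congruence statements by directly unfolding the definition of $\eveq$ (Definition~\ref{def:equiv}) and inverting the \ref{Eval Seq} rule, which is the only rule of the operational semantics that can derive a sequencing judgment. Since $\Downarrow$ is deterministic, the inversion is clean: $(s, T_1; T_2) \Downarrow s''$ holds if and only if there exists a unique intermediate state $s'$ such that $(s, T_1) \Downarrow s'$ and $(s', T_2) \Downarrow s''$.

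For the first claim, $S_1; S \eveq S_2; S$, I would fix an arbitrary pair of states $s, s''$ and argue both implications. For the forward direction, assume $(s, S_1; S) \Downarrow s''$. By inversion on \ref{Eval Seq} there is some $s'$ with $(s, S_1) \Downarrow s'$ and $(s', S) \Downarrow s''$. Applying the hypothesis $S_1 \eveq S_2$ at the states $s, s'$ yields $(s, S_2) \Downarrow s'$, and re-applying \ref{Eval Seq} gives $(s, S_2; S) \Downarrow s''$. The converse direction is symmetric because $\eveq$ is symmetric in its arguments. The second claim, $S; S_1 \eveq S; S_2$, follows by an entirely analogous argument: invert \ref{Eval Seq} to obtain an intermediate state $s'$ produced by $S$, then use $S_1 \eveq S_2$ at the pair $s', s''$ to swap the right-hand component, and recompose with \ref{Eval Seq}.

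There is no real obstacle here: the lemma is a routine congruence property of a big-step operational semantics with deterministic evaluation, and no auxiliary results beyond the definitions and the sequencing rule are needed. The only point worth emphasising in the write-up is that each direction uses $S_1 \eveq S_2$ as a biconditional on the sub-derivation, so that both the forward and backward implications of the outer statement follow from a single case analysis on the shape of the sequencing derivation.
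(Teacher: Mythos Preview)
Your proposal is correct and matches the paper's approach: the paper's proof is the one-liner ``By expanding the definition of statement equivalence (Definition~\ref{def:equiv}),'' and your argument is exactly the unpacking of that line via inversion on \ref{Eval Seq}. The appeal to determinism is harmless but not actually needed, since mere existence of an intermediate state (from inversion) suffices for both directions.
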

\begin{proof}
	By expanding the definition of statement equivalence (Definition~\ref{def:equiv}).
\end{proof}

\begin{lemma} \label{lem:forloops}
	For any two states $s, s'$, and a statement $\kw{for}(x\;\kw{in}\;g_1:g_2)\;S$, suppose $n_1 = s(g_1)$, $n_2 = s(g_2)$ and $n_1 \leq n_2$. Then $(s, \kw{for}(x\;\kw{in}\;g_1:g_2)\;S) \Downarrow s'$ if and only if there exists $s_x$, such that $(s, x=n_1;S;x=(n_1+1);S\dots x=n_2;S) \Downarrow s_x$ and $s' = s_x[-x]$.
\end{lemma}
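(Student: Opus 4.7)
The plan is to proceed by strong induction on $k = n_2 - n_1 \geq 0$, proving both directions of the biconditional simultaneously by tracing the operational semantics of the \kw{for}-loop.

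In the base case $k = 0$, I have $n_1 = n_2$, so the only rule that can derive $(s, \kw{for}(x\;\kw{in}\;g_1:g_2)\;S) \Downarrow s'$ is \ref{Eval ForTrue}, which decomposes the judgment into $(s[x \mapsto n_1], S) \Downarrow s_b$ and $(s_b[-x], \kw{for}(x\;\kw{in}\;(n_1+1):n_2)\;S) \Downarrow s'$. Since $n_1 + 1 > n_2$, the residual loop reduces by \ref{Eval ForFalse}, forcing $s' = s_b[-x]$. On the unrolled side, the sequence is simply $x = n_1; S$, and by \ref{Eval Assign} together with \ref{Eval Seq} the judgment $(s, x = n_1; S) \Downarrow s_x$ holds iff $(s[x \mapsto n_1], S) \Downarrow s_x$, i.e.\ iff $s_x = s_b$. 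The two sides then coincide via $s' = s_b[-x] = s_x[-x]$.

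In the inductive step $k > 0$, I again apply \ref{Eval ForTrue} to peel off a single iteration, exposing a body execution $(s[x \mapsto n_1], S) \Downarrow s_b$ and a residual loop over $n_1+1$ to $n_2$ of length $k - 1$. The induction hypothesis (applied with starting state $s_b[-x]$) converts the residual to the existence of some $s_y$ with $(s_b[-x], x = n_1+1; S; \dots; x = n_2; S) \Downarrow s_y$ and $s' = s_y[-x]$. Composing with the base-case analysis of the first block $x = n_1; S$ and sequencing via \ref{Eval Seq} then yields $(s, x = n_1; S; x = n_1+1; S; \dots; x = n_2; S) \Downarrow s_y$ with $s' = s_y[-x]$, as required. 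The converse direction is handled by reading the same reductions backwards, applying the inductive hypothesis in the other direction to build up the residual-loop derivation and then prepending a \ref{Eval ForTrue} step.

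The main obstacle will be bridging the ``state with $x$ removed'' that appears in the loop semantics with the ``state with $x$ overwritten'' that appears in the unrolled form. Specifically, to glue the first iteration's output $s_b$ to the tail sequence produced by the induction hypothesis (which starts from $s_b[-x]$, not $s_b$), I would prove a small auxiliary observation: whenever a compound statement begins with an assignment $x = V$, execution from $s$ and from $s[-x]$ coincides, because \ref{Eval Assign} produces the same state $s[x \mapsto V] = s[-x][x \mapsto V]$ in either case. Combined with the \ref{For} rule's side-condition $x \notin \assset(S)$, which guarantees that $s_b(x) = n_1$ so that forming $s_b[-x]$ loses no information about $S$'s effect, this resolves the state-bookkeeping and lets the induction go through cleanly.
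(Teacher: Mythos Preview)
Your approach matches the paper's, which also proceeds by induction on $n_2 - n_1$ and gives no further detail; your expansion of the argument is essentially correct. Two small points are worth flagging.

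First, the lemma as stated fixes \emph{variable} bounds $g_1,g_2$ (with $n_i = s(g_i)$), whereas the residual loop produced by \ref{Eval ForTrue} has \emph{constant} bounds $(n_1{+}1):n_2$, so strictly speaking your induction hypothesis does not cover it. The standard fix is to strengthen the statement you induct on to allow arbitrary bound expressions $E_1,E_2$ with $(s,E_i)\Downarrow n_i$; the induction then goes through verbatim, and the lemma as written is the special case you actually need.

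Second, your appeal to the typing rule \ref{For}'s side-condition $x\notin\assset(S)$ is both unjustified (the lemma does not assume the loop is well-typed) and unnecessary. Your own auxiliary observation---that a sequence beginning with $x=V$ evaluates identically from $s$ and from $s[-x]$ because \ref{Eval Assign} yields $s[x\mapsto V]=(s[-x])[x\mapsto V]$---already suffices to glue $s_b$ to $s_b[-x]$, regardless of whether $S$ writes to $x$.
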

\begin{proof}
	By induction on $n = n_2 - n_1$.
\end{proof}

\begin{lemma} \label{lem:moreassignments}
	If $x \notin \assset{(S_1)}$ then $(x=n; S_1;S_2) \eveq (x=n; S_1; x=n; S_2)$. 
\end{lemma}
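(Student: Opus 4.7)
The plan is to unfold Definition~\ref{def:equiv} and show that both directions of the bi-implication reduce to the same computation. Take any states $s, s'$ (with $s$ well-typed in the ambient environment) and consider what it means for $(s, x=n; S_1; S_2) \Downarrow s'$ to hold. By \textsc{Eval Seq} and \textsc{Eval Assign}, such a derivation factors uniquely through two intermediate states: $(s, x=n) \Downarrow s_0$ with $s_0 = s[x \mapsto n]$, then $(s_0, S_1) \Downarrow s_1$, and finally $(s_1, S_2) \Downarrow s'$.

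The key observation is that $s_1(x) = n$. Indeed, $s_0(x) = n$ by construction, and since $x \notin \assset(S_1)$, Lemma~\ref{lemma:assset} applied to the derivation $(s_0, S_1) \Downarrow s_1$ yields $s_1(x) = s_0(x) = n$. Consequently $s_1[x \mapsto n] = s_1$, so by \textsc{Eval Assign} we have $(s_1, x=n) \Downarrow s_1$. Reassembling the derivation with \textsc{Eval Seq} gives $(s, x=n; S_1; x=n; S_2) \Downarrow s'$, which establishes the forward direction.

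For the reverse direction, suppose $(s, x=n; S_1; x=n; S_2) \Downarrow s''$. Again factor through intermediate states: $(s, x=n) \Downarrow s_0$, $(s_0, S_1) \Downarrow s_1$, $(s_1, x=n) \Downarrow s_1'$, and $(s_1', S_2) \Downarrow s''$. The same application of Lemma~\ref{lemma:assset} gives $s_1(x) = n$, so $s_1' = s_1[x \mapsto n] = s_1$. Hence $(s_1, S_2) \Downarrow s''$, and recomposing yields $(s, x=n; S_1; S_2) \Downarrow s''$.

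I do not expect any real obstacle: the entire argument rests on the single fact that $S_1$ cannot modify $x$, so the second assignment $x=n$ is redundant. The only subtlety is to use associativity of $\eveq$ (Lemma~\ref{lem:assoc}) implicitly when grouping the sequences, and to note that the well-typedness hypothesis required by Lemma~\ref{lemma:assset} is inherited from the ambient assumption that we are reasoning about well-typed SlicStan programs.
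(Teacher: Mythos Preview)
Your proposal is correct and follows exactly the approach the paper indicates: the paper's proof is the single line ``By expanding the definition of statement equivalence (Definition~\ref{def:equiv}),'' and you have carried out precisely that expansion, correctly identifying Lemma~\ref{lemma:assset} as the fact that makes the inserted assignment $x=n$ a no-op after $S_1$.
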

\begin{proof}
By expanding the definition of statement equivalence (Definition~\ref{def:equiv}).
\end{proof}

\begin{lemma} \label{lem:loopreorder}
	If $\,\Gamma \vdash \ell_1(S_1)$, $\Gamma \vdash \ell_2(S_2)$, $\ell_1 < \ell_2$, $\Gamma \vdash S_2;S_1 : \lev{data}$, and $x \notin \assset{(S_2;S_1)}$ then $x=i;S_2; x=j;S_1 \eveq x=j;S_1;x=i;S_2;x=j$ for all integers $i$ and $j$.
\end{lemma}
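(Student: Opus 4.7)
The plan is to prove the equivalence by unfolding the big-step semantics on both sides and comparing the resulting states pointwise, with Lemma~\ref{lem:same_effect} as the main tool. A direct commutation in the style of Lemma~\ref{lem:reorder} is not available because $x$ may lie in $\readset(S_1) \cup \readset(S_2)$, so the assignments $x=i$ and $x=j$ cannot be moved past $S_2$ and $S_1$ --- the explicit re-assertions of $x$ on the RHS are essential. Note that the hypothesis $x \notin \assset(S_2; S_1)$ unfolds to $x \notin \assset(S_1) \cup \assset(S_2)$, which will be used throughout.

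First, I will establish three disjointness conditions on $S_1$ and $S_2$: (i) $\readset(S_1) \cap \assset(S_2) = \emptyset$, (ii) $\assset(S_1) \cap \assset(S_2) = \emptyset$, and (iii) $\readset(S_2) \cap \assset(S_1) = \emptyset$. Conditions (i) and (ii) follow directly from Lemma~\ref{lem:halfway}, using $\Gamma \vdash \ell_1(S_1)$, $\Gamma \vdash \ell_2(S_2)$ and $\ell_1 < \ell_2$. Condition (iii) is extracted from the hypothesis $\Gamma \vdash S_2; S_1 : \lev{data}$: the \ref{Seq} typing rule carries the $\shreddable$ side-condition, which supplies $\readset_{\Gamma \vdash \ell_2}(S_2) \cap \assset_{\Gamma \vdash \ell_1}(S_1) = \emptyset$, and Lemma~\ref{lem:same_sets_when_singlelevel} reduces these level-indexed sets to the ordinary $\readset$ and $\assset$ sets for the single-level statements $S_1$ and $S_2$.

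Next, for any $s \models \Gamma$ with $x \in \dom(s)$, I will name the intermediate states on each side. On the LHS, write $(s[x\mapsto i], S_2) \Downarrow s_2$ and $(s_2[x\mapsto j], S_1) \Downarrow s'$. On the RHS, write $(s[x\mapsto j], S_1) \Downarrow t_1$ and $(t_1[x\mapsto i], S_2) \Downarrow t_2$, with final state $s'' = t_2[x\mapsto j]$. The goal reduces to $s' = s''$. Applying Lemma~\ref{lem:same_effect} to $S_2$: the two starts $s[x\mapsto i]$ and $t_1[x\mapsto i]$ agree on $\readset(S_2)$, since both map $x$ to $i$ and, for $y \in \readset(S_2) \setminus \{x\}$, condition (iii) together with Lemma~\ref{lemma:assset} gives $t_1(y) = s(y)$. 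Applying Lemma~\ref{lem:same_effect} to $S_1$: the two starts $s_2[x\mapsto j]$ and $s[x\mapsto j]$ agree on $\readset(S_1)$ by condition (i) and Lemma~\ref{lemma:assset}. A finishing case analysis on each $y \in \dom(\Gamma)$ --- cases $y = x$ (both sides equal $j$ by $x \notin \assset(S_1) \cup \assset(S_2)$ together with the explicit final $x=j$), $y \in \assset(S_1)$, $y \in \assset(S_2)$, and $y$ untouched --- uses (i) and (ii) together with Lemma~\ref{lemma:assset} to show that the remaining operations on each side leave the already-determined values unchanged.

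The main obstacle is the termination direction of $\eveq$ (Definition~\ref{def:equiv}), since Lemma~\ref{lem:same_effect} as stated presupposes that both executions converge. This is handled by a symmetric unfolding: the operational semantics of \autoref{ssec:stanop} is deterministic and, on Core Stan where for-loop bounds are fixed before entry, convergence of $(\sigma, S) \Downarrow \cdot$ depends only on $\sigma$ restricted to $\readset(S)$. Hence $\readset$-agreement makes the two executions of $S_1$, respectively of $S_2$, equi-defined, and the biconditional of Definition~\ref{def:equiv} closes in both directions.
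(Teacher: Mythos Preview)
Your proposal is correct. The paper's own proof is a one-line sketch: ``By expanding the definition of statement equivalence (Definition~\ref{def:equiv}), and using Lemma~\ref{lem:congr} and Lemma~\ref{lem:moreassignments}.'' Both arguments are direct semantic unfoldings, but you take a somewhat different route in the details.

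The paper leans on Lemma~\ref{lem:moreassignments} (inserting or removing a redundant $x=n$ after a statement that does not assign to $x$) and Lemma~\ref{lem:congr} (congruence of $\eveq$) to manipulate the two sequences syntactically until they match. You instead work at the level of states: you derive the three disjointness facts $\readset(S_1)\cap\assset(S_2)=\emptyset$, $\assset(S_1)\cap\assset(S_2)=\emptyset$, $\readset(S_2)\cap\assset(S_1)=\emptyset$ explicitly from Lemma~\ref{lem:halfway} and the $\shreddable$ side-condition via Lemma~\ref{lem:same_sets_when_singlelevel}, then apply Lemma~\ref{lem:same_effect} twice and finish with a pointwise case split. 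What your approach buys is that it makes transparent exactly where the single-level hypotheses and the $\shreddable$ predicate are consumed; the paper's sketch hides these dependencies (indeed, Lemma~\ref{lem:congr} and Lemma~\ref{lem:moreassignments} alone do not suffice without some form of commutation, so the paper's proof implicitly relies on the same disjointness facts you spell out). What the paper's approach buys is brevity, once one is willing to take those facts as understood. Your treatment of the termination direction is also more careful than the paper's, which does not mention it at all.
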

\begin{proof}
By expanding the definition of statement equivalence (Definition~\ref{def:equiv}), and using Lemma~\ref{lem:congr} and Lemma~\ref{lem:moreassignments}.
\end{proof}

\begin{restate}{Lemma~\ref{lem:commutativity}(Commutativity of sequencing single-level statements)} ~ 	

	If $~\singlelevelS{\ell_1}{S_1}$, $\singlelevelS{\ell_2}{S_2}$, $\Gamma \vdash S_2;S_1 : \lev{data}$ and $\ell_1 < \ell_2$ then $S_2; S_1; \eveq S_1; S_2;$
\end{restate}
\begin{proof}
	Since $\Gamma \vdash S_2;S_1 : \lev{data}$, and $\lev{data} \leq \ell_1 < \ell_2$, it must be that $\readset_{\Gamma\vdash\ell_2}(S_2) \cap \assset_{\Gamma\vdash\ell_1}(S_1) = \emptyset$. By Lemma~\ref{lem:same_sets_when_singlelevel}, $\readset_{\Gamma\vdash\ell_2}(S_2) = \readset(S_2)$ and $\assset_{\Gamma\vdash\ell_1}(S_1) = \assset(S_1)$, as $S_1$ and $S_2$ are single-level of level $\ell_1$ and $\ell_2$ respectively. Therefore, $\readset(S_2) \cap \assset(S_1) = \emptyset$. 
	From $\Gamma \vdash \ell_1(S_1)$, $\Gamma \vdash \ell_2(S_2)$, $\ell_1 < \ell_2$ and by Lemma~\ref{lem:halfway}, we have $\readset(S_1) \cap \assset(S_2) = \emptyset$ and $\assset(S_1) \cap \assset(S_2) = \emptyset$. 
	Therefore, by Lemma~\ref{lem:reorder},  $S_2; S_1 \eveq S_1; S_2$.
\end{proof}

\begin{restate}{Theorem~\ref{th:shred} (Semantic Preservation of $\shred$)} ~ 
	
If $~\Gamma \vdash S:\lev{data} $ and $ S \shred[\Gamma] \shredded $ then $ \log p^*_{\Gamma \vdash S}(s) = \log p^*_{\Gamma \vdash (S_D; S_M; S_Q)}(s)$, for all $s \models \Gamma$.
\end{restate}

\begin{proof}
	Note that if $S \eveq S'$ then $\log p^*_{\Gamma \vdash S}(s) = \log p^*_{\Gamma \vdash S'}(s)$ for all states $s \models \Gamma$. 
	
	Semantic preservation then follows from proving the stronger result $$S \shred \shredded \implies \Gamma \vdash S:\lev{data} \implies S \eveq (S_D; S_M; S_Q)$$ by rule induction on $S \shred \shredded$.
	Let	$$\Phi(S, S_D, S_M, S_Q) \deq
	 S \shred \shredded \implies \Gamma \vdash S:\lev{data} \implies S \eveq S_D;S_M;S_Q$$

	Take any $S$, $S_D, S_M, S_Q$, and assume $S \shred \shredded$ and $\Gamma \vdash S: \lev{data}$.
	\begin{itemize}
		\item[\ref{Shred Skip}] 
		\begin{math}
		\copyrule{}{\kw{skip} \shred (\kw{skip}, \kw{skip}, \kw{skip})}
		\end{math}		
		
		For all $s$, $(s, \kw{skip}) \Downarrow s$, and also $(s, \kw{skip}; \kw{skip}; \kw{skip}) \Downarrow s$. Thus $\kw{skip} \eveq \kw{skip};\kw{skip};\kw{skip}$.
		
		\item[\ref{Shred DataAssign}] 
		\begin{math}
		\copyrule
		{\Gamma(L) = (\_,\lev{data})}
		{ L = E \shred (L = E, \kw{skip}, \kw{skip})}
		\end{math} \\
		For any state $s$, if  $(s, L=E) \Downarrow s'$, then $(s, L=E;\kw{skip};\kw{skip}) \Downarrow s'$, and vice versa. Thus, $\Phi(L=E, L=E, \kw{skip}, \kw{skip})$ holds.

		\item[\ref{Shred ModelAssign}] 
		\begin{math}
		\copyrule
		{\Gamma(L) = (\_,\lev{model})}
		{ L = E \shred (\kw{skip}, L = E, \kw{skip})}
		\end{math} \\
		For any state $s$, if  $(s, L=E) \Downarrow s'$, then $(s, \kw{skip};L=E;\kw{skip}) \Downarrow s'$, and vice versa. Thus, $\Phi(L=E, \kw{skip}, L=E, \kw{skip})$ holds.
		
		\item[\ref{Shred GenQuantAssign}] 
		\begin{math}
		\copyrule
		{\Gamma(L) = (\_,\lev{genquant})}
		{ L = E \shred (\kw{skip}, \kw{skip}, L = E)}
		\end{math} \\
		For any state $s$, if  $(s, L=E) \Downarrow s'$, then $(s, \kw{skip};\kw{skip};L=E) \Downarrow s'$, and vice versa. Thus, $\Phi(L=E, \kw{skip}, \kw{skip}, L=E)$ holds.
		
		\item[\ref{Shred Seq}] $S = (S_1; S_2)$. Suppose that $\Phi(\Gamma, S_1)$ and $\Phi(\Gamma, S_2)$, and assume $S_1 \shred \shredded[1]$, $S_2 \shred \shredded[2]$. Thus, $S_1 \eveq \shreddedseqsingle[1]$ and $S_2 \eveq \shreddedseqsingle[2]$. Now:		
		\begin{align*}
		&\eveq (\shreddedseqsingle[1]);(\shreddedseqsingle[2])  && \text{from } \Phi(\Gamma, S_1) \text{ and } \Phi(\Gamma, S_2)\\
		&\eveq (S_{D_1};S_{M_1});S_{D_2};S_{Q_1};(S_{M_2};S_{Q_2})   && \text{by Lemmas~\ref{lem:shredisleveled}, \ref{lem:commutativity}, \ref{lem:assoc} and~\ref{lem:congr}} \\
		&\eveq (S_{D_1});S_{D_2};S_{M_1};(S_{Q_1};S_{M_2};S_{Q_2})   && \text{by Lemmas~\ref{lem:shredisleveled}, \ref{lem:commutativity}, \ref{lem:assoc} and~\ref{lem:congr}} \\
		&\eveq (S_{D_1};S_{D_2};S_{M_1});S_{M_2};S_{Q_1};(S_{Q_2})   && \text{by Lemmas~\ref{lem:shredisleveled}, \ref{lem:commutativity}, \ref{lem:assoc} and~\ref{lem:congr}} \\
		&\eveq (S_{D_1};S_{D_2});(S_{M_1};S_{M_2});(S_{Q_1};S_{Q_2})   && 
		\end{align*}
		
		As by \ref{Shred Seq}, $S \shred (S_{D_1};S_{D_2}), (S_{M_1};S_{M_2}), (S_{Q_1};S_{Q_2})$, it follows that $\Phi(\Gamma, S_1;S_2)$.
		
		\item[\ref{Shred If}] $S = (\kw{if}(g)\; S_1\; \kw{else}\; S_2)$. Suppose that $\Phi(\Gamma, S_1)$ and $\Phi(\Gamma, S_2)$, and assume $S_1 \shred \shredded[1]$, $S_2 \shred \shredded[2]$. Thus, by \ref{Shred If}: 
		$$\kw{if}(g)\; S_1\; \kw{else}\; S_2 \shred  
		(\kw{if}(g)\; S_{D_1}\; \kw{else}\; S_{D_2}),  
		(\kw{if}(g)\; S_{M_1}\; \kw{else}\; S_{M_2}), 
		(\kw{if}(g)\; S_{Q_1}\; \kw{else}\; S_{Q_2})$$
		
		Now take any two states $s$ and $s'$, such that $s \models \Gamma$ and $(s, S) \Downarrow s'$. Given that $\Gamma \vdash S : \lev{data}$, $\Gamma(g) = (\kw{bool}, \_)$ by \ref{If}. Therefore $s(g) = \kw{true}$ or $s(g) = \kw{false}$.  
		\begin{enumerate}
			\item If $s(g) = \kw{true}$, it must be that $(s, S_1) \Downarrow s'$. Then:
			\begin{align*}			
			& (s, \kw{if}(g)\; S_1\; \kw{else}\; S_2) \Downarrow s' && \text{by } \ref{Eval IfTrue} \\
			& (s, (\shreddedseqsingle[1])) \Downarrow s' && \text{from } \Phi(\Gamma, S_1)\\
			& (s, (\kw{if}(g)\; S_{D_1}\,\kw{else}\; S_{D_2};
			\kw{if}(g)\; S_{M_1}\,\kw{else}\; S_{M_2};
			\kw{if}(g)\; S_{Q_1}\,\kw{else}\; S_{Q_2})) \Downarrow s' && 3 \times \ref{Eval IfTrue} \\
			\end{align*}			
			\item If $s(g) = \kw{false}$, it must be that $(s, S_2) \Downarrow s'$. Then:
			\begin{align*}
			& (s, \kw{if}(g)\; S_1\; \kw{else}\; S_2) \Downarrow s' && \text{by } \ref{Eval IfFalse} \\
			& (s, (\shreddedseqsingle[2])) \Downarrow s' && \text{from } \Phi(\Gamma, S_2)\\
			& (s, (\kw{if}(g)\; S_{D_1}\,\kw{else}\; S_{D_2};
			\kw{if}(g)\; S_{M_1}\,\kw{else}\; S_{M_2};
			\kw{if}(g)\; S_{Q_1}\,\kw{else}\; S_{Q_2})) \Downarrow s' && 3 \times \ref{Eval IfFalse} \\
			\end{align*}
		\end{enumerate}	
		Thus, $(s, \kw{if}(g)\; S_1\; \kw{else}\; S_2)) \Downarrow s' \implies (s, (\kw{if}(g)\; S_{D_1}\; \kw{else}\; S_{D_2};  
		\kw{if}(g)\; S_{M_1}\; \kw{else}\; S_{M_2}; 
		\kw{if}(g)\; S_{Q_1}\; \kw{else}\; S_{Q_2})) \Downarrow s'$. For the implication in the opposite direction:
		\begin{enumerate}
			\item If $s(g) = \kw{true}$, take any $s'$ such that $(s, (\kw{if}(g)\; S_{D_1}\,\kw{else}\; S_{D_2};
			\kw{if}(g)\; S_{M_1}\,\kw{else}\; S_{M_2};
			\kw{if}(g)\; S_{Q_1}\,\kw{else}\; S_{Q_2})) \Downarrow s'$. Then:
			\begin{align*}			
			& (s, (\shreddedseqsingle[1])) \Downarrow s' && \text{by } 3 \times \ref{Eval IfTrue} \\
			& (s, S_1) \Downarrow s' && \text{from } \Phi(\Gamma, S_1)\\
			& (s, \kw{if}(g)\; S_1\; \kw{else}\; S_2) \Downarrow s' && \text{by } \ref{Eval IfTrue}
			\end{align*}			
			\item If $s(g) = \kw{false}$, take any $s'$ such that $(s, (\kw{if}(g)\; S_{D_1}\,\kw{else}\; S_{D_2};
			\kw{if}(g)\; S_{M_1}\,\kw{else}\; S_{M_2};
			\kw{if}(g)\; S_{Q_1}\,\kw{else}\; S_{Q_2})) \Downarrow s'$. Then:
			\begin{align*}
			& (s, (\shreddedseqsingle[2])) \Downarrow s' && \text{by } 3 \times \ref{Eval IfFalse} \\
			& (s, S_2) \Downarrow s' && \text{from } \Phi(\Gamma, S_2)\\
			& (s, \kw{if}(g)\; S_1\; \kw{else}\; S_2) \Downarrow s' && \text{by } \ref{Eval IfFalse}
			\end{align*}
		\end{enumerate}	
		Thus, $(s, (\kw{if}(g)\; S_{D_1}\; \kw{else}\; S_{D_2};  
		\kw{if}(g)\; S_{M_1}\; \kw{else}\; S_{M_2}; 
		\kw{if}(g)\; S_{Q_1}\; \kw{else}\; S_{Q_2})) \Downarrow s' \implies (s, \kw{if}(g)\; S_1\; \kw{else}\; S_2)) \Downarrow s'$.
                Therefore, $\kw{if}(g)\; S_1\; \kw{else}\; S_2 \eveq  (\kw{if}(g)\; S_{D_1}\; \kw{else}\; S_{D_2}); (\kw{if}(g)\; S_{M_1}\; \kw{else}\; S_{M_2}); (\kw{if}(g)\; S_{Q_1}\; \kw{else}\; S_{Q_2})$, and $\Phi(\Gamma, \kw{if}(g)\; S_1\; \kw{else}\; S_2)$.
		
		\item[\ref{Shred For}] Suppose $S = (\kw{for}(x\;\kw{in}\;g_1:g_2)\;S') = LHS$. Then: \\
		\begin{math}
		\copyrule
		{   S' \shred (S_D', S_M', S_Q')  }
		{LHS \shred  (\kw{for}(x\;\kw{in}\;g_1:g_2)\;S_D'),  
			(\kw{for}(x\;\kw{in}\;g_1:g_2)\;S_M'), 
			(\kw{for}(x\;\kw{in}\;g_1:g_2)\;S_Q')}
		\end{math}\\
		Take $RHS = (\kw{for}(x\;\kw{in}\;g_1:g_2)\;S_D');  
		(\kw{for}(x\;\kw{in}\;g_1:g_2)\;S_M'); 
		(\kw{for}(x\;\kw{in}\;g_1:g_2)\;S_Q')$
		
		We must show $\Phi(S', S_D', S_M', S_Q') \implies LHS \eveq RHS$. 
		
		Assume $\Phi(S', S_D', S_M', S_Q')$, and consider $s, s'$, such that $(s, LHS) \Downarrow s'$ to show $(s, RHS) \Downarrow s'$.
	
		Suppose $n_1 = s(g_1)$ and $n_2 = s(g_2)$. Then either $n_1 \leq n_2$ or $n_1 > n_2$:
		\begin{enumerate}
		\item Case $n_1 \leq n_2$.
		
		Using Lemma~\ref{lem:forloops}, we have that there exists $s_x$, such that $(s, x=n_1;S';x=(n_1+1);S'\dots x=n_2;S') \Downarrow s_x$ and $s' = s_x[-x]$.
		
		As $\Phi(S', S_D', S_M', S_Q')$, $S' \shred (S_D', S_M', S_Q')$ and $\Gamma \vdash S':\lev{data}$ (by \ref{For}), we have that $S' \eveq S_D';S_M';S_Q'$. Combined with the result from above and using Lemma~\ref{lem:congr}, this gives us $(s, x=n_1;S_D';S_M';S_Q';x=(n_1+1);S_D';S_M';S_Q'\dots x=n_2;S_D';S_M';S_Q') \Downarrow s_x$ and $s' = s_x[-x]$.
		
		By Lemma~\ref{lem:moreassignments}, we then have $(s, x=n_1;S_D';x=n_1;S_M';x=n_1;S_Q';\dots x=n_2;S_D';x=n_2;S_M';x=n_2;S_Q') \Downarrow s_x$ and $s' = s_x[-x]$.
		
		By Lemma~\ref{lem:shredisleveled} $\Gamma \vdash  \lev{data}(S_D')$, $\Gamma \vdash  \lev{model}(S_M')$, and $\Gamma \vdash  \lev{genquant}(S_Q')$. Thus, we apply Lemma~\ref{lem:loopreorder} to get $(s, x=n_1;S_D'; \dots x=n_2; S_D'; x=n_1;S_M'; \dots x=n_2; S_M'; x=n_1;S_Q';\dots x=n_2;S_Q') \Downarrow s_x$ and $s' = s_x[-x]$.
		
		By applying \ref{Eval Seq}, we split this into:
		\begin{itemize}
			\item $(s, x=n_1;S_D'; \dots x=n_2; S_D') \Downarrow s_{xd}$
			\item $(s_{xd}, x=n_1;S_M'; \dots x=n_2; S_M') \Downarrow s_{xm}$
			\item $(s_{xm},x=n_1;S_Q';\dots x=n_2;S_Q') \Downarrow s_x$
		\end{itemize}
		
		For some $s_{xd}$ and $s_{xm}$. By taking $s_d = s_{xd}[-x]$ and $s_m = s_{xm}[-x]$, and applying \ref{lem:forloops}, we get:
		\begin{itemize}
			\item $(s, \kw{for}(x\;\kw{in}\;g_1:g_2)\;S_D') \Downarrow s_d$
			\item $(s_{xd}, \kw{for}(x\;\kw{in}\;g_1:g_2)\;S_M') \Downarrow s_m$
			\item $(s_{xm}, \kw{for}(x\;\kw{in}\;g_1:g_2)\;S_Q') \Downarrow s'$
		\end{itemize}
		
		As $x \notin \readset(\kw{for}(x\;\kw{in}\;g_1:g_2)\;S_M)$, $x \notin \readset(\kw{for}(x\;\kw{in}\;g_1:g_2)\;S_Q')$, $s_d = s_{xd}[-x]$ and $s_m = s_{xm}[-x]$, we also have:
		\begin{itemize}
			\item $(s, \kw{for}(x\;\kw{in}\;g_1:g_2)\;S_D') \Downarrow s_d$
			\item $(s_d, \kw{for}(x\;\kw{in}\;g_1:g_2)\;S_M') \Downarrow s_m$
			\item $(s_m, \kw{for}(x\;\kw{in}\;g_1:g_2)\;S_Q') \Downarrow s'$
		\end{itemize}
		
		Therefore, by \ref{Eval Seq}: $(s, (\kw{for}(x\;\kw{in}\;g_1:g_2)\;S_D');  
		(\kw{for}(x\;\kw{in}\;g_1:g_2)\;S_M'); 
		(\kw{for}(x\;\kw{in}\;g_1:g_2)\;S_Q')) \Downarrow s'$, so $(s, LHS) \Downarrow s' \implies (s, RHS) \Downarrow s'$.	
		
		\item Case $n_1 > n_2$. By \ref{Eval ForFalse} $s' = s$. Also, $(s, \kw{for}(x\;\kw{in}\;g_1:g_2)\;S_D') \Downarrow s$, $(s, \kw{for}(x\;\kw{in}\;g_1:g_2)\;S_M') \Downarrow s$ and $(s, \kw{for}(x\;\kw{in}\;g_1:g_2)\;S_Q') \Downarrow s$. So by \ref{Eval Seq}, $(s, RHS) \Downarrow s$, and thus $(s, LHS) \Downarrow s' \implies (s, RHS) \Downarrow s'$.
		\end{enumerate}
		
		By assuming instead $s$ and $s'$, such that $(s, RHS) \Downarrow s'$, and reversing this reasoning, we also obtain $(s, RHS) \Downarrow s' \implies (s, LHS) \Downarrow s'$.
		
		Therefore $(s, LHS) \Downarrow s' \iff (s, RHS) \Downarrow s'$, so $LHS \eveq RHS$.		
	\end{itemize}	
\end{proof}

\section{Further discussion on semantics} \label{ap:sem}
\subsection{Semantics of Generated Quantities} \label{ap:gq_semantics}

In addition to defining random variables to be sampled using HMC, Stan also supports sampling using pseudo-random number generators. For example, a standard normal parameter $x$ can be sampled in two ways:
\begin{enumerate}
	\item By declaring $x$ to be a parameter of the model, and giving it a prior:
	\begin{lstlisting}
	parameters { real x; }
	model { x ~ normal(0, 1); }
	\end{lstlisting}
	
	\item Treating $x$ as a generated quantity and using a pseudo-random number generator:
	\begin{lstlisting}
	generated quantities { x = normal_rng(0, 1); }
	\end{lstlisting}	
\end{enumerate}

Option (1) will sample $x$ using HMC, which is not needed in this case. Option (2) is a much more efficient solution. Thus, a Stan user can explicitly optimise their program by specifying how HMC should be composed with forward (ancestral) sampling.

In the density-based semantics presented in this paper, we do not formalise this usage of pseudo-random number generators. We treat the function \lstinline|normal_rng(mu, sigma)| as any other function, ignoring its random nature. We define the semantics of a Stan program to be the unnormalised log posterior over parameters only --- $\log p^*(\params \mid \data)$. However, this semantics can be extended to cover the generated quantities $\mathbf{g}$ as well: $\log p^*(\params, \mathbf{g} \mid \data)$.

The easiest way to do that is to simply treat \lstinline| normal_rng| as another derived form:
\begin{display}[.5]{}
	\clause{L =  \kw{d_rng}(E_1, \dots E_n) \deq L \sim \kw{d}(E_1, \dots E_n)}{random number generation} 	
\end{display}

However, this causes a discrepancy with the current information-flow type system. Perhaps a more suitable treatment is as an assignment to another reserved variable, which holds a different density to that of \lstinline|target|:  
\begin{display}[.5]{}
	\clause{L =  \kw{d_rng}(E_1, \dots E_n) \deq \kw{gen} = \kw{gen} + \kw{d_lpdf}(L, E_1, \dots E_n)}{random number generation} 	
\end{display}

The density-based semantics of a Stan program can then be defined as:
$$\log p^*(\params, \mathbf{g} \mid \data) = \log p^*(\params \mid \data) + \log p(\mathbf{g} \mid \params, \data)$$
where $\log p^*(\params \mid \data)$, as before, is given by the \lstinline|target| variable, and $\log p(\mathbf{g} \mid \params, \data)$ is given by \lstinline|gen|.
 
An interesting direction for future work is to extend the semantics and type system of SlicStan, so that modelling statements, such as \lstinline|x ~ normal(0, 1)| can be treated either as modifying the \lstinline|target| density, or random number generation, depending on the level of the variable \lstinline|x|. This can allow SlicStan programs to be optimised by automatically determining the most efficient way to compose HMC and forward sampling, based on the concrete model.

\subsection{Relation of Density-based Semantics to Sampling-based Semantics} \label{ap:sampling_semantics}

The density-based semantics of Stan and SlicStan given in this paper is inspired by the sampling-based semantics that \citet{HurNRS15} give to the imperative language \textsc{Prob}.
This section outlines the differences between the two semantics.

\subsubsection{Operational semantics relations }

The intuition behind the density-based semantics of Stan is that the relation $(s, S) \Downarrow s'$ specifies what the value of the (unnormalised) posterior is at a specific point in the parameter space. For $\params \subset s$ and $\data \subset s$, $p^*(\params \mid \data) = s'(\kw{target})$.

The intuition behind the operational semantics relation by \citet{HurNRS15}, $(s, S) \Downarrow^t (s', p)$, is that there is probability $p$ for the program $S$, started in initial state $s$, to sample a trace $t$ and terminate in state $s'$.

For programs with no observed values, and single probabilistic assignment (\lstinline|x ~ d1(...); x ~ d2(...)| is not allowed), we guess that if $(s, S) \Downarrow^t (s', p)$, then $((s\cup t), S) \Downarrow s'[\kw{target} \mapsto p]$, and $\params = t$.

\subsubsection{Difference in the meaning of $\sim$ }

In Stan, a model statement such as \lstinline|x ~ normal(0,1)|, does not denote sampling, but a change to the target density. The value of $x$ remains the same; we only compute the standard normal density at the current value of $x$. This is also similar to the score operator of \citet{Staton2016}.
\begin{display}{Operational Density-based Semantics of Model Statements (Derived Rule)}
	\quad
	\staterule{Eval Model}
	{ (s, E) \Downarrow V \quad (s, E_i) \Downarrow V_i  \quad \forall i \in 1..n \quad V' = s(\kw{target}) + \kw{d_lpdf}(V, V_1, \dots, V_n)}
	{ (s, E \sim \kw{d}(E_1,\dots,E_n)) \Downarrow s[\kw{target} \mapsto V']}
\end{display}

In the sampling-based semantics of \citet{HurNRS15}, on the other hand, \lstinline|x ~ normal(0,1)| is understood as ``we sample a standard normal variable and assign its value to $x$.'' 
\begin{display}{Operational Sampling-based Semantics of Model Statements \cite{HurNRS15}}
	\quad
	\staterule{Sampling Model}
	{ v \in \text{Val} \quad p = \kw{Dist}(s(\overline{\theta}))(x)}
	{ (s, x \sim \kw{Dist}(\overline{\theta})) \Downarrow^{x\mapsto [v]} (s[x \mapsto v], p)}
\end{display}

In this sampling-based semantics, variables can be sampled more than once, and we keep track of the entire trace of samples. In Stan's density-based semantics, modelling a variable more than once would mean modifying the target density more than once. For example, consider the program:
\begin{lstlisting}
	x ~ normal(-5, 1);
	x ~ normal(5, 1);
\end{lstlisting}

The difference between the density-based and sampling-based semantics is then as follows:
\begin{itemize}
	\item \textbf{Density-based:} the program denotes the unnormalised density ${p^*(x) = \normal(x \mid -5, 1)\normal(x \mid 5, 1)}$.
	\item \textbf{Sampling-based:} the program denotes the unnormalised density $p^*(x^{(1)}, x^{(2)}) = \normal(x^{(1)} \mid -5, 1)\normal(x^{(2)} \mid 5, 1)$, with $x^{(1)}$ and $x^{(2)}$ being variables denoting the value of $x$ we sampled the first and second time in the program respectively. 
\end{itemize}

\subsubsection{Difference in the meaning of \kw{observe}}

As mentioned previously, we presume that for \textit{single probabilistic assignment} programs that contain only \textit{unobserved} parameters, out density-based semantics is equivalent to the sampling-based semantics of \citet{HurNRS15}. However the two semantics treat observations differently.

Consider the following SlicStan program, where $y$ is an observed variable:
\begin{lstlisting}
	real mu ~ normal(0, 1);
	data real y ~ normal(mu, 1);
\end{lstlisting}

The density-based semantics of this program is a function of $\mu$: $$p^*(\mu \mid y = v) = \normal(\mu \mid 0, 1)\normal(v \mid \mu, 1)$$ where $v$ is some concrete value for the data $y$, which is supplied externally. 

The corresponding \textsc{Prob} program is:
\begin{lstlisting}
	double mu ~ normal(0, 1);
	double y ~ normal(mu, 1);
	observe(y = v);
\end{lstlisting}

The data $v$ is encoded in the program, and the sampling-based semantics is a function of $\mu$ and $y$: $$p^*(\mu, y) = \begin{cases}
\normal(\mu \mid 0, 1)\normal(y \mid \mu, 1), & \text{if}\ y=v \\
0, & \text{otherwise}
\end{cases}$$

Intuitively, the operational sampling-based semantics defines how to sample the variables $\mu$ and $y$, and then reject the run if $y \neq v$. This introduces a zero-probability conditioning problem when working with continuous variables, and fails in practise.  

The operational density-based semantics of this paper puts SlicStan's observations closer to the idea of soft constraints. Using the \textit{score} operator of \citet{Staton2016}, we can write:
\begin{lstlisting}
	let x ~ normal(0,1) in
	score(density_normal(y,(x, 1));
\end{lstlisting}

Once again, $y$ has some concrete value $v$, and the score operator calculates the density of $y$ at $v$. \citet{Staton2016} make the score operator part of their metalanguage, while we build it into the density-based semantics itself.

\section{Elaborating and shredding if or for statements} \label{ap:shred}
This section demonstrates with an example the elaboration and  shredding of \lstinline|if| and \lstinline|for| statements. 

Consider the following SlicStan program:
\begin{lstlisting}
	data real x;
	real +++data+++ d;
	real +++model+++ m;
		
	if(x > 0){
		d = 2 * x;
		m ~ normal(d, 1);
	}	   
\end{lstlisting}

The body of the \lstinline|if| statement contains an assignment to a \lev{data} variable (\lstinline|d = 2 * x|), and a model statement (\lstinline|m ~ normal(d, 1)|). The former belongs to the \lstinline|transformed data| block of a Stan program, while the latter belongs to the \lstinline|model| block. We need to shred the entire body of the \lstinline|if| statement, into several \lstinline|if| statements, each of which contains statements of a single level only. 

Firstly, the elaboration step ensures that the guard of each \lstinline|if| statement (and the bounds of each \lstinline|for| loop) is a fresh boolean variable, \lstinline|g|, which is not modified anywhere in the program:
\begin{lstlisting}
	g = (x > 0)
	if(g){
		d = 2 * x;
		m ~ normal(d, 1);
	}	   
\end{lstlisting}

Then the shredding step can copy the \lstinline|if| statement at each level, without changing the meaning of the original program:
\begin{lstlisting}
	$S_D =~\,$g = (x > 0)
		    if(g){ d = 2 * x; }	 
	
	$S_M =~$if(g){ m ~ normal(d, 1); }	   
	
	$S_Q =~$skip
\end{lstlisting}

Finally, this translates to the Stan program:
\begin{lstlisting}
	data { 
		real x; 
	}
	transformed data { 
		real d;
		bool g = (x > 0); 
		if(g){ d = 2 * x; }
	}
	parameters { 
		real m; 
	}
	model {
		if(g){ m ~ normal(d, 1); }
	}
\end{lstlisting}


\section{Non-centred Reparameterisation} \label{ap:ncp}

\emph{Reparameterising} a model means expressing it in terms of different parameters, so that the original parameters can be recovered from the new ones. Reparametrisation plays a key role in optimising some models for MCMC inference, as it could transform a posterior distribution that is difficult to sample from in practice, into a flatter, easier to sample from distribution. 

To show the importance of one such reparameterisation technique, the non-centred reparameterisation, consider the pathological \emph{Neal's Funnel} example, which was chosen by \cite{Funnel} to demonstrate the difficulties Metropolis--Hastings runs into when sampling from a distribution with strong non-linear dependencies. The model defines a density over variables $x$ and $y$,\footnotemark such that: 
\footnotetext{For simplicity, we consider a 2-dimensional version of the funnel, as opposed to the original 10-dimensional version.}
$$y \sim \normal(0,3) \qquad\qquad x \sim \normal(0, e^{\frac{y}{2}})$$
The density has the form of a funnel (thus the name \emph{``Neal's Funnel''}), with a very sharp neck, as shown in \autoref{fig:neals}. We can easily implement the model in a straightforward way (centred parameterisation), as on the left below. 
\vspace{-8pt}\begin{multicols}{2} 
\textbf{Centred parameterisation in Stan}
\vspace{1cm}
\begin{lstlisting}[basicstyle=\small]
	parameters {
		real y;
		real x;
	}
	model {	
		y ~ normal(0, 3);		
		x ~ normal(0, exp(y/2));
	}
\end{lstlisting}
\vspace{4cm}
\textbf{Non-centred parameterisation in Stan}
\begin{lstlisting}[basicstyle=\small]
	parameters {
		real y_std;
		real x_std;
	}
	transformed parameters {
		real y = 3.0 * y_std;
		real x = exp(y/2) * x_std;
	}
	model {
		y_std ~ normal(0, 1); // \!implies y $\sim \normal(0, 3)$
		x_std ~ normal(0, 1); // \!implies x $\sim \normal(0, e^{(y/2)})$
	}
\end{lstlisting}
\end{multicols}\vspace{-8pt}
However, in that case, Stan's sampler has trouble obtaining samples from the neck of the funnel, because there exists a strong non-linear dependency between $x$ and $y$, and the posterior geometry is difficult for the sampler to explore well (see \autoref{fig:nealsineff}).
	
\begin{figure}[!b]
\centering
\begin{subfigure}[b]{0.49\textwidth}
\includegraphics[width=\textwidth]{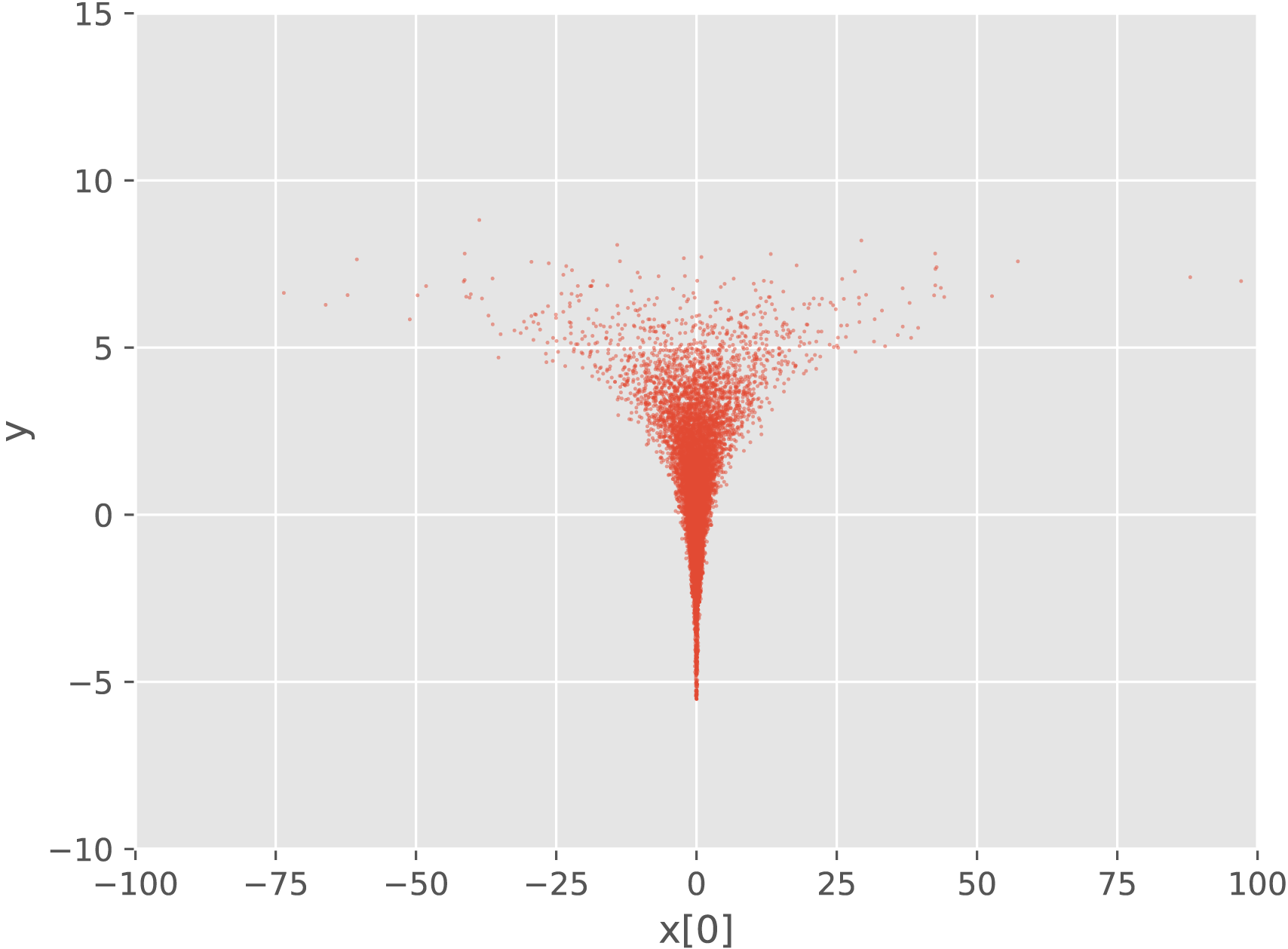}
\caption{$24,000$ samples obtained using Stan (default settings) for the \emph{non-efficient} form of Neal's Funnel.}
\label{fig:nealsineff}
\end{subfigure}
\begin{subfigure}[b]{0.49\textwidth}
\includegraphics[width=\textwidth]{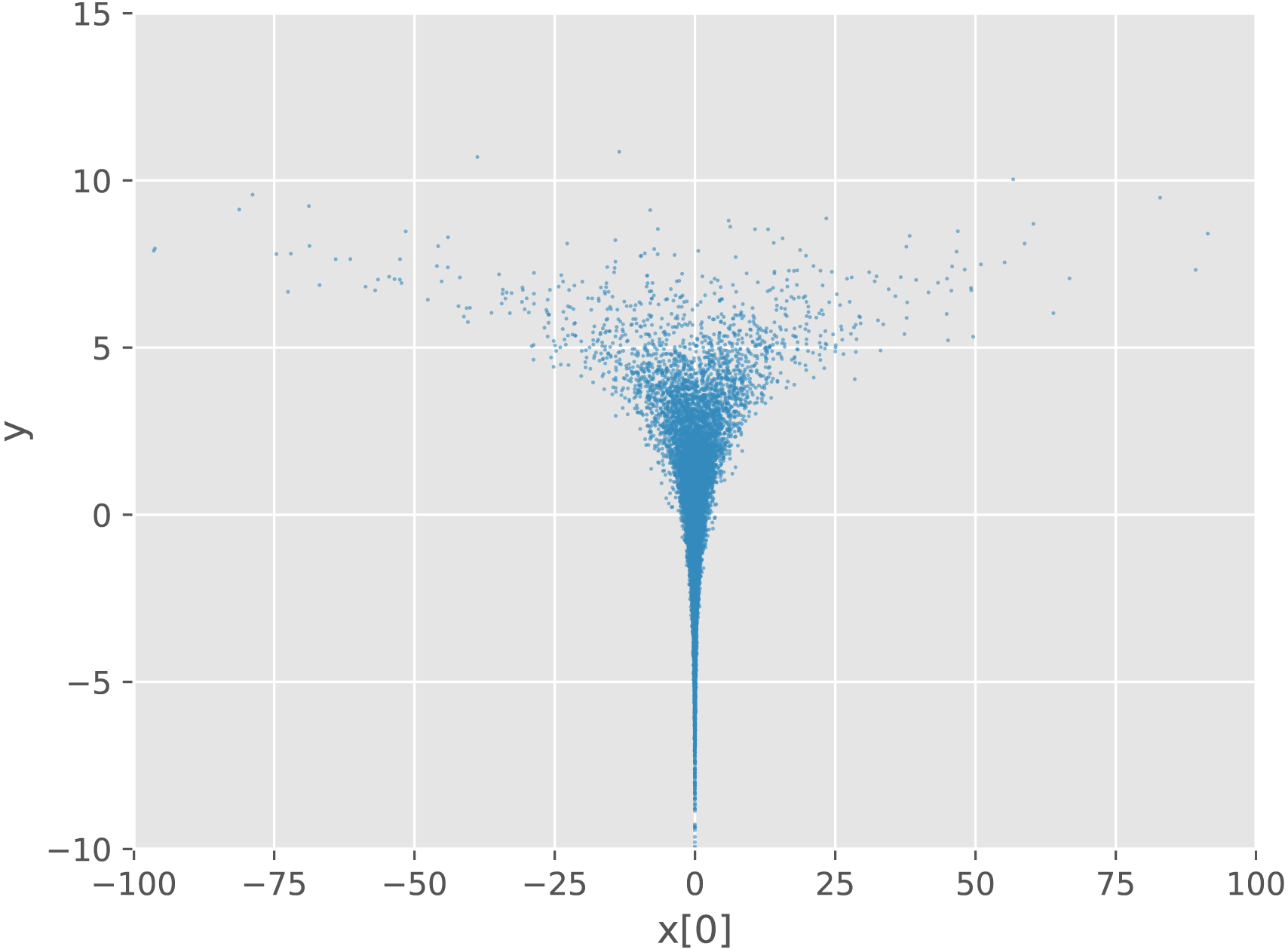}
\caption{$24,000$ samples obtained using Stan (default settings) for the \emph{efficient} form of Neal's Funnel.}
\label{fig:nealseff}
\end{subfigure}	
\vspace{8pt} \\
\begin{subfigure}[b]{0.7\textwidth}
\includegraphics[width=\textwidth]{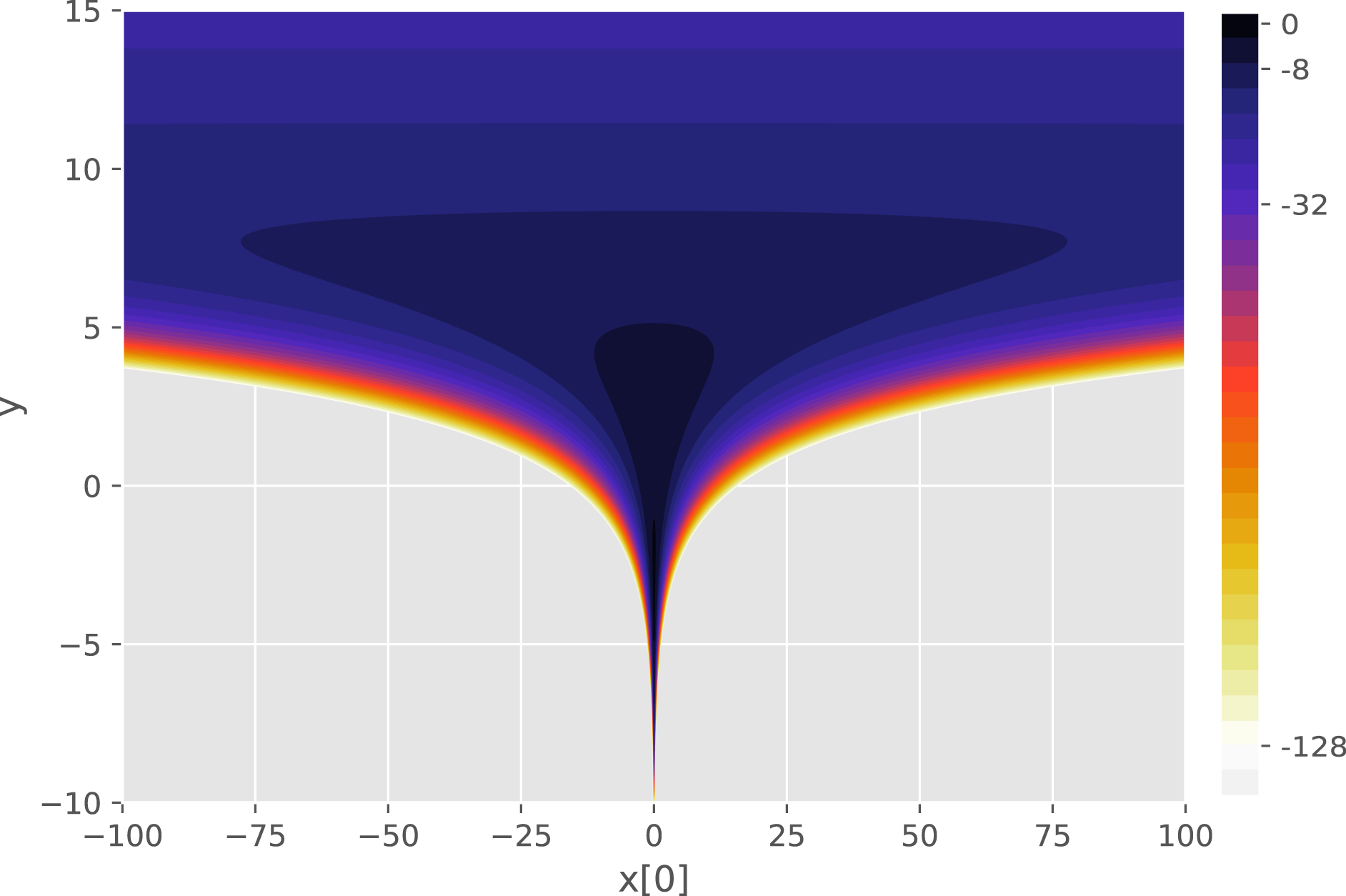}
\caption{The actual log density of Neal's Funnel. Dark regions are of high density (log density greater than $-8$).}
\label{fig:neals} 
\end{subfigure}
\caption{}	
\end{figure}

Alternatively, we can reparameterise the model, so that the model parameters are changed from $x$ and $y$ to the standard normal parameters $x^{(std)}$ and $y^{(std)}$, and the original parameters are recovered using shifting and scaling:
$$y^{(std)} \sim \normal(0, 1) \qquad\qquad x^{(std)} \sim \normal(0, 1) \qquad\qquad y = y^{(std)} \times 3 \qquad\qquad x =  x^{(std)} \times e^{\frac{y}{2}}$$
This reparameterisation, called non-centred parametrisation, is a special case of a more general transform introduced by \citet{Reparam}.

As shown on the right above, the non-centred model is longer and less readable. However, it performs much better than the centred one. \autoref{fig:nealseff} shows that by reparameterising the model, we are able to explore the tails of density better than if we use the straightforward implementation.  

Neal's Funnel is a typical example of the dependencies that priors in hierarchical models could have. The example demonstrates that in some cases, especially when there is little data available, using non-centred parameterisation could be vital to the performance of the inference algorithm. The centred to non-centred parameterisation transformation is therefore common to Stan models, and is extensively described by the \citet{StanManual} as a useful technique.

\section{Examples} \label{ap:examples}

\subsection{Neal's Funnel}
We continue with the Neal's funnel example from Appendix~\ref{ap:ncp}, to demonstrate the usage of user-defined functions in SlicStan. 

Reparameterising a model, which involves a (centred) Gaussian variable $x \sim \normal(\mu, \sigma)$ involves introducing a new parameter $x^{(std)}$. Therefore, a non-centred reparameterisation function cannot be defined in Stan, as Stan user-defined functions cannot declare new parameters. In SlicStan, on the other hand, reparameterising the model to a non-centred parameterisation can be implemented by simply calling the function \lstinline|my_normal|.

Below is the Neal's funnel in SlicStan (left) and Stan (right).

\begin{minipage}[t][6cm][t]{\linewidth} 
	\vspace{1pt}
	\begin{multicols}{2}
		\centering
		\textbf{``Neal's Funnel'' in SlicStan}
		\vspace{1cm}
		\begin{lstlisting}[numbers=left,numbersep=\numbdist,numberstyle=\tiny\color{\numbcolor},basicstyle=\small]
	real my_normal(real m, real s) {
		real std ~ normal(0, 1); 
		return s * std + m;
	}
	real y = my_normal(0, 3); 
	real x = my_normal(0, exp(y/2));
	\end{lstlisting}
		\vspace{4cm}
		\textbf{``Neal's Funnel'' in Stan}
		\begin{lstlisting}[numbers=left,numbersep=\numbdist,numberstyle=\tiny\color{\numbcolor},basicstyle=\small]
	parameters {
		real y_std;
		real x_std;
	}
	transformed parameters {
		real y = 3.0 * y_std;
		real x = exp(y/2) * x_std;
	}
	model {
		y_std ~ normal(0, 1); 
		x_std ~ normal(0, 1); 
	}
		\end{lstlisting}
	\end{multicols} 
\end{minipage} 

The non-centred SlicStan program (left) is only longer than its centred version, due to the presence of the definition of the function. In comparison, Stan requires defining the new parameters \lstinline|x_std| and \lstinline|y_std| (lines 2,3), moving the declarations of \lstinline|x| and \lstinline|y| to the \lstinline|transformed parameters| block (lines 6,7), defining them in terms of the parameters (lines 8,9), and changing the definition of the joint density accordingly (lines 12,13). 

We also present the ``translated'' Neal's Funnel model, as it would be outputted by an implemented compiler. We notice a major difference between the two Stan programs --- in one case the variables of interest $x$ and $y$ are defined in the \lstinline|transformed parameters| block, while in the other they are defined in the \lstinline|generated quantities| block. In an intuitive, centred parameterisation of this model, $x$ and $y$ are in fact the parameters. Therefore, it is much more natural to think of those variables as transformed parameters when using a non-centred parameterisation. However, as shown in \autoref{tab:blocks}, variables declared in the \lstinline|transformed parameters| block are re-evaluated at every leapfrog, while those declared in the \lstinline|generated quantities| block are re-evaluated at every sample. This means that even though it is more intuitive to think of $x$ and $y$ as transformed parameters (original Stan program), declaring them as generated quantities where possible results in a better optimised inference algorithm in the general case. 

There are some advantages of the original Stan code that the translated from SlicStan Stan code does not have. 
The original version is considerably shorter than the translated one. This is due to the lack of the additional variables \lstinline|m|, \lstinline|mp|, \lstinline|s|, \lstinline|sp|, \lstinline|ret|, and \lstinline|retp|, which are a consequence of statically unrolling the function calls in the elaboration step. When using SlicStan, the produced Stan program acts as an intermediate representation of the probabilistic program, meaning that the reduced readability of the translation is not necessarily problematic. However, the presence of the additional variables can also, in some cases, lead to slower inference. This problem can be tackled by introducing standard optimising compilers techniques, such as variable and common subexpression elimination.

\begin{minipage}[t][14cm][t]{\linewidth}
\vspace{12pt}
\textbf{``Neal's Funnel'' translated to Stan} 
\vspace{6pt}
\begin{lstlisting}[numbers=left,numbersep=\numbdist,numberstyle=\tiny\color{\numbcolor}]
	transformed data {
		real m;
		real mp;
		m = 0;
		mp = 0;
	}	
	parameters {
		real x_std;
		real x_stdp;
	} 
	model{
		x_std ~ normal(0, 1);
		xr_stdp ~ normal(0, 1);
	}	
	generated quantities {
		real s;
		real sp;
		real ret;
		real retp;
		real x;
		real y;
		s = 3;
		ret = s * x_std + m;
		y = ret;
		sp = exp(y * 0.5);
		retp = (sp * x_stdp + mp);
		x = retp;
	}
\end{lstlisting}
\end{minipage}

Moreover, we notice the names of the new parameters in the translated code: \lstinline|x_std| and \lstinline|x_stdp|. The names are important, as they are part of the output of the sampling algorithm. Unlike Stan, with the user-defined-function version of Neal's funnel, in SlicStan the programmer does not have control on the names of the newly introduced parameters. One can argue that the user was not interested in those parameters in the first place (as they are solely used to reparameterise the model for more efficient inference), so it does not matter that their names are not descriptive. However, if the user wants to debug their model, the output from the original Stan model would be more useful than that of the translated one.

\newpage
\subsection{Cockroaches} \label{ssec:roaches}

The ``Cockroaches'' example is described by \citet[p.~161]{GelmanHill}, and it concerns measuring the effects of integrated pest management on reducing cockroach numbers in apartment blocks. They use \emph{Poisson regression} to model the number of caught cockroaches $y_i$ in a single apartment $i$, with exposure $u_i$ (the number of days that the apartment had cockroach traps in it), and regression predictors:
\begin{itemize}
	\item the pre-treatment cockroach level $r_i$;
	\item whether the apartment is in a senior building (restricted to the elderly), $s_i$; and
	\item the treatment indicator $t_i$.
\end{itemize}

In other words, with $\beta_0, \beta_1, \beta_2, \beta_3$ being the regression parameters, we have:
$$y_i \sim Poisson(u_i \exp(\beta_0 + \beta_1r_i + \beta_2s_i + \beta_3t_i))$$

After specifying their model this way, Gelman and Hill simulate a replicated dataset $\mathbf{y}_{rep}$, and compare it to the actual data $\mathbf{y}$ to find that the variance of the simulated dataset is much lower than that of the real dataset. In statistics, this is called \emph{overdispersion}, and is often encountered when fitting models based on a single parameter distributions,\footnotemark~such as the Poisson distribution. A better model for this data would be one that includes an \emph{overdispersion parameter} $\boldsymbol{\lambda}$ that can account for the greater variance in the data. 

\footnotetext{In a distribution specified by a single parameter $\alpha$, the mean and variance both depend on $\alpha$, and are thus not independent.}

The next page shows the ``Cockroach'' example before (ignoring the red lines) and after (assuming the red lines) adding the overdispersion parameter, in both SlicStan (left) and Stan (right). Similarly to before, SlicStan gives us more flexibility as to where the statements accounting for overdispersion can be added.
Stan, on the other hand, introduces an entirely new to this program block --- \lstinline|transformed parameters|.

\newpage
\begin{minipage}[t][14cm][t]{\linewidth}
	\vspace{1pt}
\begin{multicols}{2} \label{roaches}
	\centering
	\textbf{``Cockroaches'' in SlicStan}
	\vspace{1cm}
	\begin{lstlisting}[numbers=left,numbersep=\numbdist,numberstyle=\tiny\color{\numbcolor}]
	data int N; 
	data real[N] exposure2; 
	data real[N] roach1; 
	data real[N] senior; 
	data real[N] treatment; 
	
	real[N] log_expo$\,$=$\,$log(exposure2);
	
	real[4] beta;
	
	%%real tau ~ gamma(0.001, 0.001);%%
	%%real sigma = 1.0 / sqrt(tau);%%
	%%real[N] lambda$\,$~$\!\!$normal(0, sigma);%%
	
	data int[N] y
		~ poisson_log$\footnotemark$(log_expo + beta[1] 
									$\,$+ beta[2] * roach1 
									$\,$+ beta[3] * treatment 
									$\,$+ beta[4] * senior
									%%$\,$+ lambda%%);
	\end{lstlisting}
	\vspace{4cm}
	\textbf{``Cockroaches'' in Stan}
	\begin{lstlisting}[numbers=left,numbersep=\numbdist,numberstyle=\tiny\color{\numbcolor}]
	data {
		int N; 
		real[N] exposure2;
		real[N] roach1; 
		real[N] senior; 
		real[N] treatment; 
		int y[N];
	}
	transformed data {
		real[N] log_expo$\,$=$\,$log(exposure2);
	}
	parameters {
		real[4] beta;
		%%real[N] lambda;%%
		%%real tau;%%
	} 
	%%transformed parameters {%%
		%%real sigma = 1.0 / sqrt(tau);%%
	%%}%%
	model {
		%%tau ~ gamma(0.001, 0.001);%%
		%%lambda ~ normal(0, sigma);%%
		y ~ poisson_log$^{\ref{note2}}$(log_expo + beta[1] 
										$\,$+ beta[2] * roach1 
										$\,$+ beta[3] * treatment 
										$\,$+ beta[4] * senior
										%%$\,$+ lambda%%);
	}
	\end{lstlisting}
\end{multicols} \vspace{-18pt}
\begin{center}Example adapted from \url{https://github.com/stan-dev/example-models/}.\end{center}	
\end{minipage}

\footnotetext{\label{note2}Stan's \lstinline|poisson_log| is a numerically stable way to model a Poisson variable where the event rate is $e^{\alpha}$ for some $\alpha$.}

\newpage
\subsection{Seeds}
Next, we take the ``Seeds'' example introduced by \citet[p.~300]{BUGSBook} in ``\emph{The BUGS Book}''. In this example, we have $I$ plates, with plate $i$ having a total of $N_i$ seeds on it, $n_i$ of which have germinated. Moreover, each plate $i$ has one of 2 types of seeds $x_1^{(i)}$, and one of 2 types of root extract $x_2^{(i)}$. We are interested in modelling the number of germinated seeds based on the type of seed and root extract, which we do in two steps. Firstly, we model the number of germinated seeds with a Binomial distribution, whose success probability is the probability of a single seed germinating:
$$n_i \sim Binomial(N, p_i)$$

We model the probability of a single seed on plate $i$ germinating as the output of a logistic regression with input variables the type of seed and root extract:
$$p_i = \sigma(\alpha_0 + \alpha_1x_1^{(i)} + \alpha_2x_2^{(i)} + \alpha_{12}x_1^{(i)}x_2^{(i)} + \beta^{(i)})$$

In the above, $\alpha_0, \alpha_1, \alpha_2, \alpha_{12}$ and $\beta^{(i)}$ are parameters of the model, with $\beta^{(i)}$ allowing for \emph{over-dispersion} (see \autoref{ssec:roaches}).

The next page shows the ``Seeds'' model written in SlicStan (left) and in Stan (right). The Stan code was adapted from the example models listed on Stan's GitHub page. 

As before, we see that SlicStan's code is shorter than that of Stan. It also allows for more flexibility in the order of declarations and definitions, making it possible to keep related statements together (e.g. lines 14 and 15 of the example written in SlicStan). Once again, SlicStan provides more abstraction, as the programmer does not have to specify how each variable of the model should be treated by the underlying inference algorithm. Instead it automatically determines this when it translates the program to Stan. 

\newpage
\begin{minipage}[t][14cm][t]{\linewidth}
\begin{multicols}{2} \label{seeds}
	\centering
	\textbf{``Seeds'' in SlicStan}
	\vspace{3cm}
	\begin{lstlisting}[numbers=left,numbersep=\numbdist,numberstyle=\tiny\color{\numbcolor}]
	data int I;
	data int[I] n; 
	data int[I] N; 
	data real[I] x1; 
	data real[I] x2;
	
	real[I] x1x2 = x1 .* x2; 
	
	real alpha0 ~ normal(0.0,1000);
	real alpha1 ~ normal(0.0,1000);
	real alpha2 ~ normal(0.0,1000);
	real alpha12 ~ normal(0.0,1000);
	
	real tau ~ gamma(0.001,0.001);
	real sigma = 1.0 / sqrt(tau);
	
	real[I] b ~ normal(0.0, sigma);
	n ~ binomial_logit$\footnotemark$(N, alpha0 
			$\qquad\qquad$ + alpha1 * x1 
			$\qquad\qquad$ + alpha2 * x2 
			$\qquad\qquad$ + alpha12 * x1x2 
			$\qquad\qquad$ + b);
	\end{lstlisting}
	
	\vspace{5cm}
	\textbf{``Seeds'' in Stan}
	\begin{lstlisting}[numbers=left,numbersep=\numbdist,numberstyle=\tiny\color{\numbcolor}]
	data {
		int I;
		int n[I];
		int N[I];
		real[I] x1; 
		real[I] x2; 
	}
	
	transformed data {
		real[I] x1x2 = x1 .* x2;
	} 
	parameters {
		real alpha0;
		real alpha1;
		real alpha12;
		real alpha2;
		real tau;
		real[I] b;
	}
	transformed parameters {
		real sigma = 1.0 / sqrt(tau);
	}
	model {
		alpha0 ~ normal(0.0,1000);
		alpha1 ~ normal(0.0,1000);
		alpha2 ~ normal(0.0,1000);
		alpha12 ~ normal(0.0,1000);
		tau ~ gamma(0.001,0.001);
	
		b ~ normal(0.0, sigma);
		n ~ binomial_logit$^{\ref{note1}}$(N, alpha0 
				$\qquad\qquad$ + alpha1 * x1 
				$\qquad\qquad$ + alpha2 * x2 
				$\qquad\qquad$ + alpha12 * x1x2 
				$\qquad\qquad$ + b);
	}
	\end{lstlisting}
\end{multicols}
\begin{center}Example adapted from \url{https://github.com/stan-dev/example-models/}.\end{center}	
\end{minipage}

\footnotetext{\label{note1}Stan's \lstinline|binomial_logit| distribution is a numerically stable way to use a logistic sigmoid in combination with a Binomial distribution.} } { }

\end{document}